\newtheorem{theorem}{Theorem}
\theoremstyle{plain}
\newtheorem{definition}[theorem]{Definition}
\newtheorem{lemma}[theorem]{Lemma}
\newtheorem{proposition}[theorem]{Proposition}
\numberwithin{equation}{section}
\numberwithin{theorem}{section}
\newcommand{\R}{\ensuremath{\mathbb{R}}}
\newcommand{\N}{\ensuremath{\mathbb{N}}}
\newcommand{\E}{\ensuremath{\mathbb{E}}}
\newcommand{\<}{\langle}
\renewcommand{\>}{\rangle}
\def\e{{\mathrm{e}}}
\title{Derivatives pricing in energy markets: an infinite dimensional approach}
\author[Benth]{Fred Espen Benth}
\address[Fred Espen Benth]{\\
Department of Mathematics \\
University of Oslo\\
P.O. Box 1053, Blindern\\
N--0316 Oslo, Norway \\
and \\
Centre of Advanced Study \\
Drammensveien 78 \\
N-0271 Oslo, Norway}
\email[]{fredb\@@math.uio.no}
\urladdr{http://folk.uio.no/fredb/}
\author[Kr\"uhner]{Paul Kr\"uhner}
\address[Paul Kr\"uhner]{\\
Department of Mathematics \\
University of Oslo\\
P.O. Box 1053, Blindern\\
N--0316 Oslo, Norway}
\email[]{paulkru\@@math.uio.no}
\date{\today}
\thanks{Financial support from "Managing Weather Risk in Energy Markets (MAWREM)", funded by the Norwegian 
Research Council, is gratefully acknowledged.}
\begin{document}
\maketitle
\begin{abstract}
Based on forward curves modelled as Hilbert-space valued processes, we analyse the pricing of various options
relevant in energy markets. In particular, we connect empirical evidence about energy forward prices known from the literature
to propose stochastic models. Forward prices can be represented as linear functions on a Hilbert space,
and options can thus be viewed as derivatives on the whole curve. The value of these options are computed under
various specifications, in addition to their deltas. In a second part, cross-commodity models are investigated, leading to
a study of square integrable random variables with values in a "two-dimensional" Hilbert space. We analyse the covariance 
operator and representations of such variables, as well as presenting applications to pricing of spread and energy quanto options.  
\end{abstract} 

\section{Introduction}

In energy markets like NYMEX, CME, EEX and NordPool there is a large trade in forwards and 
futures contracts. Forwards and futures on power and gas are delivering the underlying commodity over a period of time 
rather than at a fixed delivery time, as is the case for oil, say. Related markets, like shipping and weather, also trade in
futures and forwards settled on an index measured over a time period. We refer to Burger,
Graeber and Schindlmayr \cite{BGS}, Eydeland and Wolynieck~\cite{EW} and Geman~\cite{Geman} 
for a presentation and discussion of different energy markets and the traded derivatives contracts. For a more technical
analysis on modelling aspects of energy prices, we refer to Benth, \v{S}altyt\.e Benth and Koekebakker~\cite{BSBK-energy}.

Typically, many of the energy markets trade in European call and put options written on the forward and futures contracts,
including for example the power exchanges EEX in Germany and NordPool in the Nordic area. At NYMEX, one finds options
on the spread between futures on different refined oil blends. Other cross-commodity derivatives include options on the spread
between power and fuels (dark and spark spreads, say, see Eydeland and Wolyniec~\cite{EW}), or quanto options which are settled on the product between a power price and a weather index (see Benth, Lange and Myklebust~\cite{BLM}).  

In this paper we analyse the pricing of options in the framework of forward curves modelled as Hilbert-space valued stochastic processes. Empirical studies reveal that energy forwards show a high degree of idiosyncratic risk across
maturities. For example, a principal component analysis of the NordPool power forward and futures market by Benth, \v{S}altyt\.e Benth and Koekebakker~\cite{BSBK-energy} reveal that more than ten factors are needed to explain 95\% of the
volatility (this confirms earlier studies of the same market by Frestad~\cite{Frestad} and Koekebakker and Ollmar~\cite{KO}). 
Using methods from spatial statistics (see Frestad~\cite{Frestad}, Frestad, Benth and Koekebakker~\cite{FBK},
and Andresen, Koekebakker and Westgaard~\cite{AKW}), studies of NordPool forward and
futures prices show a clear correlation structure across times to maturity. These empirical studies point towards
the need for modelling the time dynamics of the forward curve by means of a Hilbert-space valued process. 
Moreover, the above-mentioned studies also highlight the leptokurtic behaviour of price returns, motivating the introduction
of infinite dimensional L\'evy processes as the noise in the forward dynamics. 

This paper develops the analysis of forward curves by Benth and Kr\"uhner~\cite{BK-HJM} towards a theory for pricing 
options in energy markets. In particular, the present paper contributes in two different, but related, directions. Firstly, we
provide a rather detailed study of the pricing of typical European options traded in various energy markets. Secondly,
we lay the foundation for a modelling of cross-commodity forward and futures markets in an infinite dimensional 
framework. 

A European option of a forward contract can, in our context, be viewed as an option on the forward curve.
The payoff of the option will be represented as a linear functional acting on the curve, followed by a non-linear payoff 
function.  We provide a detailed analysis on how to view forward and futures contracts as linear functionals
on the forward curve, set in a Hilbert space of absolutely continuous function on $\R_+$ . We present the explicit 
functionals based on various typical contracts traded in power and weather (temperature) markets. Using a
representation theorem from Benth and Kr\"uhner~\cite{BK-HJM}, one can derive a real-valued stochastic process for the forward contract underlying the option, which in some special cases can be further computed to provided simple
expressions for the option price. For example, for arithmetic (linear) forward curve models we can find 
expression of the option price, either analytical in the Gaussian case, or computable via fast Fourier transform in
the more general L\'evy case. The prices will depend on the realized volatility of the infinite dimensional 
forward curve dynamics, which involves some linear functionals and their duals. In particular, we need to have
available the dual of the shift operator and some integral operators, which we derive explicitly in our chosen
Hilbert space. 

Also, we derive the delta of these options. The delta of the option will be defined as the 
derivative of the price with respect to the initial forward curve. Interestingly, the delta will provide information on how sensitive 
the price is towards inaccuracies on the initial forward curve. As we need to construct this curve from discretely observed data,
the delta provides valuable information on the robustness of the option price towards miss-specification in the forward curve. We also show that the option price is Lipschitz continuous as a function of the initial forward 
curve as long as the payoff function is Lipschitz. In this part of our paper, we also discuss options written on the spread between two forward contracts on the same commodity but with different delivery periods. This spread can effectively be represented as 
the difference of two linear functionals on the forward curve extracting two
different pieces of this curve. With such options, the covariance structure along the forward curve becomes
an important ingredient in the pricing.  

In the second part of the paper we turn the focus to modelling and pricing in cross-commodity energy markets. 
Typically, one is interested in modelling the joint forward dynamics in two energy markets, for example in two
connected power markets or the markets for gas and power. Alternatively, one may be interested in modelling the 
joint forward dynamics between temperature contracts and power. 
We express a bivariate forward price dynamics through a stochastic process with values in a "two-dimensional" Hilbert space. More specifically, we assume that the process is the mild solution of two Musiela stochastic partial differential equations, each
taking values in a Hilbert space of absolutely continuous functions on $\R_+$, where the dynamics is driven by two
dependent Hilbert-space valued Wiener processes. Furthermore, we allow for functional dependency in the volatility specifications
of the two stochastic partial differential equations. The crucial point in our analysis is the covariance operator for the
"bivariate" Hilbert-space valued Wiener process. We show that the covariance operator can be expressed as a $2\times 2$ matrix
of operators, where we find the respective marginal covariance operators on the diagonal and an operator describing the 
covariance between the two Wiener processes on the off-diagonal, analogous to the situation of a bivariate Gaussian 
random variable on $\R^2$.  We derive a decomposition of two square-integrable Hilbert space valued random variables in
terms of a common factor and an independent random variable. This "linear regression" decomposition is expressed in terms of
an operator which resembles the correlation. 

Our theoretical considerations are applied to the pricing of spread options (see Carmona and Durrleman~\cite{CD} for an 
extensive account on the zoology of spread options in energy and commodity markets). %Dark and spark spread options (spread between coal and power, and gas and powrer, resp.) occur in
%connection with real option valuation of power plants. 
Another interesting class of derivatives is the 
so-called energy quanto options, which offers the holder a payoff depending on price and volume. The volume component
is measured in terms of some appropriate temperature index, which means that the energy quanto option can be viewed
as an option written on the forward prices of energy and temperatures. We remark that there is a weather market at the
Chicago Mercantile Exchange trading in temperature futures. 
 
Our infinite-dimensional approach to forward price modelling in energy markets builds on the extensive theory in 
fixed-income markets. We refer to Filipovic~\cite{filipovic} and Carmona and Tehranchi~\cite{CT} for an analysis of
forward rates modelled as infinite-dimensional stochastic processes. In Benth and Kr\"uhner~\cite{BK-HJM} a particular 
Hilbert space proposed by Filipovic~\cite{filipovic} to realize forward curves plays a central role. 
Audet {\it et al.}~\cite{audet} is, 
to the best of our knowledge, the first to model power forward prices using infinite dimensional processes. 
Exponential and arithmetic energy forward curve models are analysed in Barth and Benth~\cite{BB} with
an emphasis on introducing numerical schemes to simulate the dynamics. Another path is taken in
Benth and Lempa~\cite{BL}, where optimal portfolio selection in commodity forward markets is studied. 
Barndorff-Nielsen, Benth and Veraart~\cite{BNBV-forward} propose to use ambit fields, a class of spatio-temporal random fields, 
as an alternative modelling approach to the dynamic specification of forward curves used in the present paper.
In a recent paper, Barndorff-Nielsen, Benth and 
Veraart~\cite{BNBV-cross} has extended the ambit field idea to cross-commodity market modelling and the pricing of spread options. We 
remark that there is a close relationship between ambit fields and stochastic partial differential equations 
(see Barndorff-Nielsen, Benth and Veraart~\cite{BNBV-first}).     

We present our results as follows: in Section 2 we express energy forward and futures delivering over a settlement period as 
linear operators on a Hilbert space of functions. European options on energy futures are analysed in Section 3, while 
we consider cross-commodity futures price modelling and option pricing in Section 4. 

\subsection{Some notation}
As a final note in this Introduction, we let throughout this paper $(\Omega,\mathcal{F},\mathcal{F}_t,Q)$ be a 
filtered probability space, where $Q$ denotes the risk-neutral probability. We are working directly under risk-neutrality 
as we have pricing of financial derivatives in mind. Furthermore, we use the notation $L(U,V)$ for the space
of bounded linear operators from the Hilbert space $U$ into the Hilbert space $V$. In case $U=V$, we use the
short-hand notation $L(U)$ for $L(U,U)$. Throughout this paper, the Hilbert spaces that we shall use will
all be assumed separable. Finally, $L_{\text{HS}}(U,V)$ denotes the space of Hilbert-Schmidt operators from $U$ to $V$,
and $L_{\text{HS}}(U)=L_{\text{HS}}(U,U)$.

\section{Hilbert-space realization of energy forwards and futures}
\label{sect:hilbert-space-realization}
In this Section we aim at representing the forward and futures prices in energy markets as
an element of a Hilbert space of functions. Motivated from results in Benth and Kr\"uhner~\cite{BK-HJM},
we will see that various relevant futures contracts traded in energy markets, which deliver the
underlying over a period rather than at a fixed time in the future, can be understood as a
bounded operator on a suitable Hilbert space. 

Let us first introduce the Filipovic space (see Filipovic~\cite{filipovic}), which will be the Hilbert space appropriate 
for our considerations.
Let $H_{\alpha}$ be defined
as the space of all absolutely continuous functions $g:\R_+\rightarrow\R$ for which
$$
\int_0^{\infty}\alpha(x) g'(x)^2\,dx<\infty\,,
$$
for a given continuous and increasing weight function $\alpha:\R_+\rightarrow[1,\infty)$ 
with $\alpha(0)=1$. The norm of $H_{\alpha}$ is $\|g\|_{\alpha}^2=\langle g,g\rangle$, for the inner
product
$$
\langle f,g\rangle=f(0)g(0)+\int_0^{\infty}\alpha(x)g'(x)f'(x)\,dx\,.
$$
Here, $f,g\in H_{\alpha}$. We assume that $\int_0^{\infty}\alpha^{-1}(x)\,dx<\infty$.
Remark that the typical choice of weight function is that of an exponential
function; $\alpha(x)=\exp(\widetilde{\alpha}x)$ for a constant 
$\widetilde{\alpha}>0$,  in which case the integrability condition on the inverse of
$\alpha$ is trivially satisfied.  From Filipovic~\cite{filipovic}, we know that $H_{\alpha}$ is a separable 
Hilbert space. As we shall see, one can realize energy forward and futures prices as linear
operators on $H_{\alpha}$, and in fact interpret energy forward and futures prices as stochastic processes
with values in this space.

Let us consider a simple example motivating the appropriateness of the choice of $H_{\alpha}$.
The classical model for the dynamics of energy spot prices is the so-called
Schwartz dynamics (see Schwartz~\cite{S} and Benth, \v{S}altyt\.{e} Benth and Koekebakker~\cite[Ch. 3]{BSBK-energy}
for an extension to the L\'evy case). Here, the spot price $S(t)$ at time $t\geq 0$ is given by
$$
S(t)=\exp(X(t))\,,
$$
for $X(t)$ being an Ornstein-Uhlenbeck process
$$
dX(t)=\rho(\theta-X(t))\,dt+dL(t)\,,
$$
driven by a L\'evy process $L$. We assume that $L(1)$ has exponential moments, $\rho>0,\theta$ are constants, and $\ln S(0)=X(0)=x\in\R$.  
From Benth, \v{S}altyt\.{e} Benth and Koekebakker~\cite[Prop.~4.6]{BSBK-energy}, 
we find that the forward price $f(t,T)$ at time $t\geq 0$, for a contract delivering at time $T\geq t$, is
$$
f(t,T)=\exp\left(\e^{-\rho(T-t)}X(t)+\theta(1-\e^{-\rho(T-t)})+\int_0^{T-t}\phi(\e^{-\rho s})\,ds\right)\,,
$$ 
with $\phi$ is the logarithm of the moment generating function of $L(1)$.
Recall that we model the spot price directly under the pricing measure $Q$. Letting $x=T-t\geq 0$, we find
(by slightly abusing the notation)
$$
f(t,x)=\exp\left(\e^{-\rho x}X(t)+\theta(1-\e^{-\rho x})+\int_0^{x}\phi(\e^{-\rho s})\,ds\right)\,.
$$ 
It is simple to see that $x\mapsto f(t,x)$ is continuously differentiable for every $t$, and
$$
\frac{\partial f}{\partial x}(t,x)=f(t,x)\left(\rho\e^{-\rho x}(\theta-X(t))+\phi(\e^{-\rho x})\right)\,.
$$
Assume that the weight function $\alpha$ is such that 
$$
\alpha(x)\e^{-2\rho x}\in L^1(\R_+)\,,\qquad \alpha(x)\phi^2(\e^{-2\rho x})\in L^1(\R_+)\,.
$$
Then it follows that  $\int_0^{\infty}|\phi(\exp(-\rho s))|\,ds<\infty$ from the Cauchy-Schwartz inequality and the assumption
$\int_0^{\infty}\alpha^{-1}(x)\,dx<\infty$. Hence, $f$ is uniformly bounded in $x$ since
$$
|f(t,x)|\leq \exp\left(X(t)+\theta+\int_0^{\infty}|\phi(\e^{-\alpha s})|\,ds\right)\,.
$$ 
But then,
\begin{align*}
\Vert f(t,\cdot)\Vert_{\alpha}^2&=|\exp(X(t))|^2+\int_0^{\infty}\alpha(x)f^2(t,x)(\rho\e^{-\rho x}(\theta-X(t))+\phi(\e^{-\rho x}))^2\,dx \\
&\leq c\e^{2X(t)}\left(1+\int_0^{\infty}\alpha(x)\e^{-2\rho x}\,dx+\int_0^{\infty}\alpha(x)\phi^2(\e^{-\rho x})\,dx\right)\,,
\end{align*}
which shows that $f(t,\cdot)\in H_{\alpha}$. If $L$ is a driftless L\'evy process, the exponential moment 
condition on $L(1)$ yields that $\phi(x)$ has the representation
$$
\phi(x)=\frac12\sigma^2x^2+\int_{\R}\{\e^{xz}-1-xz\}\,\ell(dz)\,,
$$
for a constant $\sigma\geq 0$ and L\'evy measure $\ell(dz)$. But by the monotone convergence theorem and
L'Hopital's rule we find that
$$
\lim_{x\searrow 0}\frac1{x^2}\int_{\R}\{\e^{xz}-1-xz\}\,\ell(dz)=\frac12\int_{\R}z^2\,\ell(dz)\,,
$$ 
and therefore $\phi(x)\sim x^2$ when $x$ is small. Thus, a sufficient condition for $f(t,\cdot)\in H_{\alpha}$ is
$\alpha(x)\exp(-2\rho x)\in L^1(\mathbb R_+,\mathbb R)$.

We now move our attention to the main theme of this Section, namely the realization in $H_{\alpha}$ of
general energy forward and futures contracts with a delivery period. Suppose $F(t,T_1,T_2)$ is the 
swap price at time $t$ of a contract on energy delivering
over the time interval $[T_1,T_2]$, where $0\leq t\leq T_1<T_2$. Then one can express 
(see Benth, \v{S}altyt\.e Benth and Koekebakker~\cite{BSBK-energy}, Prop.~4.1) this price as
\begin{equation}
\label{swap-price-eq}
F(t,T_1,T_2)=\int_{T_1}^{T_2}\widetilde{w}(T;T_1,T_2)f(t,T)\,dT
\end{equation}
where $f(t,T)$, $t\leq T$ is the forward price for a contract "delivering energy" at the fixed time time $T$
and $\widetilde{w}(T;T_1,T_2)$ is a deterministic weight function. We will later
make precise assumptions on $\widetilde{w}$, but for now we implicitly assume that the 
integral in \eqref{swap-price-eq} makes sense.  
%In power and gas, futures and forward contracts deliver over a period rather than at a fixed point
%in time. 
For example, at the NordPool and EEX power exchanges, swap contracts
deliver electricity over specific weeks, months, quarters and even years, and are of 
either forward or futures style. The delivery is financial, meaning
that the seller of a contract receives the accumulated spot price of power over the specified period
of delivery (forward style) or the interest-rate discounted accumulated spot price (futures style). 
I.e., for these power swap contracts we have the weight function
\begin{equation}
\label{forward-weight}
\widetilde{w}(T;T_1,T_2)=\frac1{T_2-T_1}
\end{equation}
for the forward-style contracts and
\begin{equation}
\label{futures-weight}
\widetilde{w}(T;T_1,T_2)=\frac{\e^{-rT}}{\int_{T_1}^{T_2}\e^{-rs}\,ds}
\end{equation}
for the futures-style. Here, $r>0$ is the risk-free interest rate which we suppose to be 
constant. The reason for the averaging is the market convention of denominating forward
and futures (swap) prices in terms of MWh (Mega Watt hours). 
In the gas market on NYMEX, say, gas is delivered physically at a location (Henry Hub in the 
case of NYMEX) over a given delivery period like month or quarter. We will therefore have the same 
expression \eqref{swap-price-eq} for the gas swap prices as in the case of power swaps. 

Futures on temperature indices like HDD, CDD and 
CAT\footnote{HDD is short-hand for heating-degree days, CDD for cooling-degree days and CAT for cumulative
average temperature. } deliver the money-equivalent from the aggregated
index value over a specified period. Hence, the futures price can be expressed as
$$
F(t,T_1,T_2)=\int_{T_1}^{T_2}f(t,T)\,dT\,,
$$
where $f(t,T)$ is the futures price of a contract that "delivers" the corresponding temperature index at
the fixed delivery time $T\geq t$. I.e., temperature futures can be expressed by \eqref{swap-price-eq} with
\begin{equation}
\label{temp-futures-weight}
\widetilde{w}(T;T_1,T_2)=1\,,
\end{equation}
as the weight function. 
We refer to Benth and \v{S}altyt\.e Benth~\cite{BSB-weather}  for a discussion on weather futures
as well as the definition of various temperature indices. Here one may also find a discussion of
the more recent wind futures, which
can be expressed as the temperature futures except for a different index interpretation of $f$.  

We aim at a so-called Musiela representation of $F(t,T_1,T_2)$ in \eqref{swap-price-eq}. Define  
$x:=T_1-t$, being the time until start of delivery of the swap, and $\ell=T_2-T_1>0$ the length
of delivery of the swap. With the notation $g(t,y):=f(t,t+y)$, one easily derives
\begin{equation}
G^w_{\ell}(t,x):=F(t,t+x,t+x+\ell)=\int_{x}^{x+\ell}w_{\ell}(t,x,y)g(t,y)\,dy\,,
\end{equation}
for the weight function $w_{\ell}(t,x,y)$ defined by
\begin{equation}
w_{\ell}(t,x,y):=\widetilde{w}(t+y;t+x,t+x+\ell)\,,
\end{equation}
where $y\in [x,x+\ell]$, $x\geq 0$ and $t\geq 0$. 
Referring to the different cases of the weight function $\widetilde{w}$, we find that 
$w_{\ell}(t,x,y)=1$ for a temperature (wind) contract  (with $\widetilde{w}$ is as in 
\eqref{temp-futures-weight}) and $w_{\ell}(t,x,y)=1/\ell$ for the forward-style power (gas) swap (using
$\widetilde{w}$ as in \eqref{forward-weight}). Slightly more interesting, is the future-style power swaps, yielding
\begin{equation}
\label{weight:fut-pow-swap}
w_{\ell}(t,x,y)=\frac{r}{1-\e^{-r\ell}}\e^{-r(y-x)}\,.
\end{equation}
Here, we used \eqref{futures-weight}. Note that all these cases result in a weight function $w_{\ell}$ which is 
independent of time. Furthermore, the only case which is depending on $x$ and $y$ is given in
\eqref{weight:fut-pow-swap}, which becomes in fact stationary in the sense that 
$w_{\ell}$ depends on $y-x$. We shall for simplicity restrict to the situation for which 
$w_{\ell}$ is time-independent and stationary. By slightly abusing
notation, we consider weight functions $w_{\ell}:\R_+\rightarrow\R_+$, such that
\begin{equation}
\label{eq:def-swap-musiela}
G^w_{\ell}(t,x)=\int_{x}^{x+\ell}w_{\ell}(y-x)g(t,y)\,dy\,.
\end{equation}
Based on the different cases above, we assume that the weight function $u\mapsto w_{\ell}(u)$ is positive, 
bounded and measurable.  

Following Benth and Kr\"uhner~\cite[Sect.~4]{BK-HJM}, we can represent 
$G^w_{\ell}$ as a linear operator on $g$ after performing a simple integration-by-parts, that is,
$$
G^w_{\ell}(t)=\mathcal{D}^w_{\ell}(g(t))
$$
where, for a generic function $g\in H_{\alpha}$,
\begin{equation}
\label{def-D_ell-op}
\mathcal{D}^w_{\ell}(g)=W_{\ell}(\ell)\text{Id}(g)+\mathcal{I}^w_{\ell}(g)\,.
\end{equation}
Here, $\text{Id}$ is the identity operator and the function $u\mapsto W_{\ell}(u)$, $u\geq 0$ is defined as
\begin{equation}
\label{def-W_ell-funct}
W_{\ell}(u)=\int_0^uw_{\ell}(v)\,dv\,.
\end{equation}
As $w_{\ell}$ is a measurable and bounded function, $W_{\ell}$ is well-defined for
every $u\geq 0$. Note that the limit of $W_{\ell}(u)$ does not necessarily exist when 
$u\rightarrow\infty$. For example, $W_{\ell}$ tends to infinity with $u$ for $w_{\ell}=1/\ell$ or
$w_{\ell}(u)=1$. However, when $w_{\ell}$ is as in \eqref{weight:fut-pow-swap} the limit
of $W_{\ell}$ exists. Since $w_{\ell}$ is positive, the function $u\mapsto W_{\ell}(u)$ is 
increasing. Hence, $W_{\ell}(\ell)>0$, and the first term of $\mathcal{D}_{\ell}^w$ in
\eqref{def-D_ell-op} is simply the indicator operator on $H_{\alpha}$ scaled by the 
positive number $W_{\ell}(\ell)$. Furthermore,
 $\mathcal{I}^w_{\ell}$ in \eqref{def-D_ell-op} is an integral operator
\begin{equation}
\label{def-I_ell-op}
\mathcal{I}^w_{\ell}(g)=\int_{0}^{\infty}q^w_{\ell}(\cdot,y)g'(y)\,dy\,,
\end{equation}
with kernel
\begin{equation}
\label{def-g_ell-kernel}
q^w_{\ell}(x,y)=(W_{\ell}(\ell)-W_{\ell}(y-x))\mathrm{1}_{[0,\ell]}(y-x)\,.
\end{equation}
Before we show that $\mathcal{I}_{\ell}^w$ is a bounded operator on
$H_{\alpha}$, we look at a special case:

Consider a simple forward-style power swap, i.e., $w_{\ell}(u)=1/\ell$. We get
$W_{\ell}(u)=u/\ell$, and therefore $W_{\ell}(\ell)=1$ yielding that first term in 
\eqref{def-D_ell-op} is simply the identity operator on $H_{\alpha}$. The integral operator
$\mathcal{I}^w_{\ell}$ has the kernel 
$$
q^w_{\ell}(x,y)=\frac1{\ell}(x+\ell-y)\mathrm{1}_{[x,x+\ell]}(y)\,.
$$
This example is analysed in Benth and Kr\"uhner~\cite[Sect.~4]{BK-HJM}. They show that the integral
operator $I_{\ell}^w$ in this case is a bounded linear operator on $H_{\alpha}$, 
implying that $t\mapsto G_{\ell}^w(t)$ is  is a stochastic process with values in $H_{\alpha}$ as 
long as $t\mapsto g(t)$ is an $H_{\alpha}$-valued process. 
It turns out that the boundedness property of the integral operator $\mathcal{I}^w_{\ell}$ holds 
also for our class of more general weight functions. This is shown in
the next Proposition: 
\begin{proposition}
\label{prop:cont-I}
Under the assumption that $u\mapsto w_{\ell}(u)$ for $u\in\R_+$ is positive, bounded and measurable, it holds that $\mathcal{I}_{\ell}^w$ is a bounded linear operator on $H_{\alpha}$. 
%Moreover, $t\mapsto G_{\ell}^w(t)$ is a function mapping time $t$ into a bounded linear operator.  
\end{proposition}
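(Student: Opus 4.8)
The plan is to verify linearity—which is immediate, since $g\mapsto\int_0^\infty q^w_\ell(\cdot,y)g'(y)\,dy$ depends linearly on $g'$ and hence on $g$—and then to produce an estimate of the form $\|\mathcal{I}_\ell^w(g)\|_\alpha\le C\|g\|_\alpha$ with $C$ depending only on $\ell$, on $M:=\sup_u w_\ell(u)$, and on $\int_0^\infty\alpha^{-1}(x)\,dx$. Throughout I write $h:=\mathcal{I}_\ell^w(g)$ and exploit the relation $h=G_\ell^w-W_\ell(\ell)\,\mathrm{Id}(g)$ coming from \eqref{def-D_ell-op}, which lets me transfer regularity of the convolution $G_\ell^w(x)=\int_0^\ell w_\ell(u)g(x+u)\,du$ to $h$. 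Finiteness of the resulting bound will simultaneously show $h\in H_\alpha$ and that the operator is bounded.

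First I would establish that $h$ is absolutely continuous and identify its derivative. Since $g$ is absolutely continuous, writing $g(x_2+u)-g(x_1+u)=\int_{x_1}^{x_2}g'(s+u)\,ds$ and applying Fubini (justified because $w_\ell$ is bounded and $g'$ is integrable on compacts) shows that $G_\ell^w$ is absolutely continuous with $(G_\ell^w)'(x)=\int_0^\ell w_\ell(u)g'(x+u)\,du$. Subtracting $W_\ell(\ell)g$ gives
\begin{equation*}
h'(x)=-W_\ell(\ell)\,g'(x)+\int_0^\ell w_\ell(u)\,g'(x+u)\,du,
\end{equation*}
while the boundary value, after an integration by parts, is $h(0)=\int_0^\ell\bigl(W_\ell(\ell)-W_\ell(u)\bigr)g'(u)\,du$. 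With these two expressions in hand the norm $\|h\|_\alpha^2=h(0)^2+\int_0^\infty\alpha(x)h'(x)^2\,dx$ is ready to be estimated term by term.

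For $h(0)$ I would use $|W_\ell(\ell)-W_\ell(u)|\le W_\ell(\ell)\le M\ell$ on $[0,\ell]$ together with Cauchy--Schwarz in the form $\int_0^\ell|g'|\,du\le(\int_0^\infty\alpha^{-1})^{1/2}(\int_0^\ell\alpha(g')^2)^{1/2}\le(\int_0^\infty\alpha^{-1})^{1/2}\|g\|_\alpha$, finite precisely by the standing assumption $\int_0^\infty\alpha^{-1}(x)\,dx<\infty$. For the derivative term I would split $h'(x)^2\le 2W_\ell(\ell)^2g'(x)^2+2(\int_0^\ell w_\ell(u)g'(x+u)\,du)^2$; the first summand integrates against $\alpha$ to at most $2W_\ell(\ell)^2\|g\|_\alpha^2$. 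For the second summand I apply Cauchy--Schwarz in $u$ to obtain $(\int_0^\ell w_\ell(u)g'(x+u)\,du)^2\le M^2\ell\int_0^\ell g'(x+u)^2\,du$, multiply by $\alpha(x)$, and interchange the order of integration (all integrands nonnegative).

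The key step—and the point where the choice of space really matters—is the translation estimate: after substituting $z=x+u$ the inner integral becomes $\int_u^\infty\alpha(z-u)g'(z)^2\,dz$, and because $\alpha$ is increasing one has $\alpha(z-u)\le\alpha(z)$ for $z\ge u\ge 0$, so this is bounded by $\int_0^\infty\alpha(z)g'(z)^2\,dz\le\|g\|_\alpha^2$ uniformly in $u\in[0,\ell]$. Integrating over $u\in[0,\ell]$ contributes a further factor $\ell$, giving $2M^2\ell^2\|g\|_\alpha^2$ for the second summand. Collecting the three bounds yields $\|\mathcal{I}_\ell^w(g)\|_\alpha^2\le\bigl(M^2\ell^2\int_0^\infty\alpha^{-1}+2W_\ell(\ell)^2+2M^2\ell^2\bigr)\|g\|_\alpha^2$, which is the claimed boundedness. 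I expect the monotonicity-of-$\alpha$ translation bound to be the genuine mathematical content; the absolute continuity of $h$ and the Fubini interchanges are routine but must be checked to make the differentiation under the integral sign legitimate.
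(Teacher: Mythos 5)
Your proof is correct and follows essentially the same route as the paper's: the same boundary-value and derivative formulas for $\mathcal{I}_\ell^w(g)$, the same splitting of the squared derivative into two terms, and the same Cauchy--Schwarz, monotonicity-of-$\alpha$ and Fubini estimates yielding a bound of the form $C\|g\|_\alpha^2$. The only cosmetic difference is that you obtain the derivative via the convolution representation $G_\ell^w - W_\ell(\ell)g$ and verify membership in $H_\alpha$ directly, whereas the paper differentiates the kernel integral and invokes Cor.~4.5 of Benth and Kr\"uhner for the final boundedness conclusion.
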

\begin{proof}
Obviously, $q_{\ell}^w(x,y)$ is measurable on $\R_+^2$. Moreover, it is bounded since for $y\in[x,x+\ell]$
$$
0\leq W_{\ell}(\ell)-W_{\ell}(y-x)=\int_{y-x}^{\ell}w_{\ell}(u)\,du\leq c\ell\,,
$$
where $c$ is the constant majorizing $w_{\ell}$. Hence, $0\leq q^w_{\ell}(x,y)\leq c\ell$. It follows that
$$
\int_0^{\infty}\alpha^{-1}(y)(q_{\ell}^w(x,y))^2\,dy\leq c^2\ell^2\int_0^{\infty}\alpha^{-1}(y)\,dy<\infty
$$
and part 1 of Cor.~4.5 in Benth and Kr\"uhner \cite{BK-HJM} holds. This implies that the integral 
operator $\mathcal{I}^w_{\ell}$ is defined for all $g\in H_{\alpha}$. We continue to
demonstrate that part 2 of the same Corollary also holds. 

As short-hand notation, let for a given $g\in H_{\alpha}$,
$$
\xi(x):=\int_0^{\infty}q_{\ell}^w(x,y)g'(y)\,dy=\int_x^{x+\ell}(W_{\ell}(\ell)-W_{\ell}(y-x))g'(y)\,dy\,.
$$
In particular,
$$
\xi(0)=\int_0^{\ell}(W_{\ell}(\ell)-W_{\ell}(y))g'(y)\,dy=\int_0^{\ell}\int_y^{\ell}w_{\ell}(u)\,du\,g'(y)\,dy\,.
$$
Hence, we find
\begin{align*}
\xi^2(0)&=\left(\int_0^{\ell}\int_y^{\ell}w_{\ell}(u)\,du\,g'(y)\,dy\right)^2 \\
&\leq \left(\int_0^{\ell}\int_y^{\ell}w_{\ell}(u)\,du|g'(y)|\,dy\right)^2 \\
&\leq \left(\int_0^{\ell}w_{\ell}(u)\,du\right)^2\left(\int_0^{\ell}|g'(y)|\,dy\right)^2 \\
&=W_{\ell}^2(\ell)\left(\int_0^{\ell}\sqrt{\alpha(y)}|g'(y)|\sqrt{\alpha(y)}^{-1}\,dy\right)^2 \\
&\leq W_{\ell}^2(\ell)\int_0^{\ell}\alpha^{-1}(y)\,dy\int_0^{\ell}\alpha(y)g'(y)^2\,dy \\
&\leq W_{\ell}^2(\ell)\int_0^{\ell}\alpha^{-1}(y)\,dy\|g\|_{\alpha}^2\,,
\end{align*}
where, in the second inequality we used that $w_{\ell}$ is positive and in the third the Cauchy-Schwartz
inequality. Recall that by assumption, $\int_0^{\infty}\alpha^{-1}(y)\,dy<\infty$.
Furthermore, it holds that 
\begin{align*}
\xi'(x)&=\frac{d}{dx}\int_x^{x+\ell}(W_{\ell}(\ell)-W_{\ell}(y-x))g'(y)\,dy \\
&=(W_{\ell}(\ell)-W_{\ell}(\ell))g'(x+\ell)-(W_{\ell}(\ell)-W_{\ell}(0))g'(x) \\
&\qquad+\int_x^{x+\ell}(-W_{\ell}'(y-x))(-1)
g'(y)\,dy \\
&=\int_x^{x+\ell}w_{\ell}(y-x)g'(y)\,dy-W_{\ell}(\ell)g'(x)\,,
\end{align*}
and therefore $\xi$ has a (weak) derivative. By the triangle inequality,
$$
\xi'(x)^2\leq2W_{\ell}(\ell)g'(x)^2+2\left(\int_x^{x+\ell}w_{\ell}(y-x)g'(y)\,dy\right)^2\,.
$$
We consider the second term on the right hand side: By Cauchy-Schwartz' inequality and
boundedness of $w_{\ell}$,
\begin{align*}
\int_0^{\infty}\alpha(x)\left(\int_x^{x+\ell}w_{\ell}(y-x)g'(y)\,dy\right)^2\,dx&\leq 
\int_0^{\infty}\alpha(x)\left(\int_x^{x+\ell}w_{\ell}(y-x)|g'(y)|\,dy\right)^2\,dx \\
&\leq \int_0^{\infty}\alpha(x)\int_x^{x+\ell}w^2_{\ell}(y-x)\,dy\int_x^{x+\ell}g'(y)^2\,dy\,dx \\
&\leq c^2\ell\int_0^{\infty}\int_x^{x+\ell}\alpha(y)g'(y)^2\,dy\,dx \\ 
&\leq c^2\ell^2\|g\|_{\alpha}^2\,,
\end{align*}
after using that $\alpha$ is non-decreasing and Fubini's Theorem. Wrapping up these estimates, we 
majorize the $H_{\alpha}$-norm of $\xi$
\begin{align*}
\|\xi\|_{\alpha}^2&=|\xi^2(0)|+\int_0^{\infty}\alpha(x)\xi'(x)^2\,dx \\
&\leq W_{\ell}^2(\ell)\int_0^{\ell}\alpha^{-1}(y)\,dy\|g\|_{\alpha}^2+2W^2_{\ell}(\ell)\|g\|_{\alpha}^2
+2c^2\ell^2\|g\|_{\alpha}^2 \\
&\leq C\|g\|_{\alpha}^2\,,
\end{align*}
for a positive constant $C$. But then $\xi\in H_{\alpha}$, and we can conclude from from Cor.~4.5 of
Benth and Kr\"uhner~\cite{BK-HJM} that $\mathcal{I}_{\ell}^w$ is a continuous linear operator
on $H_{\alpha}$. The Proposition follows.
% \?{Actually, $\sqrt{C}$ is shown to be the operator norm of $\mathcal{I}_{\ell}^w$. Hence, $\mathcal{I}_{\ell}^w$ is a continuous linear operator by definition. }
\end{proof}
From Prop.~\ref{prop:cont-I} it follows immediately that $\mathcal{D}_{\ell}^w$ in \eqref{def-D_ell-op} is a continuous linear
operator on $H_{\alpha}$, as it is the sum of the scaled identity operator and the integral
operator $\mathcal{I}_{\ell}^w$. Moreover, for $g\in H_{\alpha}$, it holds (by inspection of the proof of
Prop.~\ref{prop:cont-I}) that 
$$
\||\mathcal{D}_{\ell}^w(g)\|_{\alpha}\leq\left\{W_{\ell}(\ell)+\sqrt{W^2_{\ell}(\ell)(2+\int_0^{\ell}\alpha^{-1}(y)\,dy)+2c^2\ell^2}\right\}\,\|g\|_{\alpha}\,,
$$ 
which provides us with an upper bound on the operator norm of $\mathcal{D}_{\ell}^w$. Furthermore,
it follows immediately from Prop.~\ref{prop:cont-I} that we can realize the dynamics of swap price
curves in $H_{\alpha}$. E.g., if $g(t)$ is an $H_{\alpha}$-valued stochastic process, then
$t\mapsto G_{\ell}^w(t)$ will be a stochastic process with values in $H_{\alpha}$ as well. 
                                                                                                                                                                                                                                                                                                                                                                                                                                                                                                                                                                                                                                                                                                                                                                                                                                                                               
\section{European options on energy forwards and futures}
\label{sect:european-option}
At the energy exchanges, plain vanilla call and put options are offered for trade on futures and forward contracts.
For example, at NordPool, one can buy and sell options on the quarterly settled power futures contracts, while 
at CME one can trade in options on weather futures, including HDD/CDD and CAT temperature futures.
NYMEX offer trade in options on gas futures, among a number of other derivatives on energy and commodity
futures (including different blends of oil).

Consider a European option on an energy forward contract delivering over the period $[T_1,T_2]$ and price $F(t,T_1,T_2)$ at time $t$, where the option has exercise time $0\leq \tau\leq T_1$ and payoff $p(F(\tau,T_1,T_2))$ for some function $p:\mathbb{R}\rightarrow\mathbb{R}$. For plain-vanilla call and put options, we have
$p(x)=\max(x-K,0)$ or $p(x)=\max(K-x,0)$, resp., with the strike price denoted $K$.  
We assume in general $p$ to be a measurable function of 
at most linear growth. We recall the representation $F(t,T_1,T_2) = \mathcal D_\ell^w(g(t))(T_1-t)$. The following Proposition provides the link to the infinite dimensional swap prices:
\begin{proposition}
\label{prop:payoff-repr}
Suppose $p$ is of at most linear growth. It holds that
$$
p(F(\tau,T_1,T_2))=\mathcal{P}_{\ell}(T_1-\tau,g(\tau))\,,
$$
for a nonlinear functional $\mathcal{P}^w_{\ell}:\mathbb{R}_+\times H_{\alpha}\rightarrow\mathbb{R}$ defined by
$$
\mathcal{P}^w_{\ell}(x,g)=p\circ\delta_x\circ\mathcal{D}^w_{\ell}(g)\,.
$$ 
Here, $\ell=T_2-T_1$. Moreover, there exists a constant $c_{\ell}>0$ depending on $\ell$ such that,
$$
|\mathcal{P}^w_{\ell}(\cdot,g)|_{\infty}\leq c_{\ell}(1+\|g\|_{\alpha})\,.
$$
\end{proposition}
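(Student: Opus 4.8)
The plan is to split the claim into its two assertions and treat them in turn. For the representation I would simply unwind the definitions. Writing $x:=T_1-\tau$ and $\ell=T_2-T_1$, the Musiela reparametrisation gives $F(\tau,T_1,T_2)=F(\tau,\tau+x,\tau+x+\ell)=G^w_{\ell}(\tau,x)$, and by the operator representation $G^w_{\ell}(t)=\mathcal{D}^w_{\ell}(g(t))$ established in Section~\ref{sect:hilbert-space-realization}, together with the fact that $G^w_{\ell}(\tau,x)$ is the value at the point $x$ of the function $\mathcal{D}^w_{\ell}(g(\tau))\in H_{\alpha}$, we obtain $F(\tau,T_1,T_2)=\delta_x(\mathcal{D}^w_{\ell}(g(\tau)))$, where $\delta_x$ denotes the evaluation functional $h\mapsto h(x)$. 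Composing with $p$ yields exactly $\mathcal{P}^w_{\ell}(T_1-\tau,g(\tau))=p\circ\delta_x\circ\mathcal{D}^w_{\ell}(g(\tau))$, which is the first assertion. No estimates enter here; it is pure bookkeeping.

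For the bound I would exploit the linear growth of $p$: there are constants $a,b\geq 0$ with $|p(y)|\leq a+b|y|$ for all $y$. Hence for every $x\geq 0$,
$$
|\mathcal{P}^w_{\ell}(x,g)|=|p(\delta_x(\mathcal{D}^w_{\ell}(g)))|\leq a+b\,|\delta_x(\mathcal{D}^w_{\ell}(g))|\,,
$$
so it suffices to control $\sup_{x\geq 0}|\delta_x(\mathcal{D}^w_{\ell}(g))|$ by a constant multiple of $\|g\|_{\alpha}$. The key observation is that the point evaluations on $H_{\alpha}$ are \emph{uniformly} bounded linear functionals, with a bound depending only on the weight $\alpha$.

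To make this precise, for any $h\in H_{\alpha}$ I would write $h(x)=h(0)+\int_0^x h'(y)\,dy$, estimate $|h(0)|\leq\|h\|_{\alpha}$ directly from the inner product, and apply the Cauchy--Schwartz inequality to the remaining integral after inserting the factor $\sqrt{\alpha(y)}\cdot\sqrt{\alpha(y)}^{-1}$, obtaining
$$
|h(x)|\leq\|h\|_{\alpha}+\Big(\int_0^{\infty}\alpha^{-1}(y)\,dy\Big)^{1/2}\Big(\int_0^{\infty}\alpha(y)h'(y)^2\,dy\Big)^{1/2}\leq\Big(1+\big(\textstyle\int_0^{\infty}\alpha^{-1}\big)^{1/2}\Big)\|h\|_{\alpha}\,,
$$
uniformly in $x$; here the standing assumption $\int_0^{\infty}\alpha^{-1}(y)\,dy<\infty$ is exactly what makes the bound independent of $x$. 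Applying this with $h=\mathcal{D}^w_{\ell}(g)$ and combining with the explicit operator-norm bound for $\mathcal{D}^w_{\ell}$ recorded just after Proposition~\ref{prop:cont-I}, I get $\sup_{x}|\delta_x(\mathcal{D}^w_{\ell}(g))|\leq K_{\alpha}M_{\ell}\|g\|_{\alpha}$, where $K_{\alpha}=1+(\int_0^{\infty}\alpha^{-1})^{1/2}$ and $M_{\ell}=W_{\ell}(\ell)+\sqrt{W_{\ell}^2(\ell)(2+\int_0^{\ell}\alpha^{-1})+2c^2\ell^2}$ is the operator norm bound of $\mathcal{D}^w_{\ell}$. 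Taking the supremum over $x$ in the first displayed inequality and setting $c_{\ell}:=\max(a,\,bK_{\alpha}M_{\ell})$ then gives $|\mathcal{P}^w_{\ell}(\cdot,g)|_{\infty}\leq c_{\ell}(1+\|g\|_{\alpha})$, as claimed.

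The only genuinely analytic step is the uniform boundedness of the point evaluations; everything else is either bookkeeping or a direct appeal to the operator bound from the previous section. I expect no real difficulty, since the integrability of $\alpha^{-1}$ is precisely what turns the Cauchy--Schwartz estimate into a bound uniform in $x$, which is why the Filipovic space is the natural setting. The one point worth flagging is that the dependence of $c_{\ell}$ on $\ell$ is inherited entirely through $M_{\ell}$, so the asserted $\ell$-dependence of the constant is automatic.
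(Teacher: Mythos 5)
Your proposal is correct and follows essentially the same route as the paper: unwind the Musiela/operator representation for the first claim, then combine the linear growth of $p$ with the uniform boundedness of the point evaluations $\delta_x$ on $H_{\alpha}$ and the boundedness of $\mathcal{D}^w_{\ell}$. The only difference is that you prove the uniform sup-norm bound $\sup_x|h(x)|\leq\bigl(1+\bigl(\int_0^{\infty}\alpha^{-1}\bigr)^{1/2}\bigr)\|h\|_{\alpha}$ directly, whereas the paper cites it as Lemma~3.2 of Benth and Kr\"uhner; your explicit constant bookkeeping is a harmless refinement.
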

\begin{proof}
Since we have $F(\tau,T_1,T_2)=G^w_{T_2-T_1}(\tau,T_1-\tau)$, the first claim follows.
From the linear growth of $p$ , 
we find
\begin{align*}
|\mathcal{P}^w_{\ell}(x,g)|&=|p(\mathcal{D}^w_{\ell}(g)(x))|\leq c_1\left(1+|\mathcal{D}^w_{\ell}(g)(x)|\right)\,,
\end{align*}
for a positive constant $c_1$. 
Since $\int_0^{\infty}\alpha^{-1}(y)\,dy<\infty$, we find by Lemma 3.2 in Benth and Kr\"uhner~\cite{BK-HJM},
\begin{align*}
|\mathcal{P}^w_{\ell}(\cdot,g)|_{\infty}&=\sup_{x\in\R_+}|\mathcal{P}_{\ell}(x,g)|
\leq c_2\left(1+\|\mathcal{D}^w_{\ell}(g)\|_{\alpha}\right)\,,
\end{align*}
for a positive constant $c_2>0$. But $\mathcal{D}^w_{\ell}$ is a continuous linear operator on $H_{\alpha}$ 
by Prop.~\ref{prop:cont-I}, and hence so is $\mathcal{D}_{\ell}^w$. The last claim follows and the 
proof is complete. 
\end{proof}

Consider the special case of power forwards, for which we recall that $w_{\ell}(u)=1/\ell$. In this case we observe 
$$
\lim_{\ell\downarrow0}G^w_{\ell}(t,x)=\frac{\partial}{\partial\ell}\int_x^{x+\ell}g(t,y)\,dy
\vert_{\ell=0}=g(t,x)\,.
$$
Hence, we can make sense out of $\mathcal{P}^w_0$ for $w_{\ell}(u)=1/\ell$ as
\begin{equation}
\mathcal{P}_0(x,g)=p\circ\delta_x(g)\,.
\end{equation}
Here, $x\in\R_+$ and $g\in H_{\alpha}$, and we use the simplified notation $\mathcal{P}_0$ instead of
$\mathcal{P}_0^w$ in this particular case. We note that the nonlinear operator $\mathcal{P}_0$ will be the 
payoff from an option on a forward with fixed time to delivery $x$ instead of
a delivery period which lasts $\ell>0$, since it holds
\begin{equation}
p(f(\tau,T))=\mathcal{P}_0(T-\tau,g(\tau))\,,
\end{equation}
for $\tau\leq T$. The markets for oil at NYMEX, for example, trade in forwards and futures with fixed delivery times, and options
on these contracts. It is straightforward from Lemma~3.2 in Benth and Kr\"uhner~\cite{BK-HJM} 
that 
$$
|\mathcal{P}_{0}(\cdot,g)|_{\infty}=\sup_{x\in\R_+}|p(g(x))|\leq c_1(1+\sup_{x\in\R_+}|g(x)|) \leq c_2\left(1+
\Vert g\Vert_{\alpha}\right)\,,
$$
for $g\in H_{\alpha}$ and a payoff function $p$ with at most linear growth.

%If we have temperature futures in mind, it is simple to modify the proof of 
%Prop.~\ref{prop:payoff-repr} to find that 
%$$
%p(F^w(\tau,T_1,T_2))=\mathcal{P}^w_{\ell}(T_1-\tau,g(\tau))
%$$ 
%for $\mathcal{P}^w_{\ell}:\R_+\times H_{\alpha}\rightarrow\R$ defined as
%$$
%\mathcal{P}^w_{\ell}(x,g)=p\circ\delta_x\circ(\ell\mathcal{D}_{\ell})(g)\,.
%$$
%Moreover, we find
%$$
%|\mathcal{P}^w_{\ell}(\cdot,g)|_{\infty}\leq\ell c_{\ell}(1+\|g\|_{\alpha})\,,
%$$
%for the constant $c_{\ell}$ given in Prop.~\ref{prop:payoff-repr}. 

Suppose now that $g(t)$ is a stochastic process in $H_{\alpha}$ satisfying
\begin{equation}
\label{eq:cond-g-integrable}
\E\left[\Vert g(t)\Vert_{\alpha}\right]<\infty\,,
\end{equation}
for all $t\geq 0$. The price $V(t)$ at time $0\leq t\leq\tau$ of the option with payoff $p(F(\tau;T_1,T_2)$ 
at time $0<\tau\leq T_1$ is given as
\begin{equation}
\label{option-val-formula}
V(t)=\e^{-r(\tau-t)}\E\left[\mathcal{P}^w_{\ell}(T_1-\tau,g(\tau))\,|\,\mathcal{F}_t\right]\,.
\end{equation}
The expectation is well-defined by Prop.~\ref{prop:payoff-repr} for any given $\ell>0$. 
If we select $w_{\ell}(u)=1/\ell$, then the option value in \eqref{option-val-formula} also incorporates contracts
written on fixed-delivery forwards, that is, options with payoff $p(f(\tau,T))$,
\begin{equation}
\label{option-val-formula-x=0}
V(t)=\e^{-r(\tau-t)}\E\left[\mathcal{P}_{0}(T-\tau,g(\tau))\,|\,\mathcal{F}_t\right]\,.
\end{equation} 
This is also well-defined under the assumption \eqref{eq:cond-g-integrable}. 
%For notational simplicity,
%we have assumed the risk-free interest rate to be zero.

\subsection{Markovian forward curves}\label{sect:Markovian}
We want to analyse option prices for a class of Markovian forward curve dynamics, where the process 
$g(t)$ is specified as the solution of a (first-order) stochastic partial differential equation. We shall be concerned with
dynamics driven by an infinite-dimensional L\'evy process. 

Before proceeding, let us first introduce some general
notions (see e.g. Peszat and Zabczyk~\cite{peszat.zabczyk.07} for what follows):
A random variable $X$ with values in a separable Hilbert space $H$ 
is {\em square integrable} if $\E( \Vert X\Vert^2) <\infty$. If $X$ is square 
integrable, $\mathcal{Q}\in L(H)$ is called the {\em covariance operator} of $X$ if
$$ 
\E\left( \langle X,u\rangle\langle X,v\rangle \right) = \langle \mathcal{Q}u,v\rangle\,, 
$$
for any $u,v\in H$. Here, $\langle\cdot,\cdot\rangle$ is the inner 
product in $H$ and  $\|\cdot\|$ the associated norm.
The following result can be found in Peszat and Zabczyk~\cite[Thm.~4.44]{peszat.zabczyk.07}, 
and stated here for convenience.
\begin{lemma}
Let $X$ be a square integrable $H$-valued random variable where $H$ is a 
separable Hilbert space. Then there is a unique operator 
$\mathcal{Q}\in L(H)$ such that 
$\mathcal{Q}$ is the covariance operator of $X$. 
Moreover, $\mathcal{Q}$ is a positive semidefinite trace class operator. Consequently, there is an orthonormal basis $(e_n)_{n\in I}$ of $H$ and a sequence $(\lambda_n)_{n\in I}\in l^1(I,\mathbb R_+)$ such that
$$
\mathcal{Q}u = \sum_{n\in\mathbb N} \lambda_n \langle e_n,u\rangle e_n
$$
for any $u\in H$.
\end{lemma}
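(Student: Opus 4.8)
The plan is to proceed in three stages: first construct $\mathcal{Q}$ through a Riesz-representation argument, then verify that it is positive semidefinite and of trace class, and finally invoke the spectral theorem for compact self-adjoint operators to obtain the eigen-expansion.

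First I would define the bilinear form $B(u,v):=\E(\langle X,u\rangle\langle X,v\rangle)$ on $H\times H$. By the Cauchy--Schwarz inequality in $L^2(\Omega)$ together with the pointwise bound $|\langle X,u\rangle|\leq\|X\|\,\|u\|$, one gets $|B(u,v)|\leq\E(\|X\|^2)\,\|u\|\,\|v\|$, so $B$ is a bounded symmetric bilinear form. For each fixed $u$, the map $v\mapsto B(u,v)$ is a bounded linear functional, and the Riesz representation theorem supplies a unique vector, which I call $\mathcal{Q}u$, with $B(u,v)=\langle\mathcal{Q}u,v\rangle$ for all $v$. Linearity of $u\mapsto\mathcal{Q}u$ follows from the bilinearity of $B$ and the uniqueness in Riesz, while the bound above yields $\|\mathcal{Q}u\|\leq\E(\|X\|^2)\,\|u\|$, so $\mathcal{Q}\in L(H)$. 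This is the asserted covariance operator, and it is unique because the inner product separates operators.

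Next I would establish the two structural properties. Symmetry of $B$ gives $\langle\mathcal{Q}u,v\rangle=\langle u,\mathcal{Q}v\rangle$, so $\mathcal{Q}$ is self-adjoint, and $\langle\mathcal{Q}u,u\rangle=\E(\langle X,u\rangle^2)\geq0$ shows it is positive semidefinite. For the trace-class property I would fix an arbitrary orthonormal basis $(e_n)_{n\in I}$ and compute $\sum_n\langle\mathcal{Q}e_n,e_n\rangle=\sum_n\E(\langle X,e_n\rangle^2)$. Since every summand is nonnegative, Tonelli's theorem permits interchanging the sum and the expectation, and Parseval's identity gives $\sum_n\langle X,e_n\rangle^2=\|X\|^2$, so the trace equals $\E(\|X\|^2)<\infty$. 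For a positive semidefinite operator the finiteness of this sum (which is moreover independent of the chosen basis) is precisely the statement that $\mathcal{Q}$ is trace class.

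Finally, a trace-class operator is compact, and a compact self-adjoint operator on a separable Hilbert space admits, by the spectral theorem, an orthonormal basis $(e_n)_{n\in I}$ of eigenvectors with real eigenvalues $\lambda_n$ tending to zero. Positive semidefiniteness forces $\lambda_n\geq0$, and the trace-class property gives $\sum_n\lambda_n=\tr\mathcal{Q}<\infty$, i.e.\ $(\lambda_n)_{n\in I}\in\ell^1(I,\R_+)$. The eigen-expansion then reads $\mathcal{Q}u=\sum_n\lambda_n\langle e_n,u\rangle e_n$, which is the claimed representation. The only genuinely delicate point is the interchange of summation and expectation in computing the trace; this is harmless here precisely because all terms are nonnegative, so nothing beyond $\E(\|X\|^2)<\infty$ is required.
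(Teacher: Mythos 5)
Your proof is correct and complete: the Riesz-representation construction of $\mathcal{Q}$, the positivity and trace computation via Tonelli and Parseval, and the appeal to the spectral theorem for compact self-adjoint operators are all sound, and the key interchange of sum and expectation is indeed justified by nonnegativity alone. The paper itself offers no proof here --- it simply cites Peszat and Zabczyk, Thm.~4.44 --- and your argument is essentially the standard one found there, so there is nothing to compare beyond noting that your write-up is a faithful, self-contained version of the cited result.
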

For a separable Hilbert space $H$, $\mathbb{L}:=\{\mathbb{L}(t)\}_{t\geq 0}$ is an $H$-valued L\'evy process 
if $\mathbb{L}$ has independent and stationary increments, stochastically continuous paths and
$\mathbb{L}(0)=0$. This definition is found in Peszat and Zabczyk~\cite[Ch. 4]{peszat.zabczyk.07},
and can in fact be formulated on a general Banach space. We remark in passing that Thm.~4.44 in 
Peszat and Zabczyk~\cite{peszat.zabczyk.07} is formulated for L\'evy processes.

Let us now move our attention back to modelling the forward rate dynamics, and 
suppose that  $\mathbb{L}$ is a square-integrable $H$-valued L\'evy process with zero mean, and
denote its covariance operator by $\mathcal{Q}$. Furthermore, let 
$\sigma:\R_+\times H_{\alpha}\rightarrow L(H,H_{\alpha})$ be a measurable map, and assume there
exists an increasing function $K:\R_+\rightarrow\R_+$ such that the following Lipschitz continuity and linear growth
holds: for any $f,h\in H_{\alpha}$ and $t\in\R_+$,
\begin{align}
\|\sigma(t,f)-\sigma(t,h)\|_{\text{op}}&\leq K(t)\|f-h\|_{\alpha} \,,  \label{eq:sigma-lipschitz}    \\
\|\sigma(t,f)\|_{\text{op}}&\leq K(t)(1+\|f\|_{\alpha})\,. \label{eq:sigma-lingrowth}
\end{align}
Consider the dynamics of the $H_{\alpha}$-valued stochastic process $\{g(t)\}_{t\geq 0}$ 
defined by the stochastic partial differential equation
\begin{equation}
\label{eq:g-spde}
dg(t)=\partial_xg(t)\,dt+\sigma(t,g(t))\,d\mathbb{L}(t)\,.
\end{equation}
Let $\mathcal{S}_x, x\geq 0$ denote the right-shift operator on $H_{\alpha}$, i.e., $\mathcal{S}_xf=f(x+\cdot)$.
Then $\mathcal{S}_x$ is the $C_0$-semigroup generated by the operator $\partial_x$ 
(see Filipovic~\cite[Thm.~5.1.1]{filipovic}). 
From Lemma~3.5 in Benth and Kr\"uhner~\cite{BK-HJM}, $\mathcal{S}_x$ is quasi-contractive, i.e., there exists
a positive constant $c$ such that $\|\mathcal{S}_x\|_{\text{op}}\leq \exp(ct)$ for $t >0$. 
Hence, referring to Thm.~4.5 in Tappe~\cite{Tappe}, there exists a unique mild solution of \eqref{eq:g-spde} 
for $s\geq t$, that is, a c\`adl\`ag process $g\in H_{\alpha}$ satisfying
\begin{equation}
\label{eq:g-markovian}
g(s)=\mathcal{S}_{s-t}g(t)+\int_t^s\mathcal{S}_{s-u}\sigma(u,g(u))\,d\mathbb{L}(u)\,.
\end{equation}

The shift and the pricing operator for $F(t,T_1,T_2)$ commute, which allows to find the dynamics of $F(\cdot,T_1,T_2)$.
Moreover, this dynamics reveals that $t\mapsto F(t,T_1,T_2)$ is a martingale in our setup, as desired.
\begin{lemma}\label{lemma:forward-dynamics}
 We have $\mathcal S_x\mathcal D_\ell^w = \mathcal D_\ell^w\mathcal S_x$ for any $x\geq 0$. Consequently, we have
  \begin{equation}
\label{eq:F simple}
F(s,T_1,T_2) = \delta_{T_1-t}\mathcal D_\ell^w g(t)+\int_t^s\delta_{T_1-u}\mathcal D_\ell^w\sigma(u,g(u))\,d\mathbb{L}(u)\, 
\end{equation}
for any $0\leq t\leq s$.
\end{lemma}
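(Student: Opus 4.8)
The plan is to prove the commutation relation first, then obtain the dynamics of $F$ by applying the bounded linear operator $\delta_{T_1-t}\mathcal{D}_\ell^w$ to the mild solution formula for $g$.

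For the commutation $\mathcal{S}_x\mathcal{D}_\ell^w = \mathcal{D}_\ell^w\mathcal{S}_x$, I would work directly from the definitions. Recall from \eqref{def-D_ell-op} that $\mathcal{D}_\ell^w = W_\ell(\ell)\text{Id}+\mathcal{I}_\ell^w$. The identity part commutes with $\mathcal{S}_x$ trivially, so the whole question reduces to showing $\mathcal{S}_x\mathcal{I}_\ell^w = \mathcal{I}_\ell^w\mathcal{S}_x$. This is most transparent at the level of the original integral representation \eqref{eq:def-swap-musiela}: applying $\mathcal{S}_x$ to $G_\ell^w(t,\cdot)$ shifts the spatial argument, and since the kernel $w_\ell(y-x)$ in \eqref{eq:def-swap-musiela} depends only on the difference $y-x$ (the stationarity we assumed), shifting the input curve $g$ by $x$ produces exactly the same result as shifting the output. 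Concretely, I would compute $(\mathcal{S}_x G_\ell^w(t,\cdot))(z) = G_\ell^w(t,x+z) = \int_{x+z}^{x+z+\ell}w_\ell(y-x-z)g(t,y)\,dy$ and, via the substitution $y\mapsto y+x$, recognize this as $\int_z^{z+\ell}w_\ell(y-z)g(t,x+y)\,dy = (\mathcal{D}_\ell^w\mathcal{S}_x g(t,\cdot))(z)$. Equivalently, one may verify the commutation on the kernel level using \eqref{def-g_ell-kernel}, where $q_\ell^w(x,y)$ likewise depends only on $y-x$, so the right-shift acts symmetrically on both arguments.

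For the dynamics \eqref{eq:F simple}, I would start from the mild solution \eqref{eq:g-markovian} and apply the operator $\delta_{T_1-t}\mathcal{D}_\ell^w$ to both sides. The key observation is that $F(s,T_1,T_2) = \delta_{T_1-s}\mathcal{D}_\ell^w g(s)$ by the representation recalled before Proposition~\ref{prop:payoff-repr}. Writing $\delta_{T_1-s} = \delta_{T_1-t}\mathcal{S}_{s-t}$ (since evaluating at $T_1-s$ equals shifting by $s-t$ and then evaluating at $T_1-s-(s-t)$; more precisely $\delta_a\mathcal{S}_b = \delta_{a+b}$), one uses the commutation relation to move $\mathcal{D}_\ell^w$ past each $\mathcal{S}_{s-u}$ appearing in the stochastic convolution term. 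Because $\mathcal{D}_\ell^w$ is a bounded linear operator (by Proposition~\ref{prop:cont-I} and the remark following it), it may be interchanged with both the deterministic term and the stochastic integral, yielding the stated formula with integrand $\delta_{T_1-u}\mathcal{D}_\ell^w\sigma(u,g(u))$.

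The main obstacle is the rigorous justification of pulling the bounded operator $\delta_{T_1-t}\mathcal{D}_\ell^w$ inside the stochastic integral in \eqref{eq:g-markovian}. The composition $\delta_{T_1-t}\mathcal{D}_\ell^w\in L(H_\alpha,\mathbb{R})$ is a bounded functional (the evaluation $\delta_x$ is bounded on $H_\alpha$ since point evaluations are continuous there, as used in Lemma~3.2 of \cite{BK-HJM}), so the interchange of a deterministic bounded operator with an $H_\alpha$-valued stochastic integral against $\mathbb{L}$ is a standard property of the Itô integral in Hilbert space. I would cite the relevant commutation property for stochastic integrals in \cite{peszat.zabczyk.07} or \cite{Tappe}. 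Once this interchange is granted, the martingale property asserted in the surrounding text is immediate, since the deterministic drift term $\delta_{T_1-t}\mathcal{D}_\ell^w g(t)$ carries all the $\mathcal{F}_t$-measurable mass and the stochastic integral against the zero-mean L\'evy process $\mathbb{L}$ is a martingale.
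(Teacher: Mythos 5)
Your proposal is correct and follows essentially the same route as the paper: the paper's proof consists of exactly the two steps you spell out (a ``straightforward computation'' for the commutation, which your stationarity-of-the-kernel argument makes explicit, followed by applying the representation $F(s,T_1,T_2)=\delta_{T_1-s}\mathcal D_\ell^w g(s)$ to the mild solution and pulling the bounded functional through the stochastic integral). Note only a transposed identity: you should apply $\delta_{T_1-s}\mathcal D_\ell^w$ (not $\delta_{T_1-t}\mathcal D_\ell^w$) to the mild solution for $g(s)$, and the relation you need is $\delta_{T_1-t}=\delta_{T_1-s}\mathcal S_{s-t}$ rather than $\delta_{T_1-s}=\delta_{T_1-t}\mathcal S_{s-t}$, consistent with the rule $\delta_a\mathcal S_b=\delta_{a+b}$ that you correctly state and in fact use to arrive at the right integrand $\delta_{T_1-u}\mathcal D_\ell^w\sigma(u,g(u))$.
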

\begin{proof}
 The first equality follows from a straightforward computation. Applying the mild solution in Equation \eqref{eq:g-markovian} to $F(s,T_1,T_2) = \delta_{T_1-s}\mathcal D_\ell^wg(s)$, the claim follows after using the commutation property.
\end{proof}

Below it will be convenient to know that $\mathcal{S}_x$ is uniformly bounded in the operator norm:  
\begin{lemma}
\label{lem:cont-shift-op}
It holds that 
$\|\mathcal{S}_x\|_{\text{op}}^2\leq2\max(1,\int_0^{\infty}\alpha^{-1}(y)\,dy)$ for $x\geq 0$.
\end{lemma}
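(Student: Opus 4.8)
The plan is to estimate $\|\mathcal{S}_x f\|_\alpha^2$ directly from the definition of the Filipovic norm and then take the supremum over unit vectors. Fix $f\in H_\alpha$ and $x\geq 0$. Since $(\mathcal{S}_x f)(y)=f(x+y)$, we have $(\mathcal{S}_x f)(0)=f(x)$ and $(\mathcal{S}_x f)'(y)=f'(x+y)$, so that
$$
\|\mathcal{S}_x f\|_\alpha^2 = f(x)^2 + \int_0^\infty \alpha(y)\,f'(x+y)^2\,dy\,.
$$
First I would treat the integral term: the substitution $u=x+y$ turns it into $\int_x^\infty \alpha(u-x)\,f'(u)^2\,du$, and since $\alpha$ is increasing with $u-x\leq u$, this is bounded by $\int_x^\infty \alpha(u)\,f'(u)^2\,du$.

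Next I would estimate the point evaluation $f(x)^2$. Writing $f(x)=f(0)+\int_0^x f'(u)\,du$ and using $(a+b)^2\leq 2a^2+2b^2$ together with the Cauchy--Schwarz inequality applied to $\int_0^x \alpha^{-1/2}(u)\cdot\alpha^{1/2}(u)f'(u)\,du$, I obtain
$$
f(x)^2 \leq 2f(0)^2 + 2\left(\int_0^x \alpha^{-1}(u)\,du\right)\int_0^x \alpha(u)\,f'(u)^2\,du\,,
$$
where the first factor is bounded by $A:=\int_0^\infty \alpha^{-1}(u)\,du$, finite by the standing assumption on $\alpha$.

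The key point --- and the only place where one has to be slightly careful to land on the stated constant --- is to \emph{split the energy integral at the point $x$}. Setting $B_1=\int_0^x \alpha(u)f'(u)^2\,du$ and $B_2=\int_x^\infty \alpha(u)f'(u)^2\,du$, the two estimates above combine to
$$
\|\mathcal{S}_x f\|_\alpha^2 \leq 2f(0)^2 + 2A\,B_1 + B_2\,.
$$
Comparing this with $2\max(1,A)\,\|f\|_\alpha^2 = 2\max(1,A)\,(f(0)^2+B_1+B_2)$ term by term settles the bound: the coefficient of $f(0)^2$ is fine since $2\leq 2\max(1,A)$, the coefficient of $B_1$ since $A\leq\max(1,A)$, and the coefficient of $B_2$ since $1\leq 2\leq 2\max(1,A)$. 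Hence $\|\mathcal{S}_x f\|_\alpha^2\leq 2\max(1,A)\|f\|_\alpha^2$ for every $f$, and taking the supremum over $f$ with $\|f\|_\alpha=1$ yields the claim.

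I expect no genuine obstacle here; the one thing to get right is this split at $x$, since a cruder estimate that uses $\int_0^\infty\alpha^{-1}$ over the whole half-line in the bound for $f(x)^2$ would leave the coefficient $2A+1$ on the energy term, which fails to fit under $2\max(1,A)$ when $A\geq 1$. Localising the Cauchy--Schwarz step to $[0,x]$ is exactly what keeps the gradient contributions on $[0,x]$ and on $[x,\infty)$ from being counted with a combined weight exceeding $2\max(1,A)$.
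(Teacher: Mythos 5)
Your argument is correct and is essentially identical to the paper's own proof: the same decomposition $f(x)=f(0)+\int_0^x f'$, the same localized Cauchy--Schwarz on $[0,x]$, the same use of monotonicity of $\alpha$ to bound $\alpha(u-x)\leq\alpha(u)$, and the same term-by-term comparison against $2\max(1,\int_0^\infty\alpha^{-1})\,\|f\|_\alpha^2$. Your closing remark about why the split at $x$ is needed makes explicit a point the paper leaves implicit, but the proof itself is the same.
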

\begin{proof}
This follows by a direct calculation: By the fundamental theorem of
calculus, the elementary inequality $2ab\leq a^2+b^2$ and $\alpha$ being non-decreasing, we find for
$f\in H_{\alpha}$
\begin{align*}
\|\mathcal{S}_xf\|^2_{\alpha}&=f^2(x)+\int_0^{\infty}\alpha(y)|f'(x+y)|^2\,dy \\
&=\left(f(0)+\int_0^xf'(y)\,dy\right)^2+\int_x^{\infty}\alpha(y-x)|f'(y)|^2\,dy \\
&\leq 2f^2(0)+2\left(\int_0^x\alpha^{-1/2}(y)\alpha^{1/2}(y)f'(y)\,dy\right)^2+
\int_x^{\infty}\alpha(y)|f'(y)|^2\,dy \,.
\end{align*}
Appealing to the Cauchy-Schwartz inequality we find,
\begin{align*}
\|\mathcal{S}_xf\|^2_{\alpha}&\leq 2f^2(0)+2\int_0^x\alpha^{-1}(y)\,dy\int_0^x\alpha(y)|f'(y)|^2\,dy+
\int_x^{\infty}\alpha(y)|f'(y)|^2\,dy\,.
\end{align*}
Hence, $\|\mathcal{S}_xf\|_{\alpha}^2\leq \max(2,2\int_0^{\infty}\alpha^{-1}(y)\,dy)\|f\|_{\alpha}^2$, and the Lemma follows.
\end{proof}

From \eqref{eq:g-markovian}, the dynamics of $g$ becomes Markovian. This means
in particular that $V(t)$ defined in \eqref{option-val-formula} can be expressed as $V(t)=V(t,g(t))$ 
(with a slight abuse of notation) for 
\begin{equation}
V(t,g)=\e^{-r(\tau-t)}\E\left[\mathcal{P}_{\ell}(g^{t,g}(\tau))\right]\,.
\end{equation}
Here, we have used the notation $g^{t,g}(s)\,s\geq t$ for the for process $g(s), s\geq t$, starting in $g$ at time 
$t$, e.g., $g^{t,g}(t)=g$, $g\in H_{\alpha}$.  

We shall use the continuity of the translation operator as a linear operator on $H_{\alpha}$ to prove Lipschitz continuity
of the functional $g\mapsto V(t,g)$, uniformly in $t\leq\tau$. Recall that $\tau$ is the exercise time
of the option in question.
\begin{proposition}
\label{prop:V-lipschitz}
Assume that the payoff function $p$ is Lipschitz continuous and volatility functional $g\mapsto \sigma(s,g)$ satisfies
the Lipschitz and linear growth conditions in (\ref{eq:sigma-lipschitz},\ref{eq:sigma-lingrowth}). Then there
exists a positive constant $C$ (depending on $\tau$) such that
$$
\sup_{t\leq\tau}|V(t,g)-V(t,\widetilde{g})|\leq C\|g-\widetilde{g}\|_{\alpha}\,,
$$ 
for $g,\widetilde{g}\in H_{\alpha}$.
\end{proposition}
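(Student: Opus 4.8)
The plan is to reduce the statement to a standard stability estimate for the mild solution \eqref{eq:g-markovian} with respect to its initial datum, and then close it with Grönwall's inequality. Since $r>0$ and $t\le\tau$, the discount factor satisfies $\e^{-r(\tau-t)}\le 1$, so it suffices to control $\E\big[|\mathcal{P}_\ell(g^{t,g}(\tau))-\mathcal{P}_\ell(g^{t,\widetilde g}(\tau))|\big]$. Writing $x_0:=T_1-\tau$, recall that $\mathcal{P}_\ell(h)=p\big(\delta_{x_0}\mathcal{D}^w_\ell(h)\big)$. First I would use that $p$ is Lipschitz, with constant $L_p$ say, that $\delta_{x_0}$ is a bounded functional on $H_\alpha$ uniformly in $x_0$ by Lemma~3.2 in Benth and Kr\"uhner~\cite{BK-HJM}, and that $\mathcal{D}^w_\ell$ is bounded by Proposition~\ref{prop:cont-I}, to produce a constant $C_0$, independent of $x_0$ (hence of $\tau$), with
$$
|\mathcal{P}_\ell(f)-\mathcal{P}_\ell(h)|\le L_p\,|\delta_{x_0}\mathcal{D}^w_\ell(f-h)|\le C_0\|f-h\|_\alpha\,.
$$
Taking expectations then reduces the whole claim to showing $\E\big[\|g^{t,g}(\tau)-g^{t,\widetilde g}(\tau)\|_\alpha\big]\le C_1\|g-\widetilde g\|_\alpha$ uniformly in $t\le\tau$.

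For this I would set $\Delta(s):=g^{t,g}(s)-g^{t,\widetilde g}(s)$ and subtract the two mild-solution representations \eqref{eq:g-markovian}, obtaining
$$
\Delta(s)=\mathcal{S}_{s-t}(g-\widetilde g)+\int_t^s\mathcal{S}_{s-u}\big(\sigma(u,g^{t,g}(u))-\sigma(u,g^{t,\widetilde g}(u))\big)\,d\mathbb{L}(u)\,.
$$
Squaring, using $\|a+b\|_\alpha^2\le 2\|a\|_\alpha^2+2\|b\|_\alpha^2$ and taking expectations, the first term is bounded by $2\|\mathcal{S}_{s-t}\|_{\text{op}}^2\|g-\widetilde g\|_\alpha^2$, which is controlled by Lemma~\ref{lem:cont-shift-op} through the uniform bound $M:=2\max(1,\int_0^\infty\alpha^{-1}(y)\,dy)$ on $\|\mathcal{S}_x\|_{\text{op}}^2$. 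For the stochastic term I would invoke the It\^o isometry for the $H$-valued square-integrable L\'evy process $\mathbb{L}$ with covariance operator $\mathcal{Q}$,
$$
\E\Big\|\int_t^s\Phi(u)\,d\mathbb{L}(u)\Big\|_\alpha^2=\E\int_t^s\|\Phi(u)\mathcal{Q}^{1/2}\|_{\text{HS}}^2\,du\le \tr(\mathcal{Q})\,\E\int_t^s\|\Phi(u)\|_{\text{op}}^2\,du\,,
$$
and then estimate the integrand via $\|\mathcal{S}_{s-u}\|_{\text{op}}^2\le M$, the Lipschitz property \eqref{eq:sigma-lipschitz}, and monotonicity of $K$, so that $\|\Phi(u)\|_{\text{op}}^2\le M K(\tau)^2\|\Delta(u)\|_\alpha^2$.

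Collecting these estimates gives, with $\phi(s):=\E\|\Delta(s)\|_\alpha^2$,
$$
\phi(s)\le 2M\|g-\widetilde g\|_\alpha^2+2\tr(\mathcal{Q})M K(\tau)^2\int_t^s\phi(u)\,du\,,
$$
and Grönwall's lemma then yields $\phi(\tau)\le 2M\e^{2\tr(\mathcal{Q})MK(\tau)^2(\tau-t)}\|g-\widetilde g\|_\alpha^2\le 2M\e^{2\tr(\mathcal{Q})MK(\tau)^2\tau}\|g-\widetilde g\|_\alpha^2$. Since the resulting constant no longer depends on $t$, an application of Jensen's inequality, $\E\|\Delta(\tau)\|_\alpha\le\phi(\tau)^{1/2}$, combined with the reduction of the first paragraph produces the claimed bound, with $C$ depending only on $\tau$ (through $K(\tau)$), on $\mathcal{Q}$, on $\alpha$ and on $L_p$.

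I expect the main obstacle to be the stochastic-integral step: one must ensure that the mild solutions are genuinely square integrable so that the It\^o isometry and the Grönwall bootstrap are legitimate, which is consistent with the existence framework of Tappe~\cite{Tappe} under \eqref{eq:sigma-lipschitz}--\eqref{eq:sigma-lingrowth}, and one must apply the isometry to integrands taking values in $L(H,H_\alpha)$ against a L\'evy process with trace-class covariance. Everything else---the reduction via boundedness of $\delta_{x_0}$ and $\mathcal{D}^w_\ell$, the uniform shift bound, and the Grönwall argument---is routine once square integrability is in hand.
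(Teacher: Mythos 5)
Your proposal is correct and follows essentially the same route as the paper's proof: the uniform Lipschitz bound on $g\mapsto\mathcal{P}_\ell(x,g)$ via boundedness of $\delta_x$ and $\mathcal{D}_\ell^w$, the mild-solution difference, the It\^o isometry with the bound $\|\Phi(u)\mathcal{Q}^{1/2}\|_{\mathrm{HS}}^2\le\|\Phi(u)\|_{\mathrm{op}}^2\,\mathrm{Tr}(\mathcal{Q})$ (identical to the paper's $\|\mathcal{Q}^{1/2}\|_{L_{\mathrm{HS}}(H)}^2$), Gr\"onwall, and Jensen. The only cosmetic difference is that you flag the square-integrability of the mild solution explicitly, which the paper leaves implicit in the Tappe existence framework.
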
  
\begin{proof}
As $p$ is Lipschitz continuous, it follows that $g\mapsto\mathcal{P}_{\ell}(x,g)$ is Lipschitz 
continuous since $\mathcal{P}_{\ell}(x,\cdot)=p\circ\delta_x\circ\mathcal{D}_{\ell}^w$, and $\delta_x,\mathcal{D}_{\ell}^w$ are bounded linear operators. Moreover, the Lipschitz continuity is uniform in $x$, as it follows from Lemma 3.1 in 
Benth and Kr\"uhner~\cite{BK-HJM} that the operator norm of $\delta_x$ satisfies
$$
\|\delta_x\|_{\text{op}}^2=1+\int_0^x\alpha^{-1}(y)\,dy\leq 1+\int_0^{\infty}\alpha^{-1}(y)\,dy<\infty\,.
$$   
Hence, there exists a constant $C_{\mathcal{P}}>0$ such that
$$
|\mathcal{P}_{\ell}(x,g)-\mathcal{P}_{\ell}(x,\widetilde{g})|\leq C_{\mathcal{P}}\|g-\widetilde{g}\|_{\alpha}\,.
$$
Therefore 
$$
|V(t,g)-V(t,\widetilde{g})|\leq C_{\mathcal{P}}\E\left[\|g^{t,g}(\tau)-g^{t,\widetilde{g}}(\tau)\|_{\alpha}\right]\,.
$$
Since
$$
g^{t,g}(\tau)=\mathcal{S}_{\tau-t}g+\int_t^{\tau}\mathcal{S}_{\tau-s}\sigma(s,g^{t,g}(s))\,\mathbb{L}(s)
$$
we have by the triangle inequality and Lemma~\ref{lem:cont-shift-op}
\begin{align*}
\|g^{t,g}(\tau)-g^{t,\widetilde{g}}(\tau)\|_{\alpha}&\leq\|\mathcal{S}_{\tau-t}(g-\widetilde{g})\|_{\alpha}
+\|\int_t^{\tau}\mathcal{S}_{\tau-s}\left(\sigma(s,g^{t,g}(s))-\sigma(s,g^{t,\widetilde{g}}(s))\right)\,d\mathbb{L}(s)\|_{\alpha} \\
&\leq c\|g-\widetilde{g}\|_{\alpha}+\|\int_t^{\tau}\mathcal{S}_{\tau-s}\left(\sigma(s,g^{t,g}(s))-\sigma(s,g^{t,\widetilde{g}}(s))\right)\,d\mathbb{L}(s)\|_{\alpha} \,,
\end{align*}
where the constant $c$ is positive, and in fact given explicitly in Lemma~\ref{lem:cont-shift-op}.
By the It\^o isometry it follows,
\begin{align*}
\E&\left[\|\int_t^{\tau}\mathcal{S}_{\tau-s}\left(\sigma(s,g^{t,g}(s))-\sigma(s,g^{t,\widetilde{g}}(s))\right)\,d\mathbb{L}(s)\|_{\alpha}^2\right] \\
&\qquad\qquad\qquad=\int_t^{\tau}\E\left[\|\mathcal{S}_{\tau-s}\left(\sigma(s,g^{t,g}(s))-\sigma(s,g^{t,\widetilde{g}}(s))\right)\mathcal{Q}^{1/2}\|^2_{L_{\text{HS}}(H,H_{\alpha})}\right]\,ds\,.
\end{align*}
Let now $\mathcal{T}\in L(H,H_{\alpha})$. Then, we have
 \begin{align*}
  \Vert \mathcal S_x\mathcal T\mathcal Q^{1/2}\Vert_{L_{HS}(H,H_\alpha)} &\leq \Vert \mathcal S_x\Vert_{\mathrm{op}} \Vert \mathcal T\Vert_{\mathrm{op}}\Vert \mathcal Q^{1/2}\Vert_{L_{HS}(H)} \\
   &\leq c\Vert \mathcal T\Vert_{\mathrm{op}}\Vert \mathcal Q^{1/2}\Vert_{L_{HS}(H)}.
 \end{align*}
% 
% 
% for a complete orthornormal
% system $\{e_k\}_{k\in\N}$ of $H$ and $x\geq 0$, we have
% \begin{align*}
% \|\mathcal{S}_x\mathcal{T}\mathcal{Q}^{1/2}\|^2_{L_{\text{HS}}(H,H_{\alpha})}%&=\sum_{k=1}^{\infty}
% %\langle\mathcal{S}_x\mathcal{T}\mathcal{Q}^{1/2}e_k,
% %\mathcal{S}_x\mathcal{T}\mathcal{Q}^{1/2}e_k\rangle_{\alpha} \\
% &=\sum_{k=1}^{\infty}\|\mathcal{S}_x\mathcal{T}\mathcal{Q}^{1/2}e_k\|_{\alpha} \\
% &\leq c\sum_{k=1}^{\infty}\|\mathcal{T}\mathcal{Q}^{1/2}e_k\|^2_{\alpha}\,.
% \end{align*}
% The last inequality follows from Lemma~\ref{lem:cont-shift-op} (note that $c$ is used as a generic constant
% above, and therefore may change value through the estimation steps). But then,
% \begin{align*}
% \|\mathcal{S}_x\mathcal{T}\mathcal{Q}^{1/2}\|^2_{L_{\text{HS}}(H,H_{\alpha})}
% &\leq c\|\mathcal{T}\|_{\text{op}}^2\sum_{k=1}^{\infty}\|\mathcal{Q}^{1/2}e_k\|_{H}^2 \\
% &=c\|\mathcal{T}\|_{\text{op}}^2\|\mathcal{Q}^{1/2}\|_{L_{\text{HS}}(H)}^2 
% \end{align*}
Letting $\mathcal{T}=\sigma(s,g^{t,g}(s))-\sigma(s,g^{t,\widetilde{g}}(s))$ and $x=\tau-s$, we find
from the Lipschitz continuity of $\sigma$ in \eqref{eq:sigma-lipschitz}
\begin{align*}
\|\mathcal{S}_{\tau-s}&\left(\sigma(s,g^{t,g}(s))-\sigma(s,g^{t,\widetilde{g}}(s))\right)\mathcal{Q}^{1/2}\|^2_{L_{\text{HS}}(H,H_{\alpha})} \\
&\qquad\qquad\leq c^2\|\mathcal{Q}^{1/2}\|^2_{L_{\text{HS}}(H)}
\|\sigma(s,g^{t,g}(s))-\sigma(s,g^{t,\widetilde{g}}(s))\|_{\text{op}}^2\\
&\qquad\qquad\leq c^2 K^2(s)\|\mathcal{Q}^{1/2}\|^2_{L_{\text{HS}}(H)}
\|g^{t,g}(s)-g^{t,\widetilde{g}}(s)\|^2_{\alpha}\,.
\end{align*}
But $K$ is an increasing function in the Lipschitz continuity of $\sigma$, so $K(s)\leq K(\tau)$. Hence,
\begin{align*}
\E&\left[\|\int_t^{\tau}\mathcal{S}_{\tau-s}\left(\sigma(s,g^{t,g}(s))-\sigma(s,g^{t,\widetilde{g}}(s))\right)\,d\mathbb{L}(s)\|_{\alpha}^2\right] \\
&\qquad\qquad\qquad\leq c^2 K^2(\tau)\|\mathcal{Q}^{1/2}\|^2_{L_{\text{HS}}(H)}
\int_t^{\tau}\E\left[\|g^{t,g}(s)-g^{t,\widetilde{g}}(s)\|^2_{\alpha}\right]ds\,.
\end{align*}
If we now apply the elementary inequality $(a+b)^2\leq 2a^2+2b^2$, we derive
\begin{align*}
\E\left[\|g^{t,g}(\tau)-g^{t,\widetilde{g}}(\tau)\|_{\alpha}^2\right]&\leq 2c^2
\|g-\widetilde{g}\|_{\alpha}^2 \\
&\qquad\qquad+2c^2\|\mathcal{Q}^{1/2}\|^2_{L_{\text{HS}}(H)}K^2(\tau)
\int_t^{\tau}\E\left[\|g^{t,g}(s)-g^{t,\widetilde{g}}(s)\|_{\alpha}^2\right]\,ds\,.
\end{align*}
Gr\"onwall's inequality then yields,
$$
\E\left[\|g^{t,g}(\tau)-g^{t,\widetilde{g}}(\tau)\|_{\alpha}^2\right]\leq2c\e^{2c
\|\mathcal{Q}^{1/2}\|_{L_{\text{HS}}(H)}^2K^2(\tau)(\tau-t)}\|g-\widetilde{g}\|_{\alpha}^2\,.
$$
From Jensen's inequality we thus derive
$$
|V(t,g)-V(t,\widetilde{g})|\leq C_{\mathcal{P}}\sqrt{2c}
\e^{(2cK^2(\tau)\|\mathcal{Q}^{1/2}\|_{L_{\text{HS}}(H)}^2\tau}\|g-\widetilde{g}\|_{\alpha}\,,
$$
and the result follows.
\end{proof}
The Proposition shows that the option price is uniformly Lipschitz continuous in the initial 
forward curve as long as we consider Lipschitz continuous payoff functions and 
volatility operators $\sigma$. 
%In fact,
%we can trace back the Lipschitz constant from the estimation steps in the proof above, although it may not 
%be optimal. 
We remark that put and call options have Lipschitz continuous payoff functions. One immediate interpretation of the uniform Lipschitz property of the functional $g\mapsto V(t,g)$ is that the option price is 
stable with respect to small perturbations in the initial curve $g$. This means in practical terms that the 
option price is robust towards small errors in the specification of the initial curve. It is important to notice
that we only have available a discrete set of forward prices in practice, and thus the specification of the 
initial curve $g$ may be prone to error as it is not perfectly observable.

Another interesting application of Prop.~\ref{prop:V-lipschitz} is the majorization of the option pricing error in case we 
wish to compute the price for a finite dimensional projection of the infinite dimensional curve $g$.
Recall that from a practical market perspective, we only have knowledge of a finite subset of values from 
the whole curve $g$. This is the situation we discuss now:

Let $\{e_k\}_{k\in\mathbb{N}}$ be an orthonormal basis of $H_{\alpha}$, and define the projection operator $\Gamma_n:H_{\alpha}\rightarrow H^n_{\alpha}$
by
\begin{equation}
\Gamma_ng=\sum_{k=1}^n\langle g,e_k\rangle_{\alpha}e_k\,,
\end{equation}
where $H^n_{\alpha}$ is the $n$-dimensional subspace of $H_{\alpha}$ spanned by the basis $\{e_1,\ldots,e_n\}$. The 
option price with $\Gamma_ng$ as initial curve becomes $V_n(t,g):=V(t,\Gamma_ng)$, and we find from 
Prop.~\ref{prop:V-lipschitz} that 
$$
\sup_{t\leq\tau}|V(t,g)-V_n(t,g)|\leq C\|g-\Gamma_ng\|_{\alpha}\,.
$$
But, when $n\rightarrow\infty$ it follows from Parseval's identity
$$
\|g-\Gamma_ng\|_{\alpha}^2=\sum_{k=n+1}^{\infty}|\langle g,e_k\rangle_{\alpha}|^2\rightarrow 0\,,
$$
and we can approximate $V(t,g)$ within a desirable error by choosing $n$ sufficiently big. Note that with
$$
\widehat{V}(t,x_1,\ldots,x_n):=V(t,\sum_{k=1}^nx_ke_k)
$$ 
we have that $V_n(t,g)=\widehat{V}(t,\langle g,e_1\rangle_{\alpha},\ldots,\langle g,e_n\rangle_{\alpha})$. We can
view $\widehat{V}(t,x_1,\ldots,x_n)$ as the option price on the $H_{\alpha}$-valued stochastic process $g$ which is 
started in the finite-dimensional subspace $H_{\alpha}^n$ at time $t$ with the 
values $\langle\Gamma_ng,e_k\rangle_{\alpha}=x_k, k=1,\ldots,n$. By the dynamics of $g$ we have no guarantee
that the process $g$ will remain in $H_{\alpha}^n$, so that at time $\tau$ we have in general that 
$g^{t,\Gamma_ng}(\tau)\notin H_{\alpha}^n$. Indeed, it may truly be an infinite dimensional object and thus not
in any $H_{\alpha}^m$, $m\in\N$. Furthermore, it is important to note that such an approximation
$\Gamma_n g$ typically fails to be a martingale under the pricing measure $Q$, and hence the option price $V_n(t,g)$ will not
be arbitrage-free. In a forthcoming paper~\cite{BK-martingale}, we study arbitrage-free finite dimensional approximations.

\subsection{The arithmetic Gaussian case}
Suppose that $g$ solves the simple linear Musiela equation
\begin{equation}
\label{eq:spde-arithmetic}
dg(t)=\partial_x g(t)\,dt+\sigma(t)\,d\mathbb{B}(t)
\end{equation}
where $\mathbb{B}$ is an $H$-valued Wiener process with covariance operator $\mathcal{Q}$
and $H$ being a separable Hilbert space. 
%An $H$-valued Wiener process $\mathbb{B}$ is an 
%$H$-valued L\'evy process for which the increments are Gaussian, that is, 
%$\langle\mathbb{B}(t)-\mathbb{B}(s),u\rangle$ is a Gaussian random variable on the real ine for 
%every $u\in H$ and $t>s\geq 0$.  
The volatility $\sigma$ is assumed to be a stochastic process 
$\sigma:\mathbb{R}_+\mapsto L(H, H_{\alpha})$, where  
$\sigma\in\mathcal{L}^2_{\mathbb{B}}(H_{\alpha})$, 
the space of integrands for the stochastic integral with respect to $\mathbb{B}$ 
(see Sect.~8.2 in Peszat and Zabczyk~\cite{peszat.zabczyk.07}). This is indeed
a special case of the general Markovian dynamics presented above, and the mild solution becomes 
\begin{equation}
g(\tau)=\mathcal{S}_{\tau-t}g(t)+\int_t^{\tau}\mathcal{S}_{\tau-s}\sigma(s)\,d\mathbb{B}(s)\,,
\end{equation} 
for $\tau\geq t$. We now analyse $V(t)$ defined in \eqref{option-val-formula} and 
\eqref{option-val-formula-x=0} for this particular 
dynamics. First recall from Lemma \ref{lemma:forward-dynamics} that
 \begin{align*}
 F(\tau,T_1,T_2) = \delta_{T_1-t}\mathcal D_\ell^w g(t)+\int_t^\tau\delta_{T_1-s}\mathcal D_\ell^w\sigma(s)\,d\mathbb{B}(s) 
 \end{align*}
 for any $t\in[0,\tau]$.

It follows from Theorem 2.1 in Benth and Kr\"uhner~\cite{BK-HJM} that 
\begin{align}\label{eq:general-price}
 F(\tau,T_1,T_2) &= \delta_{T_1-t}\mathcal D_\ell^w g(t)+\int_t^\tau\widetilde \sigma(s)\,dB(s)
\end{align}
for any $t\in[0,\tau]$ where
$$\widetilde\sigma^2(s) = (\delta_{T_1-s}\mathcal D_\ell^w\sigma(s)\mathcal{Q}\sigma^*(s)(\delta_{T_1-s}\mathcal D_\ell^w)^*)(1)\,$$
and $B$ is a standard Brownian motion.

This implies
 \begin{align*}
  V(t,g(t)) &= e^{-r(\tau-t)} \mathbb E[p(F(\tau,T_1,T_2))]
 \end{align*}
 
We find the following particular result for $V$ in the case of a non-random volatility:
\begin{proposition}\label{prop:general-price}
Let $\sigma$ be non-random. Then we have, 
$$
V(t,g)=\e^{-r(\tau-t)}\E\left[p(m(g)+\xi X)\right]\,.
$$
for any for $t\leq\tau\leq T_1$. Here, $X$ is a standard normal distributed random variable,
$$
\xi^2:=\int_t^{\tau}\widetilde{\sigma}^2(s)\,ds=\int_t^{\tau}(\delta_{T_1-s}\mathcal D_\ell^w\sigma(s)\mathcal{Q}\sigma^*(s)(\delta_{T_1-s}\mathcal D_\ell^w)^*)(1)\,ds\,,
$$
for any $t\in[0,\tau]$ and 
$$
m(g):=(\delta_{T_1-t}\circ\mathcal{D}^w_{\ell})(g)\,,\quad g\in H_\alpha.
$$
\end{proposition}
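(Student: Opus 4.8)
The plan is to read off the conditional law of $F(\tau,T_1,T_2)$ from the scalar representation \eqref{eq:general-price} and substitute it into the pricing formula $V(t,g)=\e^{-r(\tau-t)}\E[p(F(\tau,T_1,T_2))]$. Recall that \eqref{eq:general-price} gives
\begin{equation*}
F(\tau,T_1,T_2)=\delta_{T_1-t}\mathcal{D}_\ell^w g(t)+\int_t^\tau\widetilde\sigma(s)\,dB(s),
\end{equation*}
where $B$ is a standard one-dimensional Brownian motion and $\widetilde\sigma^2(s)=(\delta_{T_1-s}\mathcal{D}_\ell^w\sigma(s)\mathcal{Q}\sigma^*(s)(\delta_{T_1-s}\mathcal{D}_\ell^w)^*)(1)$. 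Starting the process in $g(t)=g$, the first summand is exactly the deterministic number $m(g)=(\delta_{T_1-t}\circ\mathcal{D}_\ell^w)(g)$, so the entire randomness of $F(\tau,T_1,T_2)$ is carried by the scalar stochastic integral.

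First I would observe that the hypothesis that $\sigma$ is non-random makes $s\mapsto\widetilde\sigma^2(s)$ a deterministic function, whence $\widetilde\sigma=\sqrt{\widetilde\sigma^2}\geq 0$ is deterministic as well. The integrand of $\int_t^\tau\widetilde\sigma(s)\,dB(s)$ is therefore deterministic, so this Wiener integral is a centered Gaussian random variable, independent of $\mathcal{F}_t$, with variance
\begin{equation*}
\E\left[\left(\int_t^\tau\widetilde\sigma(s)\,dB(s)\right)^2\right]=\int_t^\tau\widetilde\sigma^2(s)\,ds=\xi^2
\end{equation*}
by the It\^o isometry. Consequently, conditionally on $\mathcal{F}_t$ (equivalently, given $g(t)=g$) we have $F(\tau,T_1,T_2)\sim N(m(g),\xi^2)$, i.e. $F(\tau,T_1,T_2)$ equals $m(g)+\xi X$ in law for a standard normal $X$.

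Finally I would insert this distributional identity into $V(t,g)=\e^{-r(\tau-t)}\E[p(F(\tau,T_1,T_2))]$; since the expectation of $p(F(\tau,T_1,T_2))$ depends only on the law of $F(\tau,T_1,T_2)$, it equals $\E[p(m(g)+\xi X)]$, which is the claim. Well-definedness of the expectation follows because $p$ is of at most linear growth and $X$ has moments of all orders, consistent with the general well-definedness established via Proposition~\ref{prop:payoff-repr}. The only genuinely substantive step is the identification of the scalar Wiener integral as a centered Gaussian, independent of $\mathcal{F}_t$, with variance $\xi^2$; this rests on the representation theorem (Theorem 2.1 in Benth and Kr\"uhner~\cite{BK-HJM}) producing a \emph{standard} Brownian motion $B$ together with a \emph{deterministic} integrand $\widetilde\sigma$, after which the Gaussianity and the variance computation are standard facts about Wiener integrals of deterministic functions.
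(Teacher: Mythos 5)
Your proposal is correct and follows essentially the same route as the paper: both read off the scalar representation \eqref{eq:general-price}, observe that a non-random $\sigma$ makes the Wiener integral a centered Gaussian with variance $\xi^2$ by the It\^o isometry, and use the independent-increment property to condition on $\mathcal{F}_t$. Your write-up merely spells out the steps the paper leaves as "straightforward."
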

\begin{proof}
In the case of $\sigma$ being non-random, we find that the stochastic integral
$\int_t^{\tau}\widetilde{\sigma}(s)\,dB(s)$ in \eqref{eq:general-price} is a
centered normal distributed random variable. The variance is $\xi^2$, which follows 
straightforwardly by the It\^o isometry. 
By the independent increment property of Brownian motion, the result follows.
\end{proof}

In order to compute the realized variance $\xi^2$ in the Proposition above, we must find
the dual operator of $\delta_{T_1-s}\circ\mathcal D^w_{\ell}$. Obviously it holds that
$$
(\delta_{T_1-s}\circ\mathcal D^w_{\ell})^*=\mathcal{D}^{w*}_{\ell}\circ\delta^*_{T_1-s}\,.
$$
The dual operator of $\delta_y$ is found in Filipovic~\cite{filipovic} (see also Lemma~3.1
in \cite{BK-HJM}), and is the mapping $\delta_y^*:\R\mapsto H_{\alpha}$
defined as
\begin{equation}
\label{dual-delta}
\delta_y^*(c):x\mapsto c+c\int_0^{y\wedge x}\alpha^{-1}(u)\,du:=ch_y(x)
\end{equation}
for $c\in\R$ and $x\geq 0$ and 
\begin{equation}
\label{hx-function}
h_y(x)=1+\int_0^{y\wedge x}\alpha^{-1}(u)\,du\,.
\end{equation} 
Thus, $\delta_{T_1-s}^*(1)$ is the function
\begin{equation}
\delta_{T_1-s}^*(1)(x)=h_{T_1-s}(x)=1+\int_0^{(T_1-s)\wedge x}\alpha^{-1}(u)\,du\,,
\end{equation}
for $x\geq 0$. Now we are left to derive the function $\mathcal{D}^{w*}_{\ell}(h_{T_1-s})$.

\begin{proposition}
 With the preceding notations we have
  $$ \mathcal{D}^{w*}_{\ell}(h_{T_1-s})(x) = W_\ell(\ell)h_{T_1-s}(x) + \int_0^x \frac{q_\ell^w(T_1-s,z)}{\alpha(z)}dz $$
  for any $s\in[0,T_1]$, $x\geq0$.
\end{proposition}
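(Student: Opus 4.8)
The plan is to exploit the additive structure of $\mathcal{D}_\ell^w$ from \eqref{def-D_ell-op}. Since the adjoint is linear and the identity operator is self-adjoint, we have $\mathcal{D}_\ell^{w*} = W_\ell(\ell)\,\mathrm{Id} + \mathcal{I}_\ell^{w*}$, so the first term immediately contributes $W_\ell(\ell)\,h_{T_1-s}(x)$ and it remains only to identify $\mathcal{I}_\ell^{w*}(h_{T_1-s})$. The boundedness of $\mathcal{I}_\ell^w$, and hence the existence of a well-defined adjoint, is guaranteed by Proposition~\ref{prop:cont-I}.

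To compute $\mathcal{I}_\ell^{w*}(h_{T_1-s})$, I would pair $\mathcal{I}_\ell^w g$ against $h_{T_1-s}$ for an arbitrary $g\in H_\alpha$ and read off the Riesz representative. Writing $y_0:=T_1-s$ and $\xi:=\mathcal{I}_\ell^w g$, the decisive observation is the reproducing property of $h_{y_0}$ from \eqref{hx-function}: one has $h_{y_0}(0)=1$ and $\alpha(x)h_{y_0}'(x)=\mathbf{1}_{[0,y_0]}(x)$. Substituting this into the inner product collapses it to a point evaluation,
\[
\langle \mathcal{I}_\ell^w g,\,h_{y_0}\rangle = \xi(0)+\int_0^{y_0}\xi'(x)\,dx = \xi(y_0) = \int_0^\infty q_\ell^w(y_0,z)\,g'(z)\,dz,
\]
using the fundamental theorem of calculus and the definition \eqref{def-I_ell-op} of $\mathcal{I}_\ell^w$.

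It then remains to recognize the right-hand side as $\langle g,\psi\rangle$ for a suitable $\psi\in H_\alpha$. Comparing with $\langle g,\psi\rangle = g(0)\psi(0)+\int_0^\infty\alpha(z)\psi'(z)g'(z)\,dz$, I would choose $\psi(0)=0$ and $\psi'(z)=q_\ell^w(y_0,z)/\alpha(z)$, i.e. $\psi(x)=\int_0^x q_\ell^w(y_0,z)/\alpha(z)\,dz$, which yields the claimed formula after adding back the identity contribution. That $\psi$ genuinely lies in $H_\alpha$ follows from the bound $0\leq q_\ell^w\leq c\ell$ established in the proof of Proposition~\ref{prop:cont-I} together with $\int_0^\infty\alpha^{-1}(y)\,dy<\infty$, since then $\int_0^\infty\alpha(z)\psi'(z)^2\,dz=\int_0^\infty q_\ell^w(y_0,z)^2/\alpha(z)\,dz<\infty$. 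The main (and essentially only) subtlety is the reproducing-kernel reduction in the second step; once the inner product is turned into the point evaluation $\xi(y_0)$, identifying the adjoint is a direct matching of the resulting functional of $g'$, and no genuinely hard estimate arises.
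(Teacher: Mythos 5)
Your argument is correct and rests on the same two ingredients as the paper's proof: the reproducing-kernel identity $\langle f,h_y\rangle=f(y)$ and the explicit form $\alpha(x)h_y'(x)=\mathbf{1}_{[0,y]}(x)$. The paper organizes it as the one-line duality chain $\mathcal{D}^{w*}_{\ell}(h_{T_1-s})(x)=\langle h_{T_1-s},\mathcal{D}^{w}_{\ell}h_x\rangle=(\mathcal{D}^{w}_{\ell}h_x)(T_1-s)$ and then evaluates, whereas you split off the identity part and read off the Riesz representative of $g\mapsto\xi(T_1-s)$; this is a mirror-image of the same computation and both are valid.
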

\begin{proof}
  Let $x\geq0$ and $s\in[0,T_1]$. Then we have
  \begin{align*}
     \mathcal{D}^{w*}_{\ell}(h_{T_1-s})(x) &= \< \mathcal{D}^{w*}_{\ell}(h_{T_1-s}), h_x\> \\
                                           &= \< h_{T_1-s}, \mathcal{D}^{w}_{\ell}h_x\> \\
                                           &= \mathcal{D}^{w}_{\ell}h_x(T_1-s) \\
                                           &= W_\ell(\ell)h_{T_1-s}(x) + \int_0^x \frac{q_\ell^w(T_1-s,z)}{\alpha(z)}dz.
  \end{align*}
\end{proof}

If we define $\Sigma(s):= \widetilde\sigma(s)Q\widetilde\sigma^*(s)$ and if we want to apply Proposition \ref{prop:general-price}, then we need to calculate
 $$ \zeta^2 = \int_t^{\tau}(\delta_{T_1-s}\mathcal D_\ell^w)\Sigma(s)(\delta_{T_1-s}\mathcal D_\ell^w)^*(1)ds. $$
With a representation for $(\delta_{T_1-s}\mathcal D_\ell^w)^*(1)=\mathcal{D}^{w*}_{\ell}(h_{T_1-s})$ at hand we still need to calculate the operator $\delta_{T_1-s}\mathcal D_\ell^w$. However, we simply have
 $$ \delta_{T_1-s}\mathcal D_\ell^wg = W_\ell(\ell)g(T_1-s)+\int_0^{\infty}q_\ell^w(T_1-s,y)g'(y)dy $$
for any $g\in H_\alpha$. $Q$ and $\sigma$ are -- of course -- up to the modellers choice. However, after $\sigma$ and $Q$ have been picked one does need to calculate $\sigma^*(s)$. The following proposition gives a simple formula for calculating the dual operator of a given operator. As a side remark, the next proposition also shows that any linear operator $\mathcal T=(\mathcal T^*)^*$ on $H_\alpha$ is the sum of an integral operator and an operator which 'only' acts on the initial value of the inserted function.
\begin{proposition}
\label{l:integraldarstellung}
Let $\mathcal T\in L(H_\alpha)$. Then
$$ 
\mathcal T^*g(x) = g(0)\eta(x) + \int_0^\infty q(x,y)g'(y)dy,\quad g\in H_\alpha\,, 
$$
where
\begin{align*}
\eta(x) := (\mathcal Th_x)(0) = (\mathcal T^*h_0)(x),\\
q(x,y) := (\mathcal Th_x)'(y)\alpha(y)\,,
\end{align*}
for any $x,y\geq0$ and $h_x$ is defined in \eqref{hx-function}.
\end{proposition}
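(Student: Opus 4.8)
The plan is to exploit the reproducing kernel structure of $H_\alpha$. The key fact, already recorded in the excerpt through \eqref{dual-delta} and \eqref{hx-function}, is that $h_x=\delta_x^*(1)$. Consequently, for every $g\in H_\alpha$ and every $x\geq 0$ one has the reproducing identity
\[
g(x)=\delta_x(g)=\langle g,\delta_x^*(1)\rangle_\alpha=\langle g,h_x\rangle_\alpha .
\]
This single identity, together with the definition of the inner product, will drive the entire argument.

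First I would apply the reproducing identity to the function $\mathcal T^*g$ in place of $g$, obtaining $\mathcal T^*g(x)=\langle \mathcal T^*g,h_x\rangle_\alpha$, and then shift $\mathcal T^*$ to the other slot of the inner product via the adjoint property, giving $\mathcal T^*g(x)=\langle g,\mathcal T h_x\rangle_\alpha$. Since $\mathcal T\in L(H_\alpha)$ is bounded, $\mathcal T h_x\in H_\alpha$, so I may expand this inner product by its very definition:
\[
\langle g,\mathcal T h_x\rangle_\alpha
= g(0)\,(\mathcal T h_x)(0)+\int_0^\infty \alpha(y)\,g'(y)\,(\mathcal T h_x)'(y)\,dy .
\]
Reading off the coefficient of $g(0)$ as $\eta(x):=(\mathcal T h_x)(0)$ and writing $q(x,y):=\alpha(y)(\mathcal T h_x)'(y)$, this is precisely the claimed representation. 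The integral is absolutely convergent by the Cauchy--Schwarz inequality, since both $g$ and $\mathcal T h_x$ belong to $H_\alpha$ and hence $\sqrt{\alpha}\,g'$ and $\sqrt{\alpha}\,(\mathcal T h_x)'$ lie in $L^2(\R_+)$.

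It then remains to verify the alternative formula $\eta(x)=(\mathcal T^* h_0)(x)$. Here I would first note that $h_0\equiv 1$ by \eqref{hx-function}, and compute, using the reproducing identity and the adjoint property once more,
\[
(\mathcal T^* h_0)(x)=\langle \mathcal T^* h_0,h_x\rangle_\alpha=\langle h_0,\mathcal T h_x\rangle_\alpha .
\]
Because $h_0$ is constant, its derivative vanishes, so the integral term in the inner product expansion drops out and $\langle h_0,\mathcal T h_x\rangle_\alpha=(\mathcal T h_x)(0)=\eta(x)$, as required. I do not anticipate any serious obstacle: everything reduces to the reproducing identity $g(x)=\langle g,h_x\rangle_\alpha$ and the definition of the Filipovic inner product. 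The only point deserving a moment's care is the legitimacy of the pointwise manipulations, namely evaluating the image function at $x$ and reading off the kernel, but these are justified by $\mathcal T h_x\in H_\alpha$ and the continuity of the point evaluations $\delta_x$ on $H_\alpha$.
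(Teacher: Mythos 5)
Your argument is correct and is essentially identical to the paper's own proof: both apply the reproducing identity $g(x)=\langle g,h_x\rangle_\alpha$ (Filipovic, Lemma 5.3.1) to $\mathcal T^*g$, move $\mathcal T$ across the inner product by adjointness, and expand the Filipovic inner product to read off $\eta$ and $q$. Your additional verification of $\eta(x)=(\mathcal T^*h_0)(x)$ and the integrability remark are small, correct supplements that the paper leaves implicit.
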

\begin{proof}
Filipovic~\cite[Lemma 5.3.1]{filipovic} shows that $g(x) = \<g,h_x\>$ for any $g\in H_\alpha$, $x\geq0$. Hence
\begin{align*}
       \mathcal T^*g(x) &= \<\mathcal T^*g,h_x\> \\
                      &= \<g,\mathcal Th_x\> \\
%                       &= g(0)\mathcal T^*h_x(0) + \int_0^\infty g'(y) (\mathcal T^*h_x)'(y)w(y)dy \\
                      &= g(0)\mathcal Th_x(0) + \int_0^\infty g'(y) (\mathcal Th_x)'(y)\alpha(y)\,dy \\
                      &= g(0)\eta(x) + \int_0^\infty q(x,y) g'(y)\,dy\,,
\end{align*}
for any $g\in H_\alpha$, $x\geq 0$. This proves the result.
\end{proof}

Let us next move our attention to the so-called "delta" of the option price in Prop.~\ref{prop:general-price}. We 
define the "delta" to be the G\^ateaux derivative of the price $V(t,g(t))$ along some direction
$h\in H_{\alpha}$. This will measure how sensitive the price functional is to perturbations
along $h$ of the forward curve $g(t)$.  We have the following result:
\begin{proposition}
Assume $\sigma$ is non-random. Then the G\^ateaux derivative of $V(t,g(t))$ in
direction $h\in H_{\alpha}$ is
$$
D_hV(t,g)=\frac1{\xi}m(h)\E\left[p(m(g)+\xi X) X\right]
$$
with $m(g)$ and $\xi$ defined in Prop.~\ref{prop:general-price}.  
\end{proposition}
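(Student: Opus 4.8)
The plan is to exploit that $m=\delta_{T_1-t}\circ\mathcal{D}_\ell^w$ is a \emph{bounded linear} functional on $H_\alpha$, so that perturbing the initial curve merely shifts the mean of the Gaussian random variable appearing in Proposition~\ref{prop:general-price}: $m(g+\epsilon h)=m(g)+\epsilon\,m(h)$. Thus along the direction $h$ the price reads
$$
V(t,g+\epsilon h)=\e^{-r(\tau-t)}\E\left[p\big(m(g)+\epsilon\,m(h)+\xi X\big)\right],
$$
and the G\^ateaux derivative $D_hV(t,g)$ is just the $\epsilon$-derivative of this scalar function at $\epsilon=0$. Since $p$ is only assumed measurable of at most linear growth, I cannot differentiate $p$ directly; instead I will move the $\epsilon$-dependence off the payoff and onto the smooth Gaussian density, a Gaussian integration-by-parts (Stein-type) argument.

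Carrying this out, I write the expectation as an integral against the standard normal density $\varphi$ and change variables $z=\xi x$ so that it becomes an integral against $\psi_\xi$, the density of $\xi X\sim N(0,\xi^2)$; a further translation $z\mapsto z-\epsilon\,m(h)$ gives
$$
V(t,g+\epsilon h)=\e^{-r(\tau-t)}\int_{\R}p\big(m(g)+z\big)\,\psi_\xi\big(z-\epsilon\,m(h)\big)\,dz,
$$
in which the entire $\epsilon$-dependence now sits inside the infinitely differentiable density. Differentiating under the integral sign at $\epsilon=0$ and using $\psi_\xi'(z)=-(z/\xi^2)\psi_\xi(z)$ yields
$$
D_hV(t,g)=\e^{-r(\tau-t)}\,\frac{m(h)}{\xi^2}\int_{\R}p\big(m(g)+z\big)\,z\,\psi_\xi(z)\,dz=\e^{-r(\tau-t)}\,\frac{m(h)}{\xi}\,\E\left[p\big(m(g)+\xi X\big)\,X\right],
$$
where in the last step I reinterpret $z$ as $\xi X$. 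This is the asserted formula, the constant discount factor $\e^{-r(\tau-t)}$ from the definition of $V$ passing unchanged through the linear derivative.

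The one genuine obstacle is justifying the interchange of differentiation and integration, since $p$ is not smooth. This is where the change of variables pays off: the $\epsilon$-derivative of the integrand equals $-m(h)\,p(m(g)+z)\,\psi_\xi'(z-\epsilon\,m(h))$, and for $\epsilon$ ranging over a bounded neighbourhood of $0$ this is dominated, uniformly in $\epsilon$, by a constant times $(1+|z|)$ (linear growth of $p$) multiplied by a polynomial-times-Gaussian factor, which is integrable on $\R$. Dominated convergence then legitimises the differentiation, and no differentiability of $p$ is needed, only its linear growth together with the finiteness of all moments of the Gaussian. The argument of course requires $\xi>0$, so that the law of $m(g)+\xi X$ genuinely possesses a density; if $\xi=0$ the random term degenerates and the G\^ateaux derivative exists only when $p$ itself is differentiable, which is precisely why the factor $1/\xi$ appears in the statement.
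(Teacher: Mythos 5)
Your argument is correct and is essentially the paper's own proof: both use the density method, moving the perturbation off the (merely measurable) payoff and onto the Gaussian density via a change of variables and then differentiating the density, your identity $\psi_\xi'(z)=-(z/\xi^2)\psi_\xi(z)$ playing exactly the role of $\phi'(x)=-x\phi(x)$ in the paper. Your explicit dominated-convergence justification and the observation that $\xi>0$ is required are welcome refinements the paper glosses over; note only that you correctly retain the discount factor $\e^{-r(\tau-t)}$, which the proposition's stated formula (and the paper's own computation) silently omits.
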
  
\begin{proof}
We apply the so-called density method (see Glasserman~\cite{G}) along with 
properties of the G\^ateaux derivative.
For $g\in H_{\alpha}$, it holds after a change of variables,
\begin{align*}
V(t,g)&=\int_{\R}p(m(g)+\xi x)\phi(x)\,dx
=\frac1{\xi}\int_{\R}p(y)\phi\left(\frac{y-m(g)}{\xi}\right)\,dy\,,
\end{align*}
where $\phi$ is the standard normal probability density function. 
By the linear growth of $p$ and integrability properties of the normal
density function $\phi$, it follows that 
\begin{align*}
D_hV(t,g)&=\frac1{\xi}\int_{\R}p(y)D_h\phi\left(\frac{y-m(g)}{\xi}\right)\,dy \\
&=\frac1{\xi}\int_{\R}p(y)\phi'\left(\frac{y-m(g)}{\xi}\right)\left(-\frac1{\xi}\right)
D_hm(g)\,dy \\
&=\frac{1}{\xi^2}D_hm(g)\int_{\R}p(y)\left(\frac{y-m(g)}{\xi}\right)
\phi\left(\frac{y-m(g)}{\xi}\right)\,dy \\
&=\frac1{\xi}D_hm(g)\int_{\R}p(m(g)+\xi x)x\phi(x)\,dx \\
&=\frac{1}{\xi}D_hm(g)\E\left[p(m(g)+\xi X) X\right]\,.
\end{align*}
But obviously 
$$
D_hm(g)=\frac{d}{d\epsilon}m(g+\epsilon h)_{\epsilon=0}=\frac{d}{d\epsilon}(m(g)+\epsilon m(h))_{\epsilon=0}=m(h)\,,
$$
and the Proposition follows.
\end{proof}

It is interesting to note here that the delta computed in the Proposition above gives the sensitivity of the option
price to perturbations in the direction of a "forward curve" $h$. As mentioned earlier, the market only quotes forward
prices for a finite set of delivery periods, and not for all delivery times. Hence, we do not have the forward curve 
accessible. Indeed, we do not know $g(t)$ at time $t$, but only a finite set of values of swap prices, which 
is equivalent to a finite set of linear functionals on integral operators applied to $g$. 
It is market practice to "extract" such a curve by appealing to some smoothing 
techniques (see for example Benth, Koekebakker and Ollmar~\cite{BKO} for a spline 
approach). From given observations of delivery-period swap prices, one constructs
a forward curve of continuous delivery times. This will then give "the observed curve" $g(t)$ at time $t$. 
Remark that we need to have this curve accessible to price the option at time $t$, as we can see from Prop.~\ref{prop:general-price}. The extraction of such a curve from observations is by far a uniquely defined 
object (one can choose several different ways to produce such a curve), 
and as such it is crucial to use the expression for the delta to see how sensitive the price is towards perturbations of 
it.

We find the following explicit result for the price and sensitivity (delta) of call options:
\begin{proposition}
The price of a call option with strike $K$ and exercise time $\tau\leq T_1$ is
$$
V(t,g(t))=\xi\phi\left(\frac{m(g(t))-K}{\xi}\right)+(m(g(t))-K)\Phi
\left(\frac{m(g(t))-K}{\xi}\right)\,,
$$
where $\xi$ and $m(g)$ are defined in Prop.~\ref{prop:general-price}, 
$\Phi(x)$ is the cumulative standard normal distribution function and 
$\phi$ its density, i.e. $\Phi'(x)=\phi(x)$. Moreover,
$$
D_hV(t,g(t))=m(h)\Phi\left(\frac{m(g(t))-K}{\xi}\right)\,,
$$
for any $h\in H_{\alpha}$. 
\end{proposition}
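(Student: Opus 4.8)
The plan is to specialize the two preceding propositions to the call payoff $p(x)=\max(x-K,0)$, which is Lipschitz and of at most linear growth, so that both Proposition~\ref{prop:general-price} and the subsequent G\^ateaux-derivative formula apply. Writing $m=m(g(t))$ and $d=(m-K)/\xi$, the price reduces to the Gaussian integral $\E[(m+\xi X-K)^+]$, while the delta reduces, via the previous proposition, to $\frac1\xi m(h)\,\E[(m+\xi X-K)^+X]$. Both are elementary integrals against the standard normal density $\phi$, so the proof is essentially a careful evaluation of these two expectations, the first yielding the price formula and the second the delta.

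For the price, I would first note that $m+\xi X-K=\xi(X+d)$ and, since $\xi>0$, that $(m+\xi X-K)^+=\xi(X+d)^+$. Restricting the integration to the region $\{X>-d\}$ on which the integrand is positive gives
$$
\E[(m+\xi X-K)^+]=\xi\int_{-d}^{\infty}(x+d)\phi(x)\,dx=\xi\int_{-d}^{\infty}x\phi(x)\,dx+\xi d\int_{-d}^{\infty}\phi(x)\,dx\,.
$$
The identity $\phi'(x)=-x\phi(x)$ yields $\int_{-d}^{\infty}x\phi(x)\,dx=\phi(d)$ (using that $\phi$ is even), while $\int_{-d}^{\infty}\phi(x)\,dx=\Phi(d)$. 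Since $\xi d=m-K$, this collapses to $\xi\phi(d)+(m-K)\Phi(d)$, which is the claimed price formula with $d=(m-K)/\xi$.

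For the delta, I would insert $p(x)=(x-K)^+$ into the formula $D_hV=\frac1\xi m(h)\,\E[p(m+\xi X)X]$ supplied by the preceding proposition and evaluate $\E[(m+\xi X-K)^+X]=\int_{-d}^{\infty}(\xi x+(m-K))x\phi(x)\,dx$. The $x^2$-term is handled by integration by parts, again based on $\phi'(x)=-x\phi(x)$, giving $\int_{-d}^{\infty}x^2\phi(x)\,dx=\Phi(d)-d\phi(d)$, while the linear term contributes $(m-K)\phi(d)$. The one point requiring care---rather than a genuine obstacle---is the exact cancellation: because $\xi d=m-K$, the contributions $-\xi d\,\phi(d)$ and $(m-K)\phi(d)$ cancel, leaving $\E[(m+\xi X-K)^+X]=\xi\Phi(d)$. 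Multiplying by $\frac1\xi m(h)$ then produces $D_hV(t,g(t))=m(h)\Phi\left(\frac{m(g(t))-K}{\xi}\right)$, as asserted. Throughout I track the integration limit $-d$ and the sign conventions so that the $\phi(d)$ terms cancel exactly; the discounting factor $\e^{-r(\tau-t)}$ plays no role in these Gaussian computations and is suppressed in the same way as in the delta formula being invoked.
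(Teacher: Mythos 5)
Your proof is correct and follows essentially the same route as the paper: the price is obtained by inserting the call payoff into Prop.~\ref{prop:general-price} and evaluating the resulting Gaussian integral $\E[(m+\xi X-K)^+]$, exactly as the paper does (it merely labels this ``standard calculations''). The only small divergence is in the delta: the paper differentiates the closed-form price directly in the direction $h$, whereas you evaluate $\tfrac{1}{\xi}m(h)\,\E[p(m+\xi X)X]$ from the preceding G\^ateaux-derivative proposition; the two computations are equivalent, both hinging on the identity $\phi'(x)=-x\phi(x)$ and the cancellation coming from $\xi d=m-K$, and your version has the merit of making that cancellation explicit.
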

\begin{proof}
For a call option with strike $K$ we have $p(F)=\max(F-K,0)$. Hence,
from Prop.~\ref{prop:general-price}
$$
V(t,g(t))=\int_{\R}\max\left(m(g(t))+\xi x-K,0\right)\phi(x)\,dx\,.
$$
The formula for $V(t,g(t))$ follows from standard calculations using the properties 
of the normal distribution. As for the G\^ateaux derivative of $V$, we calculate
this directly from $V(t,g(t))$. 
\end{proof}
Note that the expression for the sensitivity of $V$ with respect to $g$ is the classical "delta" of
a call option, scaled by $m(h)$. 

As a slight extension of the option pricing theory above, we discuss a class of spread options written on forwards with 
different delivery periods. To this end, consider an option written on two forwards with delivery periods
being $[T_1^1,T_2^1]$ and $[T_1^2,T_2^2]$ respectively, where the option pays
$p(F(\tau,T_1^1,T_2^1),F(\tau,T_1^2,T_2^2))$ at exercise time $\tau\leq\min(T_1^1,T_1^2)$.
We assume that $p:\R^2\rightarrow\R$ is a measurable function of at most linear growth. 
For example, $p(x,y)=\max(x-y,0)$ will be the payoff from the spread between two forwards of different
delivery periods, a kind of calendar spread option. By following the arguments
of Prop.~\ref{prop:payoff-repr} we find that
\begin{equation}
p((F(\tau,T_1^1,T_2^1),F(\tau,T_1^2,T_2^2))=\mathcal{P}_{\ell_1,\ell_2}(T_1^1-\tau,T_1^2-\tau,g(\tau))\,,
\end{equation}
for $\mathcal{P}_{\ell_1,\ell_2}:\R_+^2\times H_{\alpha}\rightarrow\R$ defined as
\begin{equation}
\mathcal{P}_{\ell_1,\ell_2}(x,y,g)=p\circ(\delta_x\circ\mathcal{D}_{\ell_1}^w(g),\delta_y\circ\mathcal{D}_{\ell_2}^w(g))\,.
\end{equation}
Here, $\ell_i=T_2^i-T_1^i$, $i=1,2$. By the linear growth of $p$, we can show that $\mathcal{P}_{\ell_1,\ell_2}$
is at most linearly growing in $\|g\|_{\alpha}$, uniformly in $x,y$. By following the arguments for the univariate case 
above, the price of the option at time $t\leq\tau$ can
be computed as follows:
\begin{align*}
V(t,g(t))&=\e^{-r(\tau-t)}\E\left[p\left(\delta_{T_1^1-\tau}\circ\mathcal{D}_{\ell_1}^w(g(\tau)),
\delta_{T_1^2-\tau}\circ\mathcal{D}_{\ell_2}^w(g(\tau))\right)\,|\,\mathcal{F}_t\right] \\
&=\e^{-r(\tau-t)}\E\left[p(F(\tau,T_1^1,T_2^1),F(\tau,T_1^2,T_2^2))\,|\,g(t)\right] \,.
\end{align*}
Yet again, we find
 \begin{align*}
  (F(\tau,T_1^1,T_2^1),F(\tau,T_1^2,T_2^2)) &= (\delta_{T_1^1-\tau}\circ\mathcal{D}_{\ell_1}^w(g(\tau)),\delta_{T_1^2-\tau}\circ\mathcal{D}_{\ell_2}^w(g(\tau))) \\
                        &= (\delta_{T_1^1-t}\circ\mathcal{D}_{\ell_1}^w(g(t)),\delta_{T_1^2-t}\circ\mathcal{D}_{\ell_2}^w(g(t))) \\
                        &\ + \int_t^\tau (\delta_{T_1^1-s}\mathcal{D}_{\ell_1}^w\sigma(s)\,d\mathbb{B}(s),\delta_{T_1^2-t}\mathcal{D}_{\ell_2}^w\sigma(s)\,d\mathbb{B}(s)) \\
                        &= (\delta_{T_1^1-t}\circ\mathcal{D}_{\ell_1}^w(g(t)),\delta_{T_1^2-t}\circ\mathcal{D}_{\ell_2}^w(g(t)))  + \int_t^\tau \Sigma(s)dB(s) \\
 \end{align*}
where $B$ is some two-dimensional standard Brownian motion and $\Sigma(s)$ is the positive semidefinite root of
 $$ \big((\delta_{T_1^i-s}\mathcal{D}_{\ell_i}^w\sigma(s))Q(\delta_{T_1^j-s}\mathcal{D}_{\ell_j}^w\sigma(s))^*\big)_{i,j=1,2}$$
for any $s\geq 0$. The matrix $\Sigma^2(s)$ can be computed as before and appears in the formula for the realized variance. Hence,
$$ V(t,g) = \mathbb E\left[p\left((\delta_{T_1^1-t}\circ\mathcal{D}_{\ell_1}^w(g),\delta_{T_1^2-t}\circ\mathcal{D}_{\ell_2}^w(g))  + \int_t^\tau \Sigma(s)dB(s)\right)\right]$$
for any $t\in[0,\tau]$, $g\in H_\alpha$.

In conclusion, we see that we get a two-dimensional stochastic It\^o integral of a deterministic 
integrand in the expectation defining the price $V(t,g)$, yielding a bivariate Gaussian random variable. 
Therefore we can -- after computing the correlation -- represent the
option price as an expectation of a function of a bivariate Gaussian random variable. The correlation will depend on
$\mathcal{Q}$, the spatial covariance structure of the noise $\mathbb{B}$, the "volatility" $\sigma(s)$ of the 
forward curve $\sigma$, as well as the delivery periods of the two forwards. Roughly explained, we are extracting two 
pieces of the forward curve (defined by the delivery periods), and constructing a bivariate Gaussian random variable of it.
Although the expression involved becomes rather technical, we can obtain rather explicit option prices which honour
the spatial dependency structure of the forward curve.

\subsection{The geometric Gaussian case}
First, we show that the Hilbert space $H_{\alpha}$ is closed under
exponentiating:
\begin{lemma}
If $g\in H_{\alpha}$, then $\exp(g)\in H_{\alpha}$ where $\exp(g)=\sum_{n=0}^{\infty}g^n/n!$. 
\end{lemma}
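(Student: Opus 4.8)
The plan is to exploit the fact that every element of $H_\alpha$ is bounded, so that composing with the (locally Lipschitz) exponential causes no integrability trouble. The first step I would record is the sup-norm bound: for $g\in H_\alpha$ and $x\geq0$, the fundamental theorem of calculus together with the Cauchy--Schwarz inequality and the standing assumption $\int_0^\infty\alpha^{-1}(y)\,dy<\infty$ give
\[
|g(x)|\leq |g(0)|+\left(\int_0^\infty\alpha^{-1}(y)\,dy\right)^{1/2}\|g\|_{\alpha}\,.
\]
This is precisely the estimate behind Lemma~3.2 in Benth and Kr\"uhner~\cite{BK-HJM}, and it shows $\|g\|_\infty<\infty$; set $M:=\|g\|_\infty$.

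With $M<\infty$ in hand, I would work with the pointwise function $x\mapsto\e^{g(x)}$. Since $g$ is absolutely continuous with values in the compact interval $[-M,M]$, on which $\exp$ is Lipschitz, the composition $\e^{g}$ is again absolutely continuous and the chain rule gives $(\e^{g})'(x)=g'(x)\e^{g(x)}$ almost everywhere. One then estimates the norm directly:
\[
\|\e^{g}\|_{\alpha}^2=\e^{2g(0)}+\int_0^\infty\alpha(x)\,g'(x)^2\,\e^{2g(x)}\,dx\leq \e^{2M}\left(1+\|g\|_{\alpha}^2\right)\,,
\]
using $\e^{2g(x)}\leq\e^{2M}$ throughout. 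The right-hand side is finite, so $\e^{g}\in H_\alpha$.

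It remains to reconcile this with the series definition $\exp(g)=\sum_{n=0}^\infty g^n/n!$. Here I would first note that $H_\alpha$ is a Banach algebra up to an equivalent norm: for $f,h\in H_\alpha$ one has $(fh)'=f'h+fh'$, and using the boundedness of $f$ and $h$ from the first step one gets $\|fh\|_{\alpha}\leq C\|f\|_{\alpha}\|h\|_{\alpha}$ for a constant $C$ depending only on $\int_0^\infty\alpha^{-1}(y)\,dy$. Iterating yields $\|g^n\|_{\alpha}\leq C^{n-1}\|g\|_{\alpha}^n$ for $n\geq1$, whence $\sum_{n=0}^\infty\|g^n/n!\|_{\alpha}\leq 1+C^{-1}(\e^{C\|g\|_{\alpha}}-1)<\infty$, so the series converges absolutely in the Banach space $H_\alpha$; its limit coincides with the pointwise exponential because $H_\alpha$-convergence forces uniform convergence by the sup-norm bound of the first step. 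I expect the only real obstacle to be the submultiplicative estimate $\|fh\|_{\alpha}\leq C\|f\|_{\alpha}\|h\|_{\alpha}$, but this is a short Cauchy--Schwarz computation once uniform boundedness is available; the whole argument thus hinges on the embedding $H_\alpha\hookrightarrow L^\infty$ furnished by the assumption $\int_0^\infty\alpha^{-1}(y)\,dy<\infty$.
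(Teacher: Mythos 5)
Your argument is correct, and its second half (the Banach-algebra route) is exactly the paper's proof: the authors simply cite Prop.~4.18 of Benth and Kr\"uhner~\cite{BK-HJM} for the submultiplicative estimate $\|fg\|\leq\|f\|\,\|g\|$ in the equivalent norm $k_1\|\cdot\|_\alpha$ with $k_1=\sqrt{5+4k^2}$, $k^2=\int_0^\infty\alpha^{-1}(x)\,dx$, and conclude $\|\exp(g)\|\leq\exp(\|g\|)$ by the triangle inequality, whereas you sketch the short Cauchy--Schwarz derivation of that estimate from the embedding $H_\alpha\hookrightarrow L^\infty$. What you add beyond the paper is genuinely worthwhile on two counts. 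First, your direct computation $\|\e^{g}\|_{\alpha}^2=\e^{2g(0)}+\int_0^\infty\alpha(x)g'(x)^2\e^{2g(x)}\,dx\leq\e^{2M}(1+\|g\|_{\alpha}^2)$ gives a self-contained, more elementary proof that the pointwise exponential lies in $H_\alpha$, with a cleaner bound than the series estimate. Second, you explicitly identify the $H_\alpha$-limit of the series $\sum g^n/n!$ with the pointwise function $x\mapsto\e^{g(x)}$ via uniform convergence; the paper's proof silently conflates the two (it opens by observing that $x\mapsto\exp(g(x))$ is absolutely continuous and then bounds the series, without remarking that norm convergence plus the sup-norm embedding is what ties them together). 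Both steps are minor but they close small gaps the published proof leaves to the reader.
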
 
\begin{proof}
First, if $g\in H_{\alpha}$ then $x\mapsto \exp(g(x))$ is an absolutely continuous 
function from $\R_+$ into $\R_+$.
Due to Prop.~4.18 in Benth and Kr\"uhner~\cite{BK-HJM}, $H_{\alpha}$ is a Banach algebra
with respect to the norm $\|\cdot\|:=k_1\|\cdot\|_{\alpha}$, where $k_1=\sqrt{5+4k^2}$
and $k^2=\int_0^{\infty}\alpha^{-1}(x)\,dx$. I.e., if $f,g\in H_{\alpha}$, then
$\|fg\|\leq \|f\| \|g\|$. By the triangle inequality we therefore have 
$\|\exp(g)\|\leq \exp(\|g\|)<\infty$ for any $g\in H_{\alpha}$, or, in other words,
$$
\|\exp(g)\|_{\alpha}\leq\frac1{k_1}\exp(k_1\|g\|_{\alpha})<\infty\,.
$$ 
Hence, $\exp(g)\in H_{\alpha}$, and the Lemma follows. 
\end{proof}
Suppose that the forward prices are given as the exponential of a stochastic process
in $H_{\alpha}$, i.e., of the form
\begin{equation}
g(t)=\exp(\widetilde{g}(t))\,,
\end{equation}
where
\begin{equation}
\label{eq:spde-geometric}
d\widetilde{g}(t)=(\partial_x\widetilde{g}(t)+\mu(t))\,dt+\sigma(t)\,d\mathbb{B}(t)\,,
\end{equation}
where $\sigma$ and $\mathbb{B}$ are as for the stochastic partial differential equation in \eqref{eq:spde-arithmetic}, 
and $\mu$ a predictable $H_{\alpha}$-valued
stochastic process which is Bochner integrable on any finite time interval. 
To have a no-arbitrage dynamics, we must impose the drift condition (see Barth and Benth~\cite{BB})
\begin{equation}
\label{eq:drift-condition}
x\mapsto \mu(t,x):=-\frac12\|\delta_x\sigma(t)\mathcal{Q}^{1/2}\|_{L_{HS}(H,\R)}\,.
\end{equation}
We assume that this drift condition holds from now on. The following simplification of the drift
condition holds true:
\begin{lemma}
\label{lemma:drift-condition}
The drift condition for $\mu$ in \eqref{eq:drift-condition} can be expressed as
$$
\mu(t,x)=-\frac12\delta_x\sigma(t)\mathcal{Q}\sigma^*(t)\delta_x^*(1)\,.
$$
\end{lemma}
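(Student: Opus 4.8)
The plan is to unfold the Hilbert-Schmidt norm appearing in the drift condition \eqref{eq:drift-condition} and to re-express it through the adjoint of the operator $\delta_x\sigma(t)\mathcal{Q}^{1/2}$, so that only the composition of adjoints remains to be computed. First I would fix $t$ and $x$ and set $T:=\delta_x\sigma(t)\mathcal{Q}^{1/2}$. Since $\mathcal{Q}^{1/2}\in L(H)$, $\sigma(t)\in L(H,H_\alpha)$, and $\delta_x\in L(H_\alpha,\R)$ by Lemma 3.1 of \cite{BK-HJM}, the composition $T$ is a bounded linear operator from $H$ into $\R$; because the target space is one-dimensional, $T$ is automatically Hilbert-Schmidt, so the norm in \eqref{eq:drift-condition} is well-defined.

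The key identity I would exploit is that, for a Hilbert-Schmidt operator $T$ mapping into the one-dimensional space $\R$, one has $\|T\|_{L_{HS}(H,\R)}^2=\tr(TT^*)=TT^*(1)$, where $TT^*\in L(\R)$ is identified with the scalar obtained by evaluating it at $1\in\R$. Concretely, writing $Tu=\langle u,v\rangle_H$ for the Riesz representative $v\in H$, I would note that $T^*c=cv$ and hence $\|T\|_{L_{HS}(H,\R)}^2=\sum_n|\langle e_n,v\rangle|^2=\|v\|_H^2=TT^*(1)$ for any orthonormal basis $(e_n)$ of $H$. This converts the squared norm in the drift condition into the expression $TT^*(1)$.

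The remaining step is the algebra of adjoints. Using that adjoints reverse compositions and that $\mathcal{Q}^{1/2}$ is self-adjoint (being the positive square root of the positive semidefinite covariance operator $\mathcal{Q}$ from the Lemma following \eqref{eq:sigma-lingrowth}), I would compute $T^*=(\delta_x\sigma(t)\mathcal{Q}^{1/2})^*=\mathcal{Q}^{1/2}\sigma^*(t)\delta_x^*$, where $\delta_x^*$ is the map $\R\to H_\alpha$ described in \eqref{dual-delta}. Substituting gives $TT^*(1)=\delta_x\sigma(t)\mathcal{Q}^{1/2}\mathcal{Q}^{1/2}\sigma^*(t)\delta_x^*(1)=\delta_x\sigma(t)\mathcal{Q}\sigma^*(t)\delta_x^*(1)$, using $\mathcal{Q}^{1/2}\mathcal{Q}^{1/2}=\mathcal{Q}$. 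Inserting the factor $-\tfrac12$ then yields exactly the claimed form of $\mu(t,x)$.

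I expect no genuine obstacle here; the proof is essentially a bookkeeping of adjoints and a one-line fact about Hilbert-Schmidt norms into $\R$. The only point requiring a little care is the identification of the squared Hilbert-Schmidt norm with the scalar $TT^*(1)$, equivalently the reading of $\delta_x\sigma(t)\mathcal{Q}\sigma^*(t)\delta_x^*$ as an element of $L(\R)$ evaluated at $1$. For consistency one should also read the norm in \eqref{eq:drift-condition} as the \emph{squared} Hilbert-Schmidt norm, which is the classical variance term entering the exponential HJM drift.
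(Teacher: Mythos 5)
Your proposal is correct and follows essentially the same route as the paper: both unfold the (squared) Hilbert--Schmidt norm of $T=\delta_x\sigma(t)\mathcal{Q}^{1/2}$ over an orthonormal basis of $H$, identify it with the scalar $TT^*(1)$ via the adjoint identity $Te_k=\langle e_k,T^*(1)\rangle_H$, and then compute $T^*=\mathcal{Q}^{1/2}\sigma^*(t)\delta_x^*$ by reversing the composition and using self-adjointness of $\mathcal{Q}^{1/2}$. Your closing remark that the norm in \eqref{eq:drift-condition} must be read as squared is also the correct reading of what is a small typo in the paper.
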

\begin{proof}
It follows from the definition of the Hilbert-Schmidt norm that
$$
\mu(t,x)=-\frac12\sum_{k=1}^{\infty}(\delta_x\sigma(t)\mathcal{Q}^{1/2}e_k)^2\,,
$$
where $\{e_k\}_k$ is a basis of $H$. But,
\begin{align*}
(\delta_x\sigma(t)\mathcal{Q}^{1/2})(e_k)\cdot1=\langle e_k,(\delta_x\sigma(t)\mathcal{Q}^{1/2})^*(1)\rangle_H 
=\langle e_k,\mathcal{Q}^{1/2}\sigma^*(t)\delta_x^*(1)\rangle_H\,.
\end{align*}
Hence, by linearity of operators,
\begin{align*}
\mu(t,x)&=-\frac12\sum_{k=1}^{\infty}(\delta_x\sigma(t)\mathcal{Q}^{1/2}e_k)\langle e_k,\mathcal{Q}^{1/2}\sigma^*(t)\delta_x^*(1)\rangle_H \\
&=\delta_x\sigma(t)\mathcal{Q}^{1/2}(\sum_{k=1}^{\infty}\langle e_k,\mathcal{Q}^{1/2}\sigma^*(t)\delta_x^*(1)\rangle_H e_k) \\
&=\delta_x\sigma(t)\mathcal{Q}^{1/2}(\mathcal{Q}^{1/2}\sigma^*(t)\delta_x^*(1))\,.
\end{align*}
The result follows.
\end{proof}
We recall that $\delta_x^*(1)=h_x$, with the function $y\mapsto h_x(y)$ is defined in \eqref{hx-function}.  Thus, we can write $\mu(t,x)=-\delta_x\sigma(t)\mathcal{Q}\sigma^*(t)h_x(\cdot)/2$. 

%We remark that the drift condition can be expressed
%as 
%\begin{align*}
%\mu(t,x)&=-\frac12\sum_{k=1}^{\infty}(\delta_x\sigma(t)\mathcal{Q}^{1/2}e_k)^2 \\
%&=-\frac12\sum_{k=1}^{\infty}\langle(\delta_x\sigma(t)\mathcal{Q}^{1/2})^*
%(\delta_x\sigma(t)\mathcal{Q}^{1/2})e_k,e_k\rangle_{H} \\
%&=-\frac12\text{Tr}\left((\delta_x\sigma(t)\mathcal{Q}^{1/2})^*
%(\delta_x\sigma(t)\mathcal{Q}^{1/2})\right)
%\end{align*}
%where $\{e_k\}_k$ is a basis of $H$. In other words,
%$$
%\mu(t,x)=-\frac12\delta_x\sigma(t)\mathcal{Q}\sigma(t)\delta_x^*(1)\,,
%$$
%This gives a more direct drift condition, as we know that $\delta_x^*(1)=h_x$. 
%{\bf Paul, please check the above!} 

As for \eqref{eq:spde-arithmetic} in the subsection above, we have a mild solution of the stochastic partial differential equation 
\eqref{eq:spde-geometric} satisfying 
for $\tau\geq t$
$$
\widetilde{g}(\tau)=\mathcal{S}_{\tau-t}\widetilde{g}(t)+
\int_t^{\tau}\mathcal{S}_{\tau-s}\mu(s)\,ds
+\int_t^{\tau}\mathcal{S}_{\tau-s}\sigma(s)\,d\mathbb{B}(s)\,.
$$

The following lemma states the dynamics of the curve valued process $g(t):=\exp(\widetilde g(t))$, $t\geq0$, revealing 
that $g$ is Markovian as in Section \ref{sect:Markovian}.
\begin{lemma}
 Under the drift condition \eqref{eq:drift-condition} we have
  $$ g(\tau) = \mathcal{S}_{\tau-t}g(t) + \int_t^\tau \mathcal{S}_{\tau-s}\widehat\sigma(s,g(s))\,d\mathbb{B}(s)\,$$
 for any $0\leq t\leq \tau$ where $\widehat\sigma(s,g)h(x):=g(x)\sigma(s)h(x)$ for any $x\geq0$, $g,h\in H_\alpha$.
  Consequently, the forward dynamics are given by
   $$ F(\tau,T_1,T_2) = \delta_{T_1-t}\mathcal D_\ell^wg(t) + \int_t^\tau \delta_{T_1-s}\mathcal D_\ell^w\widehat\sigma(s,g(s))\,d\mathbb{B}(s).$$
\end{lemma}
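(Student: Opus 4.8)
The plan is to obtain the claim by applying It\^o's formula to the pointwise exponential $g=\exp(\widetilde g)$, while sidestepping the difficulty that $\widetilde g$ is only a \emph{mild} solution of \eqref{eq:spde-geometric}: the unbounded drift $\partial_x\widetilde g(s)$ is not available as an $H_\alpha$-valued process, so a direct infinite-dimensional It\^o formula is awkward. Instead I would first reduce everything to a family of real-valued semimartingales indexed by a fixed maturity. Fix $x\geq0$ and set $T:=\tau+x$. Evaluating the mild-solution flow identity for $\widetilde g$ at the point $x$ and using that the shift semigroup composes as $\delta_a\mathcal S_b=\delta_{a+b}$, so that $\delta_{T-u}\mathcal S_{u-t}=\delta_{T-t}$ and $\delta_{T-u}\mathcal S_{u-s}=\delta_{T-s}$, I obtain for every $u\in[t,\tau]$ the representation
$$ \delta_{T-u}\widetilde g(u) = \delta_{T-t}\widetilde g(t)+\int_t^u\mu(s)(T-s)\,ds+\int_t^u\delta_{T-s}\sigma(s)\,d\mathbb B(s). $$
Thus $M_u:=\delta_{T-u}\widetilde g(u)$ is a genuine continuous real-valued semimartingale on $[t,\tau]$ in which no unbounded operator appears; the transport term has been absorbed by following the characteristic $s\mapsto T-s$. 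This step is legitimate because each $\delta_{T-u}$ is bounded and can be pulled through the Bochner and stochastic integrals.

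Second, I would apply the one-dimensional It\^o formula to $\exp(M_u)$. Its quadratic variation is $d\langle M\rangle_u=\delta_{T-u}\sigma(u)\mathcal Q\sigma^*(u)\delta_{T-u}^*(1)\,du$, which equals $-2\mu(u)(T-u)\,du$ by the drift condition in the form of Lemma~\ref{lemma:drift-condition}. Hence in $d\exp(M_u)=\exp(M_u)\,dM_u+\tfrac12\exp(M_u)\,d\langle M\rangle_u$ the finite-variation part cancels identically, leaving
$$ \exp(M_\tau)=\exp(M_t)+\int_t^\tau\exp(M_u)\,\delta_{T-u}\sigma(u)\,d\mathbb B(u). $$
The cancellation is exactly the point of imposing \eqref{eq:drift-condition}: it is designed to render $g$ a local martingale.

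Third, I would translate this scalar identity back into $H_\alpha$. Since $T-u=\tau-u+x$, one checks directly from $\widehat\sigma(u,g)h(x)=g(x)\sigma(u)h(x)$ that $\delta_x\mathcal S_{\tau-u}\widehat\sigma(u,g(u))=\exp(M_u)\,\delta_{T-u}\sigma(u)$ as operators $H\to\R$, while $\exp(M_\tau)=\delta_xg(\tau)$ and $\exp(M_t)=\delta_x\mathcal S_{\tau-t}g(t)$. Pulling the bounded functional $\delta_x$ out of the stochastic integral then gives
$$ \delta_xg(\tau)=\delta_x\mathcal S_{\tau-t}g(t)+\delta_x\int_t^\tau\mathcal S_{\tau-u}\widehat\sigma(u,g(u))\,d\mathbb B(u) $$
for every $x\geq0$. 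Because the evaluation functionals separate points of $H_\alpha$ (recall $g(x)=\langle g,h_x\rangle$ with $h_x$ from \eqref{hx-function}), this lifts to the asserted identity in $H_\alpha$. The forward dynamics then follow by applying $\delta_{T_1-\tau}\mathcal D_\ell^w$ to this identity and invoking the commutation $\mathcal S_x\mathcal D_\ell^w=\mathcal D_\ell^w\mathcal S_x$ together with $\delta_{T_1-\tau}\mathcal S_{\tau-s}=\delta_{T_1-s}$, exactly as in Lemma~\ref{lemma:forward-dynamics}.

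I expect the main obstacle to be the functional-analytic bookkeeping that makes $\widehat\sigma(\cdot,g(\cdot))$ a legitimate $L(H,H_\alpha)$-valued integrand and justifies interchanging $\delta_x$ with the stochastic integral. Boundedness of $\widehat\sigma(u,g(u))$ rests on $H_\alpha$ being a Banach algebra (established in the lemma preceding \eqref{eq:spde-geometric}), so that pointwise multiplication by $g(u)$ is a bounded operator, combined with moment bounds on $g$; the square-integrability required for the It\^o isometry must then be verified from the growth of $\sigma$ and the uniform bound on $\mathcal S_x$ in Lemma~\ref{lem:cont-shift-op}.
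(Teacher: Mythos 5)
Your proof is correct and follows essentially the same route as the paper: both pass to the fixed-maturity process $\widetilde G(u,T)=\delta_{T-u}\widetilde g(u)$ (your $M_u$), which is a real-valued semimartingale free of the unbounded transport term, apply the one-dimensional It\^o formula so that the drift condition \eqref{eq:drift-condition} cancels the finite-variation part, and then translate back to the Musiela parametrization via $g(\tau)(x)=G(\tau,\tau+x)$. Your additional remarks on separating points with the evaluation functionals and on the Banach-algebra bound for $\widehat\sigma$ only make explicit what the paper leaves implicit.
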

\begin{proof}
  Recall that $G(\tau,T) = g(\tau)(T-\tau)$ and define $\widetilde G(\tau,T):=\widetilde g(\tau)(T-\tau)$. Then we have
  \begin{align*}
     G(\tau,T) &= g(\tau)(T-\tau) \\
               &= \exp(\widetilde g(\tau)(T-\tau)) \\
               &= \exp(\widetilde G(\tau,T))
  \end{align*}
  for any $0\leq \tau\leq T$. Moreover, we have
   $$\widetilde G(\tau,T) = \delta_{T-t}\widetilde g(t) + \int_t^\tau \delta_{T-s}(\mu(s)ds + \sigma(s)\,d\mathbb{B}(s)) $$
  and hence It\^o's formula together with the drift condition \eqref{eq:drift-condition} yields
   \begin{align*}
    G(\tau,T) &= \exp(\widetilde G(\tau,T)) \\
              &= \delta_{T-t} g(t) + \int_t^\tau G(s,T)\delta_{T-s}\sigma(s)\,d\mathbb{B}(s) \\
              &= \delta_{T-t} g(t) + \int_t^\tau \delta_{T-s}\widehat\sigma(s,g(s))\,d\mathbb{B}(s)    
   \end{align*}
   for any $0\leq \tau\leq T$. Since $g(\tau)(x) = G(\tau,\tau+x)$ we conclude that
    $$ g(\tau) = \mathcal S_{\tau-t}g(t) + \int_t^\tau \mathcal S_{\tau-s}\widehat\sigma(s,g(s))\,d\mathbb{B}(s) $$
   for any $0\leq t\leq \tau$.
\end{proof}

% %%%%%%%%%% Start of something old
% From Thm.~2.1 in Benth and Kr\"uhner~\cite{BK-HJM}, it follows that 
% \begin{align*}
% \delta_xg(\tau)&=\delta_x\left(\exp(\widetilde{g}(\tau))\right) \\
% &=\exp\left(\delta_x\widetilde{g}(\tau)\right) \\
% &=\exp\left(\widetilde{g}(t,x+\tau-t)
% +\int_t^{\tau}\widehat{\mu}(s,x)\,ds
% +\int_t^{\tau}\widehat{\sigma}(s,x)\,dB(s)\right)\,.
% \end{align*}
% Here, $B$ is a standard Brownian motion, 
% \begin{align}
% \widehat{\sigma}^2(s,x)&=(\delta_x\circ\mathcal{S}_{\tau-s})\sigma(s)\mathcal{Q}
% \sigma^*(s)(\delta_x\circ\mathcal{S}_{\tau-s})^*(1)\nonumber \\
% &=\delta_{x+\tau-s}\sigma(s)\mathcal{Q}\sigma^*(s)\delta^*_{x+\tau-s}(1) \nonumber \\
% &=\delta_{x+\tau-s}\sigma(s)\mathcal{Q}\sigma^*(s)h_{x+\tau-s}\,,
% \end{align}
% and, obviously
% \begin{equation}
% \widehat{\mu}(s,x)=-\frac12\delta_{x+\tau-s}\sigma(s)\mathcal{Q}\sigma^*(s)h_{x+\tau-s}=
% -\frac12\widehat{\sigma}^2(s,x)\,.
% \end{equation}
% %%%%%%%%%%%% End of something old
The price of a European option with exercise time $\tau\geq t$ on a forward delivering at time $T$ 
when $\sigma$ is non-random
can be easily derived as in the arithmetic case. Indeed, it holds that 
\begin{equation}
V(t,\widetilde g)=\e^{-r(\tau-t)}\mathbb{E}\left[p(\exp(\widehat{m}(\widetilde g)+\xi X))\right]
\end{equation}
where  $X$ is a standard normal distributed random variable, $\xi$ is as in Prop.~\ref{prop:general-price} 
(using the $T$ instead of $T_1$) and
\begin{equation}
\widehat{m}(g)=\widetilde g(T-t)-\frac12\int_t^{\tau}\mu(s)(T-s)\,ds
\end{equation}
If we let $p$ be the payoff function of a call option, then a simple calculation shows that 
we recover the Black-76 formula (see Black~\cite{black}, or Benth, \v{S}altyt\.{e} Benth and 
Koekebakker~\cite{BSBK-energy} for a more general version). 

Finally, we remark that if we are interested in pricing
options written on a forward delivering over a period, the payoff function will become
$$
p((\delta_{T-\tau}\circ\mathcal{D}^w_{\ell})(g(\tau)))=
p(F(\tau,T_1,T_2))\,.
$$
The integral operator $\mathcal{D}^w_{\ell}$ maps $\exp(\widetilde{g}(\tau))\in H_{\alpha}$
into $H_{\alpha}$, however, we do not have any nice representation of it. The
problem is of course that the integral of the exponent of a general function is not 
analytically known. Thus, it seems difficult to obtain any tractable expression yielding simple pricing formulas.

\subsection{L\'evy models}
We include a brief discussion on the pricing of options when the forward curve is driven by a L\'evy process
$\mathbb{L}$. We confine our analysis to the arithmetic model
\begin{equation}
\label{eq:arithmetic-levy-spde}
dg(t)=\partial_xg(t)\,dt+\sigma(t)\,d\mathbb{L}(t)\,,
\end{equation} 
where $\mathbb{L}$ is a L\'evy process with values in a separable Hilbert space
$H$, having zero mean and being square integrable. The stochastic process $\sigma:\R_+\rightarrow L(H,H_{\alpha})$ is
integrable with respect to $\mathcal{L}$, i.e., $\sigma\in\mathcal{L}^2_{\mathbb{L}}(H_{\alpha})$
(see Sect.~8.2 in Peszat and Zabczyk~\cite{peszat.zabczyk.07} for this notation.) 

The price of an option given in \eqref{option-val-formula} requires the computation of 
$(\delta_{T_1-\tau}\circ\mathcal{D}_{\ell}^w)(g(\tau))$. As for the Gaussian models,
there exists a mild solution of \eqref{eq:arithmetic-levy-spde} which for $\tau\geq t\geq 0$ is given by
\begin{equation}
g(\tau)=\mathcal{S}_{\tau-t}g(t)+\int_t^{\tau}\mathcal{S}_{\tau-s}\sigma(s)\,d\mathbb{L}(s)\,.
\end{equation}
From the linearity of the operators, it holds
$$
(\delta_{T_1-\tau}\circ\mathcal{D}_{\ell}^w)g(\tau)=(\delta_{T_1-t}\circ\mathcal{D}_{\ell}^w)g(t)
+\int_t^{\tau}(\delta_{T_1-s}\circ\mathcal{D}_{\ell}^w)\sigma(s)\,d\mathbb{L}(s))\,.
$$
The first term on the right hand side is, not surprisingly, $m(g(t))$ with $m$ defined in Prop.~\ref{prop:general-price}.
For the Gaussian model, we used a result in Benth and Kr\"uhner~\cite{BK-HJM} that provided us with an explicit
representation of a linear functional applied on a $H_{\alpha}$-valued stochastic integral with respect to
a $H$-valued Wiener process. One can write this functional as a stochastic integral of a real-valued stochastic integrand 
with respect to a real-valued Brownian motion. The integrand is, moreover, explicitly known. Something similar is known
for the special class of L\'evy processes being subordinated Wiener processes.  

Following Benth and Kr\"uhner \cite{BK}, we introduce $H$-valued subordinated Brownian motion:
Denote by $U(t)\}_{t\geq 0}$ a L\'evy process with values on the positive real line, that is, a
non-decreasing L\'evy process. These processes are frequently called {\it subordinators} (see Sato~\cite{Sato}).
Let $\mathbb{L}(t):=\mathbb{B}(U(t))$, which then becomes a L\'evy process with values in $H$.
In Benth and Kr\"uhner~\cite{BK} one finds conditions on $U$ implying that $\mathbb{L}$ is a
zero-mean square integrable L\'evy process. 

From Thm.~2.5 in Benth and Kr\"uhner~\cite{BK-HJM}, we find that 
$$
\int_t^{\tau}(\delta_{T_1-s}\circ\mathcal{D}_{\ell}^w)\sigma(s)\,d\mathbb{L}(s)) =\int_t^{\tau}\widetilde{\sigma}(s)\,dL(s)\,,
$$
where $L$ is a real-valued subordinated Brownian motion $L(t):=B(U(t))$, $B$ being a standard 
Brownian motion. Moreover, the process $\widetilde{\sigma}(s)$ is given by
$$
\widetilde{\sigma}^2(s)=(\delta_{T_1-s}\circ\mathcal{D}_{\ell}^w)\sigma(s)\mathcal{Q}
\sigma^*(s)(\delta_{T_1-s}\circ\mathcal{D}_{\ell}^w)^*(1)\,,
$$ 
which is identical to the Gaussian case studied above.

For the problem of pricing options, we see that we are back to computing the expectation of a functional of a univariate 
stochastic integral. If $\sigma$ is non-random, we can use for example Fourier techniques to 
compute this expectation, as we know the cumulant function of $L$ from the cumulant of $U$ and 
Brownian motion  (see Carr and Madan \cite{carrmadan} an account on Fourier methods in derivatives pricing, and Benth, \v{S}altyt\.{e} Benth and Koekebakker \cite{BSBK-energy} for the application to energy markets). 
This will provide us with an expression for the option price that can be efficiently computed using
fast Fourier transform techniques. 

We end with an example on a subordinated L\'evy process of particular interest in energy markets.  Assume
$U$ is an inverse Gaussian subordinator, that is, a L\'evy process with non-decreasing paths and $U(1)$ is inverse Gaussian
distributed. Then $\mathbb{L}(t)=\mathbb{B}(U(t))$ becomes an $H$-valued normal inverse Gaussian (NIG)
L\'evy process in the sense defined by  Benth and Kr\"uhner~\cite[Def.~4.1]{BK}. In fact,
for any functional $\mathcal{L}\in L(H,\R^n)$, $t\mapsto\mathcal{L}(\mathbb{L}(t))$ will be an $n$-variate 
NIG L\'evy process, with the particular case $L(t)$ introduced above defining an NIG 
L\'evy process on the real line. We refer to Barndorff-Nielsen~\cite{BN} for details on the inverse Gaussian 
subordinator and NIG L\'evy processes. Several empirical studies have demonstrated that returns of energy forward
and futures prices can be conveniently modelled by the NIG distribution (see Benth, \v{S}altyt\.{e} Benth and
Koekebakker~\cite{BSBK-energy} and the references therein for the case of NordPool power prices).  
Frestad, Benth and Koekabkker~\cite{FBK} and Andresen, Koekebakker and Westgaard~\cite{AKW} 
find that the NIG distribution fits power 
forward returns with
fixed time to maturity and given delivery period. Their analysis cover time series of prices with different times to maturity
and different delivery periods (weekly, monthly, quarterly, say), where these time series are constructed from a 
non-parametric smoothing of the original price data observed in the market. In fact, in our modelling context,
they are looking at time series observations of the stochastic process $t\mapsto(\delta_x\circ\mathcal{D}_{\ell}^w)(g(t))$.
From the analysis above we see that choosing $\mathbb{L}$ to be an $H$-valued NIG L\'evy process and $g$ being
an arithmetic dynamics will give
price increments being NIG distributed. Of course, this is not the same as the returns being NIG. As we have mentioned
earlier, it is not straightforward to model the price of forward with delivery period using an exponential dynamics. 
Frestad, Benth and Koekebakker~\cite{FBK}, and Andresen, Koekebakker and Westgaard~\cite{AKW} also estimate empirically the volatility term structure and the 
spatial (in time to maturity) correlation structure,
which provides information on the volatility $\sigma(t)$ and the covariance operator $\mathcal{Q}$. Indeed,
Andresen, Koekebakker and Westgaard~\cite{AKW} propose a multivariate NIG distribution to model the returns.

%\section{HJM on weather futures}
%
%Here comes results as in Vienna, CATs as arithmetic, CDD/HDD as geoemtric/andor 
%with stochastic vol?

\section{Cross-commodity modelling}

In this Section we want to analyse a joint model for the forward curve evolution in
two commodity markets. For example, European power markets are inter-connected, and thus
forward prices will be dependent. Also, the markets for gas and coal will influence the power market, since
gas and coal are important fuels for power generation in many countries like for example UK and Germany.
This links forward contracts on gas and coal to those traded in the power markets. Finally, weather clearly affects
the demand (through temperature) and supply (through precipitation and wind) of energy, and one
can therefore also claim a dependency between weather futures (traded at Chicago Mercantile
Exchange (CME), say) and power futures.  These examples motivate the introduction of 
multivariate dynamic models for the time evolution of forward curves across different markets. We will restrict
our attention merely to the bivariate case here, and make some detailed analysis of a two-dimensional
forward curve dynamics. %Furthermore,
%we restrict our analysis to the Wiener case.

%Let $t\mapsto(g_1(t),g_2(t))$ be an $H_{\alpha}\times H_{\alpha}$-valued stochastic process for 
%the joint dynamics of $g_e$ and $g_w$, where $g_e$ is the forward price of energy and
%$g_w$ for weather (e.g., temperature or wind speed). From the marginal
%dynamics of the respective forward prices under the Musiela parametrization, we 
%define

Consider two commodity forward markets. We model the "bivariate" forward curve dynamics 
$t\mapsto(g_1(t),g_2(t))$ as the $H_{\alpha}\times H_{\alpha}$-valued stochastic process being the
solution of the SPDE 
\begin{align}
dg_1(t)&=\partial_x g_1(t)\,dt+\sigma_1(t,g_1(t),g_2(t))\,d\mathbb{L}_{1}(t) \nonumber\\
dg_2(t)&=\partial_x g_2(t)\,dt+\sigma_2(t,g_1(t),g_2(t))\,d\mathbb{L}_{2}(t) \,,\label{eq:bivariate-hjm-dyn}
\end{align}
with $(g_1(0),g_2(0)=(g_1^0,g_2^0)\in H_{\alpha}\times H_{\alpha}$ given. We suppose that
$(\mathbb{L}_1,\mathbb{L}_2)$ is an $H_1\times H_2$-valued square-integrable zero-mean 
L\'evy process, where
$H_i, i=1,2$ are two separable Hilbert spaces and $\mathcal{Q}_i, i=1,2$ are the 
respective (marginal) covariance operators, i.e.\ $\mathbb E[\<L_i(t),g\>\<L_i(s),h\>]=(t\wedge s)\<Q_ig,h\>$ for any $t,s\geq0$, $g,h\in H_\alpha$ and $i=1,2$. Furthermore, we assume that
$\sigma_i:\R_+\times H_{\alpha}\times H_{\alpha}\rightarrow L(H_i,H_{\alpha})$ for
$i=1,2$ are measurable functions, and that there exists an increasing function $K:\R_+\rightarrow\R_+$
such that $\sigma_i$, $i=1,2$ are Lipschitz and 
of linear growth, that is, for any $(f_1,f_2),(h_1,h_2)\in H_{\alpha}\times H_{\alpha}$ and $t\in\R_+$,
\begin{align}
\|\sigma_i(t,f_1,f_2)-\sigma_i(t,h_1,h_2)\|_{\text{op}}&\leq K(t)\|(f_1,f_2)-(h_1,h_2)\|_{H_{\alpha}\times H_{\alpha}}\,, \\
\|\sigma_i(t,f_1,f_2)\|_{\text{op}}&\leq K(t)(1+\|(f_1,f_2)\|_{H_{\alpha}\times H_{\alpha}})\,.
\end{align}
Note that since the product of two (separable) Hilbert space again is a (separable) Hilbert space (using the 
canonical $2$-norm, i.e.\ $\Vert (f,g)\|_{H_{\alpha}\times H_{\alpha}}^2:=\|f\|_{H_\alpha}^2+\|g\|_{H_\alpha}^2$), we can relate to the theory of existence and uniqueness of mild solutions of 
SPDEs given by Tappe~\cite{Tappe}: there exists a unique mild solution satisfying the integral equations
\begin{align}
g_1(t)&=\mathcal{S}_{t}g_1^0+\int_0^t\mathcal{S}_{t-s}\sigma_1(s,g_1(s),g_2(s))\,d\mathbb{L}_1(s) 
\nonumber\\
g_2(t)&=\mathcal{S}_{t}g_2^0+\int_0^t\mathcal{S}_{t-s}\sigma_2(s,g_1(s),g_2(s))\,d\mathbb{L}_2(s)\,.
\label{eq:bivariate-hjm-mild}
\end{align}
Observe that $t\mapsto(F_1(t,T),F_2(t,T)):=(\delta_{T-t}g_1(t),\delta_{T-t}g_2(t))$, $t\leq T$ will be an $H_{\alpha}\times H_{\alpha}$-valued (local) martingale.
Moreover, the marginal $H_{\alpha}$-valued processes $t\mapsto F_i(t,T):=\delta_{T-t}g_i(t), i=1,2, t\leq T$ will also be (local) martingales,
ensuring that we have an arbitrage-free model for the forward price dynamics in the two commodity
markets.

Our main concern in the rest of this Section is to analyse in detail the "bivariate" L\'evy process
$(\mathbb{L}_1,\mathbb{L}_2)$. We are interested in its probabilistic properties in terms of
representation of the covariance operator and linear decomposition. Since
 $(\mathbb{L}_1(1),\mathbb{L}_2(1))$ is an $H_1\times H_2$-valued square-integrable variable, we 
analyse general square-integrable random variables $(X_1,X_2)$ in $H_1\times H_2$.

%Here, $(W_e(1),W_w(1))$ is a "bivariate" Gaussian variable defined in 
%$H_e\times H_w$, while $\sigma_e, \sigma_w$ are "volatility" operators on 
%$H_e$ and $H_w$, resp, into $H_{\alpha}$.
%
%We next analyze the structure of "bivariate" Gaussian random variables and show that 
%we can model using the "same" principles as in the true bivariate case by a
%decomposition of the two random variables $W_e(1)$ and $W_w(1)$. 
%
%We focus our study on general dependent square-integrable Hilbert space-valued 
%random variables $X_e$ and $X_w$. 
Before we set off, we recall the spectral theorem for normal compact operators on Hilbert spaces 
(see e.g. \cite[Statement 7.6]{Conway}):
\begin{proposition}\label{P:funktionalkalkuel}
Let $H$ be a separable Hilbert space and $\mathcal{T}$ be a symmetric 
compact operator. Then there is an orthonormal basis $\{e_i\}_{i\in \N}$ of $H$ and 
a family $\{\lambda_i\}_{i\in \N}$ of real numbers such that
$$ 
\mathcal{T}f = \sum_{i\in \N}\lambda_i\langle e_i,f\rangle e_i\,,
$$
for any $f\in H$. Let $\phi:\mathbb R\rightarrow\mathbb R$ be measurable. Then
$$ 
\phi(\mathcal{T}):\left\{ f\in H: \sum_{i\in \N} \vert \phi(\lambda_i)\vert^2 \langle e_i,f\rangle ^2 <\infty\right\} \rightarrow H, f \mapsto \sum_{i\in \N}\phi(\lambda_i)\langle e_i,f\rangle e_i\,, 
$$
defines a closed linear symmetric operator which is bounded and everywhere defined 
if $\phi$ is bounded on $\{\lambda_i:i\in \N\}$. For measurable $\phi,\psi:\mathbb R\rightarrow\mathbb R$ with 
$\psi$ bounded we have 
$(\phi+\psi)(\mathcal{T}) = \phi(\mathcal{T}) + \psi(\mathcal{T})$ and 
$(\phi\psi)(\mathcal{T}) = \phi(\mathcal{T})\psi(\mathcal{T})$.
\end{proposition}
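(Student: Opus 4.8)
The plan is to split the statement into two essentially independent parts: first the spectral decomposition of $\mathcal{T}$ into eigenvectors, and then the verification that the coefficient-wise recipe $f\mapsto\sum_i\phi(\lambda_i)\langle e_i,f\rangle e_i$ has the claimed functional-analytic properties. For the decomposition I would run the classical Rayleigh-quotient argument for compact symmetric operators. The key lemma is that a nonzero symmetric compact $\mathcal{T}$ possesses an eigenvalue equal to $\|\mathcal{T}\|$ or $-\|\mathcal{T}\|$: one takes a sequence $f_n$ with $\|f_n\|=1$ and $|\langle\mathcal{T}f_n,f_n\rangle|\to\|\mathcal{T}\|$, passes by compactness of $\mathcal{T}$ to a subsequence along which $\mathcal{T}f_n$ converges, and shows the limiting direction is an eigenvector. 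I would then iterate this on the orthogonal complement of the span of the eigenvectors already produced — which is invariant under $\mathcal{T}$ because $\mathcal{T}$ is symmetric — obtaining eigenvalues that, by compactness, can only accumulate at $0$ and whose nonzero eigenspaces are finite-dimensional. Finally, adjoining an orthonormal basis of $\ker\mathcal{T}$ (which exists by separability, with eigenvalue $0$) yields the orthonormal basis $\{e_i\}_{i\in\N}$ of $H$ asserted in the first display.

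For the functional calculus, set $D:=\{f\in H:\sum_{i\in\N}|\phi(\lambda_i)|^2\langle e_i,f\rangle^2<\infty\}$ and define $\phi(\mathcal{T})$ on $D$ by the stated series; since $\{e_i\}$ is an orthonormal basis, the series converges in $H$ exactly for $f\in D$, and Parseval gives $\|\phi(\mathcal{T})f\|^2=\sum_i|\phi(\lambda_i)|^2\langle e_i,f\rangle^2$. Linearity is immediate. Symmetry follows because the $\lambda_i$, hence the $\phi(\lambda_i)$, are real: for $f,g\in D$ one has $\langle\phi(\mathcal{T})f,g\rangle=\sum_i\phi(\lambda_i)\langle e_i,f\rangle\langle e_i,g\rangle=\langle f,\phi(\mathcal{T})g\rangle$ by Parseval. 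For closedness, suppose $f_n\to f$ in $H$ and $\phi(\mathcal{T})f_n\to h$; passing to coefficients gives $\langle e_i,f_n\rangle\to\langle e_i,f\rangle$ and $\phi(\lambda_i)\langle e_i,f_n\rangle=\langle e_i,\phi(\mathcal{T})f_n\rangle\to\langle e_i,h\rangle$, whence $\langle e_i,h\rangle=\phi(\lambda_i)\langle e_i,f\rangle$ for every $i$, and since $\sum_i\langle e_i,h\rangle^2=\|h\|^2<\infty$ by Parseval we conclude $f\in D$ and $\phi(\mathcal{T})f=h$.

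When $\phi$ is bounded on $\{\lambda_i\}$, say $|\phi(\lambda_i)|\le M$, the Parseval identity yields $\|\phi(\mathcal{T})f\|^2\le M^2\|f\|^2$ for every $f\in H$, so $D=H$ and $\phi(\mathcal{T})$ is bounded and everywhere defined. The algebraic identities are then pure coefficient bookkeeping. For the sum, with $\psi$ bounded the $\ell^2$ triangle inequality shows $D((\phi+\psi)(\mathcal{T}))=D(\phi(\mathcal{T}))$, and on this common domain the two operators agree coefficient by coefficient. For the product, with $\psi$ bounded the vector $\psi(\mathcal{T})f$ has coefficients $\psi(\lambda_i)\langle e_i,f\rangle$, and $\psi(\mathcal{T})f\in D(\phi(\mathcal{T}))$ precisely when $\sum_i|\phi(\lambda_i)\psi(\lambda_i)|^2\langle e_i,f\rangle^2<\infty$, i.e.\ exactly when $f\in D((\phi\psi)(\mathcal{T}))$; on this domain $\phi(\mathcal{T})\psi(\mathcal{T})f$ and $(\phi\psi)(\mathcal{T})f$ have identical coefficients.

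I expect the genuine analytic obstacle to lie entirely in the first part, specifically in producing the first eigenvalue from the compactness of $\mathcal{T}$, whereas the functional-calculus half reduces to monitoring domains in $\ell^2$ — which is precisely why the boundedness of $\psi$ is assumed in the two operator identities, so that the relevant domains match up.
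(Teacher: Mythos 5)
Your proof is correct, but note that the paper does not prove this proposition at all: it is explicitly stated as a recalled classical result, with a citation to Conway's \emph{A Course in Functional Analysis}, so there is no in-paper argument to compare against. What you give is the standard textbook proof --- the Rayleigh-quotient extraction of an extremal eigenvalue via compactness, iteration on invariant orthogonal complements, and then pure $\ell^2$ coefficient bookkeeping for the functional calculus --- and all the steps check out, including the domain comparisons for $(\phi+\psi)(\mathcal T)$ and $(\phi\psi)(\mathcal T)$ that hinge on the boundedness of $\psi$. The only cosmetic omission is that you do not remark that the domain $D$ is dense (it contains the finite linear combinations of the $e_i$), which is implicitly needed for the word ``symmetric'' in its usual unbounded-operator sense.
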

We will apply this result in particular to define the square-root and the pseudo-inverse of a 
compact operator. We shall use the definition of a pseudo-inverse given in Albert~\cite{albert.72}:
\begin{definition}
Let $\mathcal{P}$ be a positive semidefinite compact operator on a separable Hilbert space $H$. 
Then $\mathcal{R}:=\sqrt{\mathcal{P}}$ is the {\em square-root} of $\mathcal{P}$. 
The {\em pseudo-inverse} $\mathcal{J}$ of $\mathcal{P}$ is defined by 
$\mathcal{J}:=\phi(\mathcal{P})$ where 
$\phi:\mathbb R\rightarrow\mathbb R,x\mapsto 1_{\{x\neq 0\}}/x$.
\end{definition}

Next, we want to represent covariance operators of square-integrable random variables in $H_1\times H_2$ in terms of operators on $H_1$, $H_2$ and between thoose spaces. To this end, we will need the natural projectors $\Pi_1:H_1\times H_2\rightarrow H_1,(x,y)\mapsto x$ and $\Pi_2:H_1\times H_2\rightarrow H_2,(x,y)\mapsto y$. We have the following general statement on the representation of the covariance operator of square-integrable
random variables in $H_1\times H_2$:
\begin{theorem}\label{s:Notwendigkeit A}
For $i=1,2$, let $X_i$ be a square integrable $H_i$-valued 
random variable and $\mathcal{Q}_i$ its covariance operator. Denote the positive semidefinite square-root of $\mathcal{Q}_i$ by $\mathcal{R}_i$ for $i=1,2$.
Then there is a linear operator $\mathcal{Q}_{12}\in L(H_1,H_2)$ such that
 \begin{enumerate}
  \item[(i)] $ \mathcal{Q}:= \left(\begin{matrix}\mathcal{Q}_1&\mathcal{Q}_{12}^*\\\mathcal{Q}_{12}
&\mathcal{Q}_2\end{matrix}\right)$ is the covariance operator of the 
$H_1\times H_2$-valued square integrable random variable $(X_1,X_2)$,
  \item[(ii)] $\vert\<\mathcal{Q}_{12}u,v\>\vert\leq \Vert \mathcal{R}_1u\Vert_1\Vert \mathcal{R}_2v\Vert_2$ for any $u\in H_1$, $v\in H_2$ and
  \item[(iii)] $\mathrm{ran}(\mathcal Q_{12})\subseteq\overline{\mathrm{ran}(\mathcal Q_2)}$ and $\mathrm{ran}(\mathcal Q_{12}^*)\subseteq\overline{\mathrm{ran}(\mathcal Q_1)}$.
%   \item[(ii)] $\ran(\mathcal{Q}_{12}) \subseteq \ran(\mathcal{R}_2)$,
%   \item[(iii)] $\ran(\mathcal{Q}_{12}^*) \subseteq \ran(\mathcal{R}_1)$,
%   \item[(v)] the operator $\mathcal{C}:=\{(u,z)\in H_1\times\overline{\ran(\mathcal{R}_2)}: \<\mathcal{Q}_{12}u,v\> = \<z,\mathcal{R}_2v\>\text{ for any }v\in H_2\}$ defines a continuous linear Hilbert-Schmidt operator from $H_1$ into $\overline{\ran(\mathcal{R}_2)}$,
%   \item[(vi)] $\mathcal{Q}_{12} = \mathcal{R}_2\mathcal{C}$,
%   \item[(vii)] $\mathcal{Q}_1-\mathcal{C}^*\mathcal{C}$ is a positive semidefinite 
% operator.
 \end{enumerate}
\end{theorem}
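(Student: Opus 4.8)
The plan is to obtain $\mathcal{Q}_{12}$ as the off-diagonal block of the joint covariance operator of $(X_1,X_2)$ and then to verify (i)--(iii) in turn. First I would observe that $(X_1,X_2)$ is a square-integrable random variable in the separable product Hilbert space $H_1\times H_2$, so by the Lemma (Peszat and Zabczyk, Thm.~4.44) it possesses a unique positive semidefinite trace-class covariance operator $\mathcal{Q}\in L(H_1\times H_2)$. Writing $\iota_i$ for the inclusions (the adjoints of the projectors $\Pi_i$), I would set $\mathcal{Q}_{12}:=\Pi_2\mathcal{Q}\iota_1\in L(H_1,H_2)$ and check the block identities. The defining relation of $\mathcal{Q}$ gives $\langle\Pi_1\mathcal{Q}\iota_1 u,v\rangle_1=\langle\mathcal{Q}(u,0),(v,0)\rangle=\mathbb{E}[\langle X_1,u\rangle_1\langle X_1,v\rangle_1]=\langle\mathcal{Q}_1 u,v\rangle_1$, so the $(1,1)$ block is $\mathcal{Q}_1$; symmetrically the $(2,2)$ block is $\mathcal{Q}_2$. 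Self-adjointness of $\mathcal{Q}$ then identifies the $(1,2)$ block $\Pi_1\mathcal{Q}\iota_2$ with $\mathcal{Q}_{12}^*$, which is precisely statement (i). Boundedness of $\mathcal{Q}_{12}$ is automatic since $\mathcal{Q}$, $\Pi_2$ and $\iota_1$ are bounded.

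For (ii) I would compute the cross block explicitly: for $u\in H_1$ and $v\in H_2$,
$$\langle\mathcal{Q}_{12}u,v\rangle_2=\langle\mathcal{Q}(u,0),(0,v)\rangle=\mathbb{E}\left[\langle X_1,u\rangle_1\langle X_2,v\rangle_2\right].$$
Applying the Cauchy--Schwarz inequality in $L^2(\Omega)$ bounds the right-hand side by $(\mathbb{E}[\langle X_1,u\rangle_1^2])^{1/2}(\mathbb{E}[\langle X_2,v\rangle_2^2])^{1/2}$, and since $\mathbb{E}[\langle X_i,w\rangle_i^2]=\langle\mathcal{Q}_iw,w\rangle_i=\|\mathcal{R}_iw\|_i^2$ by self-adjointness of $\mathcal{R}_i$, this equals exactly $\|\mathcal{R}_1u\|_1\|\mathcal{R}_2v\|_2$, giving (ii).

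The part requiring the most care is (iii), where I would lean on the bound from (ii) together with two auxiliary facts about the self-adjoint operators $\mathcal{Q}_i$: namely $\overline{\mathrm{ran}(\mathcal{Q}_i)}=\ker(\mathcal{Q}_i)^{\perp}$, and $\ker(\mathcal{Q}_i)=\ker(\mathcal{R}_i)$ (the latter because $\langle\mathcal{Q}_iw,w\rangle_i=\|\mathcal{R}_iw\|_i^2$). Hence if $v\in\ker(\mathcal{Q}_2)$ then $\mathcal{R}_2v=0$, so (ii) forces $\langle\mathcal{Q}_{12}u,v\rangle_2=0$ for every $u\in H_1$; this says $\mathrm{ran}(\mathcal{Q}_{12})\perp\ker(\mathcal{Q}_2)$, equivalently $\mathrm{ran}(\mathcal{Q}_{12})\subseteq\ker(\mathcal{Q}_2)^{\perp}=\overline{\mathrm{ran}(\mathcal{Q}_2)}$. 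The symmetric argument, applied with $u\in\ker(\mathcal{Q}_1)$ and using $\langle\mathcal{Q}_{12}^*v,u\rangle_1=\langle v,\mathcal{Q}_{12}u\rangle_2$, yields $\mathrm{ran}(\mathcal{Q}_{12}^*)\subseteq\overline{\mathrm{ran}(\mathcal{Q}_1)}$, completing the proof. The only genuinely delicate points are keeping the product inner product and the inclusion/projection adjoint relations straight in (i), and recognizing in (iii) that the estimate in (ii)---rather than any explicit formula for $\mathcal{Q}_{12}$---is what kills the inner products on the kernels.
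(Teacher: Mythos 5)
Your proposal is correct and follows essentially the same route as the paper: define $\mathcal{Q}_{12}=\Pi_2\mathcal{Q}\Pi_1^*$ from the joint covariance operator, establish the bound in (ii) by a Cauchy--Schwarz argument, and deduce (iii) from (ii) via $\ker(\mathcal{Q}_2)=\ker(\mathcal{R}_2)$ and $\overline{\mathrm{ran}(\mathcal{Q}_2)}=\ker(\mathcal{Q}_2)^{\perp}$. The only cosmetic difference is in (ii), where you apply Cauchy--Schwarz in $L^2(\Omega)$ to the probabilistic representation $\langle\mathcal{Q}_{12}u,v\rangle=\E[\langle X_1,u\rangle\langle X_2,v\rangle]$, whereas the paper derives the same inequality by optimizing the positive semidefinite quadratic form $\langle\mathcal{Q}(\beta u,v),(\beta u,v)\rangle\geq 0$ over $\beta$.
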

\begin{proof}

 (i) Let $\mathcal{Q}$ be the covariance operator of $(X_1,X_2)$ and $\Phi_i:H_i\rightarrow H_1\times H_2$ be the natural embedding, i.e.\ $\Phi_i=\Pi^*_i$ for $i=1,2$. Define
$$ 
\mathcal{Q}_{12}:=\Pi_2\mathcal{Q}\Phi_1\,.
$$ 
Then the first assertion is evident.

 (ii) Let $u\in H_1$, $v\in H_2$ and $\beta\in \mathbb R$. We have
  \begin{align*}
   0 &\leq \<\mathcal{Q}(\beta u,v),(\beta u,v)\> \\
     &= \beta^2\<\mathcal{Q}_1u,u\>+\<\mathcal{Q}_2v,v\>
+2\beta\<\mathcal{Q}_{12}u,v\> \\
     &= \beta^2\Vert \mathcal{R}_1u\Vert_1^2+\Vert \mathcal{R}_2v\Vert^2_2 +2\beta\<\mathcal{Q}_{12}u,v\>\,,
  \end{align*}
 and hence
 \begin{align*}
   -\beta\<\mathcal{Q}_{12}u,v\> \leq \frac{1}{2}\left(\beta^2\Vert \mathcal{R}_1u\Vert_1^2+\Vert \mathcal{R}_2v\Vert_2^2\right)\,.
 \end{align*}
  Letting $\beta$ have the same sign as $-\<\mathcal{Q}_{12}u,v\>$ yields
 \begin{align*}
   \vert\beta\vert\vert\<\mathcal{Q}_{12}u,v\>\vert \leq \frac{1}{2}\left(\beta^2\Vert \mathcal{R}_1u\Vert_1^2+\Vert \mathcal{R}_2v\Vert_2^2\right).
 \end{align*}
  If $\Vert \mathcal{R}_1u\Vert_1=0$, then with $\vert\beta\vert\rightarrow \infty$ we see that $\vert\<\mathcal{Q}_{12}u,v\>\vert=0$ and hence the claimed inequality holds. Thus we may assume that $\Vert \mathcal{R}_1u\Vert_1\neq0$. Choosing
$\beta=\frac{\Vert \mathcal{R}_2v\Vert_2}{\Vert \mathcal{R}_1u\Vert_1}$ yields
 $$ 
\vert\<\mathcal{Q}_{12}u,v\>\vert \leq \Vert \mathcal{R}_1u\Vert_1
\Vert \mathcal{R}_2v\Vert_2$$
 as claimed.
 
 (iii) We show that $\mathcal Q_{12}u$ is orthogonal to any $v\in \mathrm{Kern}(\mathcal Q_2)$ for any $u\in H_1$. If that is done, then the claim follows because $\mathcal Q_2$ is positive semidefinite and hence its kernel and the closure of its range are closed orthogonal spaces. Let $u\in H_1$, $v\in H_2$ such that $\mathcal Q_2v=0$. Then, $\mathcal R_2v=0$ and hence (ii) yields
 \begin{align*}
    \vert \<\mathcal Q_{12}u,v\>| \leq \|\mathcal R_1u|\|_1\|\mathcal R_2v\|_2 = 0.
 \end{align*}
 The corresponding arguments show that $\mathcal Q_{12}^*$ maps into the closure of the range of $\mathcal Q_1$.

\end{proof}
Consider now $H_i=H_{\alpha_i}$, $H_{\alpha_i}$ being the Filipovic space with weight function $\alpha_i$, 
$i=1,2$. We suppose that both weight functions $\alpha_1,\alpha_2$ satisfy the hypotheses stated at the
beginning of Section~\ref{sect:hilbert-space-realization}. 
%and recall the L\'evy process $(\mathbb{L}_1,\mathbb{L}_2)$, now taking values 
%in $H_{\alpha_1}\times H_{\alpha_2}$. 
We first demonstrate that the operator $\mathcal{Q}_{12}$ yields the covariance between $\mathbb{L}_1(t)$ and
$\mathbb{L}_2(t)$ evaluated at two different maturities $x$ and $y$, with $x,y\in\R_+$. To this end, recall the function $h_x$ in \eqref{hx-function}. Then we have for any $x\in\R_+$ and $X\in H_{\alpha}$,
$$
\delta_x X=\langle X,\delta^*_x(1)\rangle=\langle X,h_x\rangle\,,
$$
by \eqref{dual-delta}. 
%In the first equation, we have (mis-)used the notation $\langle\cdot,\cdot\rangle$ as the
%inner product in $\R$. 
Hence, with $h_x^i$ being the function $h_x$ defined in \eqref{hx-function} using 
the weight function $\alpha_i$,
\begin{align*}
\delta_z^i(\mathbb{L}_i(t))&=\langle\mathbb{L}_i(t),h_z^i\rangle\,.
\end{align*}
Thus, with $(\mathbb{L}_1,\mathbb{L}_2)$ being a zero mean 
L\'evy process, we find for $x,y\in\R_+$,
\begin{align*}   
\text{Cov}(\mathbb{L}_1(t,x),\mathbb{L}_2(t,y))&=\E\left[\delta_x^1(\mathbb{L}_1(t))\delta_y^2(\mathbb{L}_2(t))\right] \\
&=\E\left[\langle\mathbb{L}_1(t),h_x^1\rangle\langle\mathbb{L}_2(t),h_y^2\rangle\right] \\
&=\E\left[\langle(\mathbb{L}_1(t),\mathbb{L}_2(t)),\Pi_1^*h_x^1\rangle\langle
(\mathbb{L}_1(t),\mathbb{L}_2(t)),\Pi_2^*h_y^2\rangle\right] \\
&=t\langle\mathcal{Q}\Pi_1^*h_x^1,\Pi_2^*h_y^2\rangle \\
&=t\langle\Pi_2\mathcal{Q}\Pi_1^*h_x^1,h_y^2\rangle\,.
\end{align*}
We have $\Pi_2\mathcal{Q}\Pi^*_1=\mathcal{Q}_{12}$, and it follows
\begin{equation}
\text{Cov}(\mathbb{L}_1(t,x),\mathbb{L}_2(t,y))=t\langle\mathcal{Q}_{12}h_x^1,h_y^2\rangle\,,
\end{equation}
as claimed.

Let us analyse a very simple case of the bivariate forward dynamics in \eqref{eq:bivariate-hjm-dyn} where
$\alpha_1=\alpha_2=\alpha$ and $\sigma_i=\text{Id}$, the identity operator on $H_{\alpha}$, $i=1,2$,
and $(\mathbb{L}_1,\mathbb{L}_2)=(\mathbb{B}_1,\mathbb{B}_2)$ is a Wiener process. 
The mild solution in \eqref{eq:bivariate-hjm-mild} takes the form
\begin{align*}
g_i(t)&=\mathcal{S}_tg_i^0+\int_0^t\mathcal{S}_{t-s}\,d\mathbb{B}_i(s)\,,
\end{align*}
for $i=1,2$. We find similar to above that, for $x,y\in\R_+$,
\begin{align*}
\text{Cov}(g_1(t,x),g_2(t,y))&=\E\left[\delta_x\int_0^t\mathcal{S}_{t-s}\,d\mathbb{B}_1(s)\cdot\delta_y
\int_0^t\mathcal{S}_{t-s}\mathbb{B}_2(s)\right]\\
&=\E\left[\left\langle\left(\int_0^t\mathcal{S}_{t-s}\,d\mathbb{B}_1(s),\int_0^t\mathcal{S}_{t-s}\,d\mathbb{B}_2(s)\right),\Pi_1^*h_x\right\rangle \right. \\
&\qquad\qquad\left.\times\left\langle\left(\int_0^t\mathcal{S}_{t-s}\,d\mathbb{B}_1(s),\int_0^t\mathcal{S}_{t-s}\,d\mathbb{B}_2(s)\right),\Pi_2^*h_y\right\rangle\right]\,.
\end{align*}
We show that $(\int_0^t\mathcal{S}_{t-s}\,d\mathbb{B}_1(s),\int_0^t\mathcal{S}_{t-s}\,d\mathbb{B}(s))$ is
a Gaussian $H_{\alpha}\times H_{\alpha}$-valued stochastic process:
\begin{lemma}
\label{lemma:bivariate-gaussian}
Suppose that $H_i=H_{\alpha}$ for $i=1,2$. 
The process $t\mapsto(\int_0^t\mathcal{S}_{t-s}\,d\mathbb{B}_1(s),\int_0^t\mathcal{S}_{t-s}\,d\mathbb{B}_2(s))$ is a mean-zero Gaussian $H_{\alpha}\times H_{\alpha}$-valued process
with covariance operator $\mathcal{Q}_t$ for each $t\geq 0$ given by
$$
\mathcal{Q}_t=\left[\begin{array}{cc} \int_0^t\mathcal{S}_s\mathcal{Q}_1\mathcal{S}^*_s\,ds & \int_0^t\mathcal{S}_s\mathcal{Q}_{12}^*\mathcal{S}_s\,ds \\
\int_0^t\mathcal{S}_s\mathcal{Q}_{12}\mathcal{S}_s^*\,ds & \int_0^t\mathcal{S}_s\mathcal{Q}_2\mathcal{S}_s^*\,ds\end{array}\right]\,
$$ 
The integrals in $\mathcal{Q}_t$ are interpreted as Bochner integrals in the space of Hilbert Schmidt operators.
\end{lemma}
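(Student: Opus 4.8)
The plan is to regard the pair as a single Hilbert-space valued stochastic integral against one Wiener process, and then read off both Gaussianity and the covariance from the It\^o isometry. Write $\mathbb{B}:=(\mathbb{B}_1,\mathbb{B}_2)$ for the $H_{\alpha}\times H_{\alpha}$-valued Wiener process, whose covariance operator $\mathcal{Q}$ has the block form provided by Theorem~\ref{s:Notwendigkeit A}, and introduce the block-diagonal operator $\mathcal{A}_u:=\text{diag}(\mathcal{S}_u,\mathcal{S}_u)\in L(H_{\alpha}\times H_{\alpha})$. Then the process in question is exactly $\int_0^t\mathcal{A}_{t-s}\,d\mathbb{B}(s)$, a stochastic integral of a deterministic, operator-valued integrand against a single Wiener process.

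First I would establish the mean-zero and Gaussian properties. For any fixed $U=(u_1,u_2)\in H_{\alpha}\times H_{\alpha}$ the scalar projection is
$$\langle (X_1,X_2),U\rangle=\int_0^t\langle d\mathbb{B}_1(s),\mathcal{S}_{t-s}^*u_1\rangle+\int_0^t\langle d\mathbb{B}_2(s),\mathcal{S}_{t-s}^*u_2\rangle,$$
a Wiener integral of a deterministic integrand, hence a centred real Gaussian variable. Since this holds for every $U$, and an $H_{\alpha}\times H_{\alpha}$-valued random variable is Gaussian precisely when all its one-dimensional projections are Gaussian, the pair $(X_1,X_2)$ is a mean-zero Gaussian variable; letting $t$ vary yields the asserted Gaussian process.

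Next I would compute $\mathcal{Q}_t$ block by block via the It\^o isometry. For the diagonal blocks, test vectors supported in a single component reduce to the marginal identity $\E[\langle X_i,u\rangle\langle X_i,v\rangle]=\int_0^t\langle\mathcal{S}_{t-s}\mathcal{Q}_i\mathcal{S}_{t-s}^*u,v\rangle\,ds$, and substituting $s\mapsto t-s$ produces $\int_0^t\mathcal{S}_s\mathcal{Q}_i\mathcal{S}_s^*\,ds$. For the off-diagonal block I would use the cross-covariance of the driving noise, $\E[\langle\mathbb{B}_1(t),a\rangle\langle\mathbb{B}_2(t),b\rangle]=t\langle\mathcal{Q}_{12}a,b\rangle$, derived just above the lemma from $\mathcal{Q}_{12}=\Pi_2\mathcal{Q}\Pi_1^*$. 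Testing with $U=(u,0)$, $V=(0,v)$ and moving the shifts onto their adjoints gives
$$\E[\langle X_1,u\rangle\langle X_2,v\rangle]=\int_0^t\langle\mathcal{Q}_{12}\mathcal{S}_{t-s}^*u,\mathcal{S}_{t-s}^*v\rangle\,ds=\int_0^t\langle\mathcal{S}_{t-s}\mathcal{Q}_{12}\mathcal{S}_{t-s}^*u,v\rangle\,ds,$$
so after the change of variables the lower-left block is $\int_0^t\mathcal{S}_s\mathcal{Q}_{12}\mathcal{S}_s^*\,ds$, and the upper-right block is its adjoint, consistent with self-adjointness of the covariance operator $\mathcal{Q}_t$.

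Finally I would verify that each entry is a well-defined Bochner integral in $L_{\text{HS}}(H_{\alpha})$. Each $\mathcal{Q}_{ij}$ is a block of the trace-class operator $\mathcal{Q}$ and is therefore Hilbert-Schmidt; since $s\mapsto\mathcal{S}_s$ is strongly continuous and uniformly bounded in operator norm by Lemma~\ref{lem:cont-shift-op}, the integrand $s\mapsto\mathcal{S}_s\mathcal{Q}_{ij}\mathcal{S}_s^*$ is strongly measurable with $\|\mathcal{S}_s\mathcal{Q}_{ij}\mathcal{S}_s^*\|_{\text{HS}}\leq\|\mathcal{S}_s\|_{\text{op}}^2\|\mathcal{Q}_{ij}\|_{\text{HS}}$ bounded on $[0,t]$, so the integral exists. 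I expect the main obstacle to be the off-diagonal bookkeeping: applying the joint It\^o isometry to two correlated noises while keeping precise track of which factor carries an adjoint, so that the resulting matrix is genuinely self-adjoint (and positive semidefinite, which is automatic for a covariance operator but worth confirming against the block formula).
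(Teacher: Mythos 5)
Your proposal is correct and takes essentially the same route as the paper: both view the pair as a single stochastic integral of the deterministic block-diagonal integrand $\mathrm{diag}(\mathcal{S}_{t-s},\mathcal{S}_{t-s})$ against the joint Wiener process with block covariance $\mathcal{Q}$, the only difference being that the paper establishes Gaussianity and the covariance in one stroke by computing the characteristic functional $\exp(-\tfrac12\langle\mathcal{Q}_t(u,v),(u,v)\rangle)$, whereas you separate this into Gaussianity of one-dimensional projections plus a block-by-block It\^o isometry --- the same underlying computation. Your explicit check of Bochner integrability in $L_{\text{HS}}$ is in fact slightly more careful than the paper's one-line remark, and your off-diagonal block correctly carries the adjoint on the second shift.
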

\begin{proof}
First, note that all the integrals in $\mathcal{Q}_t$ are well-defined as Bochner integrals because
the operator norm of the involved operators are bounded uniformly in time by 
Lemma~\ref{lem:cont-shift-op}.

Consider the characteristic function of the process at time $t\geq 0$. A straightforward computation gives,
\begin{align*}
\E&\left[\exp\left(\mathrm{i}\left\langle\left(\int_0^t\mathcal{S}_{t-s}\,d\mathbb{B}_1(s),\int_0^t\mathcal{S}_{t-s}\,d\mathbb{B}(s)\right),(u,v)\right\rangle\right)\right] \\
&\qquad=\exp\left(-\frac12\int_0^t\langle
\mathcal{Q}(\mathcal{S}^*_{t-s}u,\mathcal{S}_{t-s}^*v),(\mathcal{S}_{t-s}^*u,\mathcal{S}_{t-s}^*v)\rangle\,ds\right)\,.
\end{align*}
Using the definition of $\mathcal{Q}$ shows that 
$$
\E\left[\exp\left(\mathrm{i}\left\langle\left(\int_0^t\mathcal{S}_{t-s}\,d\mathbb{B}_1(s),\int_0^t\mathcal{S}_{t-s}\,d\mathbb{B}(s)\right),(u,v)\right\rangle\right)\right] =\exp\left(-\frac12\langle\mathcal{Q}_t(u,v),(u,v)\rangle\right)\,,
$$
and the result follows.
\end{proof}
It follows from this Lemma that
\begin{align*}
\text{Cov}(g_1(t,x),g_2(t,y))&=\langle \mathcal{Q}_t\Pi_1^*h_x,\Pi_2^*h_y\rangle \\
&=\langle\Pi_2\mathcal{Q}_t\Pi_1^*h_x,h_y\rangle \\
&=\left\langle\int_0^t\mathcal{S}_s\mathcal{Q}_{12}\mathcal{S}_sh_x\,ds,h_y\right\rangle \\
&=\int_0^t\langle\mathcal{S}_s\mathcal{Q}_{12}\mathcal{S}_s^*h_x,h_y\rangle\,ds \\
&=\int_0^t\delta_y\mathcal{S}_s\mathcal{Q}_{12}\mathcal{S}_s^*\delta^*_x(1)\,ds \\
&=\int_0^t\delta_{y+s}\mathcal{Q}_{12}\delta^*_{x+s}(1)\,ds \,.
\end{align*}
This provides us with an "explicit" expression for the covariance between the forward prices 
$g_1(t)$ and $g_2(t)$ at two different maturities $x$ and $y$. %The dual of the shift operator $\mathcal{S}_s^*$
%is know from Prop.~\ref{prop:dual-shift}.   

An application of the above considerations is the pricing of so-called {\it energy quanto options}. 
Such options have gained some attention in recent years since they offer a hedge against both 
price and volume risk in energy production. A typical payoff function at exercise time $\tau$ from a quanto option takes the form
$$
p(F_{\text{energy}}(\tau,T_1,T_2))\times q(F_{\text{temp}}(\tau,T_1,T_2))\,,
$$  
where $F_{\text{energy}}$ is the forward price on some energy like power or gas, and 
$F_{\text{temp}}$ the forward price on some temperature index. Both forwards have a
delivery\footnote{Obviously, temperature is not {\it delivered}, but the temperature futures is settled 
against the measured temperature index over this period.} 
period $[T_1,T_2]$, and it is assumed $\tau\leq T_1$. The functions $p$ and $q$ are real-valued and
of linear growth, and typically given by call and put option payoff functions. Temperature is 
closely linked to the demand for power, and the quanto options are structured to yield a payoff which 
depends on the product of price and volume. We refer to Caporin, Pres and Torro \cite{CPT} and
Benth, Lange and Myklebust~\cite{BLM} for
a detailed discussion of energy quanto options. From the considerations in 
Section~\ref{sect:hilbert-space-realization}, we can
express the price at  $t\leq\tau$ of the quanto options as
\begin{equation}
V(t,g_1(t),g_2(t))=\e^{-r(\tau-t)}\E\left[p(\mathcal{L}_{\text{energy}}(g_1(\tau))
q(\mathcal{L}_{\text{temp}}(g_2(\tau))\,|\,g_1(t),g_2(t)\right]\,.
\end{equation}
Here, we have assumed that
\begin{align}
F_{\text{energy}}(t,T_1,T_2)&:=\mathcal{L}_{\text{energy}}(g_1(t)):=\delta_{T_1-t}\circ\mathcal{D}_\ell^{w,1}(g_1(t)) \\
F_{\text{temp}}(t,T_1,T_2)&:=\mathcal{L}_{\text{temp}}(g_1(t)):=\delta_{T_1-t}\circ\mathcal{D}_\ell^{w,2}(g_2(t))\,,
\end{align} 
with $\mathcal{D}_{\ell}^{w,i}$ defined as in \eqref{def-D_ell-op} using the obvious meaning of the indexing by $i=1,2$. Since $\mathcal{L}_{\text{energy}}$ and $\mathcal{L}_{\text{temp}}$ are linear
functionals on $H_{\alpha}$, it follows from Thm.~2.1 in Benth and Kr\"uhner~\cite{BK-HJM} that 
$$
(F_{\text{energy}}(t,T_1,T_2),F_{\text{temp}}(t,T_1,T_2))
$$ 
is a bivariate Gaussian random variable
on $\R^2$. From Lemma~\ref{lemma:bivariate-gaussian}, we can compute the covariance as
\begin{align*}
\text{Cov}(F_{\text{energy}}(t,T_1,T_2),F_{\text{temp}}(t,T_1,T_2))&=\E\left[\mathcal{L}_{\text{energy}}
\int_0^t\mathcal{S}_{t-s}\,d\mathbb{B}_1(s)\cdot\mathcal{L}_{\text{temp}}\int_0^t\mathcal{S}_{t-s}\,d\mathbb{B}_2(s)\right] \\
&=\E\left[\langle(\int_0^t\mathcal{S}_{t-s}\,d\mathbb{B}_1(s),\int_0^t\mathcal{S}_{t-s}\,d\mathbb{B}_2(s)),\Pi_1^*\mathcal{L}_{\text{energy}}^*(1)\rangle \right. \\
&\qquad\left.\times\langle(\int_0^t\mathcal{S}_{t-s}\,d\mathbb{B}_1(s),\int_0^t\mathcal{S}_{t-s}\,d\mathbb{B}_2(s)),\Pi^*_2\mathcal{L}^*_{\text{temp}}(1)\rangle\right] \\
&=\langle\mathcal{Q}_t(\Pi^*_1\mathcal{L}_{\text{energy}}^*(1),\Pi_2^*\mathcal{L}_{\text{temp}}^*(1)),(\Pi^*_1\mathcal{L}_{\text{energy}}^*(1),\Pi_2^*\mathcal{L}_{\text{temp}}^*(1))\rangle \\
&=\int_0^tC_{12}(s)\,ds\,,
\end{align*}
where
\begin{align*}
C_{12}(s)&=\mathcal{L}_{\text{energy}}\Pi_1\mathcal{S}_s\mathcal{Q}_1\mathcal{S}_s^*
\Pi^*_1\mathcal{L}^*_{\text{energy}}(1)+\mathcal{L}_{\text{energy}}\Pi_1\mathcal{S}_s\mathcal{Q}_{12}^*
\mathcal{S}_s^*\Pi_2^*\mathcal{L}_{\text{temp}}^*(1) \\
&\qquad+\mathcal{L}_{\text{temp}}\Pi_2\mathcal{S}_s\mathcal{Q}_{12}\mathcal{S}_s^*\Pi_1^*\mathcal{L}_{\text{energy}}^*(1)
+\mathcal{L}_{\text{temp}}\Pi_2\mathcal{S}_s\mathcal{Q}_{2}\mathcal{S}_s^*\Pi_2^*\mathcal{L}_{\text{temp}}^*(1)\,.
\end{align*}
Thus, we can obtain a price $V(t,g_1(t),g_2(t))$ in terms of an integral with respect to a
Gaussian bivariate probability distribution, involving similar operators (and their duals) as for the 
European options studied in Section~\ref{sect:european-option}. We remark in passing that Benth, Lange and Myklebust~\cite{BLM}
derive a Black \& Scholes-like pricing formula for a call-call quanto options, which is applied to price
such derivatives written on Henry Hub gas futures traded at NYMEX and HDD/CDD temperature futures
traded at CME.

We next return back to the general considerations on the factorization of the covariance operator
$\mathcal{Q}$ of a bivariate square-integrable random variable in $H_1\times H_2$. 
If we want to construct an operator $\mathcal{Q}$ as in 
Thm~\ref{s:Notwendigkeit A}, then the operator $\mathcal{Q}_{12}$ appearing there 
has necessarily has to satisfy condition (ii). As we will show in the next theorem, condition (ii) of Thm~\ref{s:Notwendigkeit A} is sufficient as well.
\begin{theorem}\label{s:Hinreichend A}
 Let $H_i$ be a separable Hilbert space, $\mathcal{Q}_i$ be a positive semidefinite 
trace class operator on $H_i$ and define 
$\mathcal{R}_i:=\sqrt{\mathcal{Q}_i}$ for $i=1,2$. 
Let $\mathcal{Q}_{12}\in L(H_1,H_2)$ such that
 \begin{align*} \vert \< \mathcal Q_{12}u,v\>\vert \leq \Vert R_1u\Vert_1\Vert R_2v\Vert_2 \end{align*}
 for any $u\in H_1$, $v\in H_2$. Then, 
$$
\mathcal{Q}:=\left(\begin{matrix}\mathcal{Q}_1&\mathcal{Q}_{12}^*\\
\mathcal{Q}_{12}&\mathcal{Q}_2\end{matrix}\right)\,,
$$
defines a positive semidefinite operator on $H_1\times H_2$. Moreover, 
$\mathcal{Q}$ is positive definite if and only if $\mathcal{Q}_1$, 
$\mathcal{Q}_2$ are positive definite and  
 \begin{align*} \vert \< \mathcal Q_{12}u,v\>\vert < \Vert \mathcal R_1u\Vert_1\Vert \mathcal  R_2v\Vert_2 \end{align*}
for any $u\in H_1\backslash\{0\}$, $v\in H_2\backslash\{0\}$.
\end{theorem}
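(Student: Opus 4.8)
The plan is to reduce the entire statement to the single scalar identity obtained by evaluating the quadratic form of $\mathcal{Q}$ on a generic vector $(u,v)\in H_1\times H_2$. First I would compute, using self-adjointness of $\mathcal{Q}_i$, the relation $\mathcal{Q}_i=\mathcal{R}_i^2$, and the fact that the off-diagonal blocks are mutual adjoints,
\begin{equation*}
\<\mathcal{Q}(u,v),(u,v)\> = \Vert\mathcal{R}_1u\Vert_1^2 + 2\<\mathcal{Q}_{12}u,v\> + \Vert\mathcal{R}_2v\Vert_2^2\,,
\end{equation*}
where the cross term arises because $\<\mathcal{Q}_{12}^*v,u\>+\<\mathcal{Q}_{12}u,v\>=2\<\mathcal{Q}_{12}u,v\>$. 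Feeding in the hypothesis $\vert\<\mathcal{Q}_{12}u,v\>\vert\leq\Vert\mathcal{R}_1u\Vert_1\Vert\mathcal{R}_2v\Vert_2$ to bound the cross term from below completes the square,
\begin{equation*}
\<\mathcal{Q}(u,v),(u,v)\>\geq \Vert\mathcal{R}_1u\Vert_1^2 - 2\Vert\mathcal{R}_1u\Vert_1\Vert\mathcal{R}_2v\Vert_2 + \Vert\mathcal{R}_2v\Vert_2^2 = \left(\Vert\mathcal{R}_1u\Vert_1-\Vert\mathcal{R}_2v\Vert_2\right)^2\geq 0\,,
\end{equation*}
which immediately yields positive semidefiniteness.

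For the sufficiency direction of the equivalence I would reuse this completion of squares, now with the strict inequality. Given $(u,v)\neq 0$, I treat the degenerate cases first: if $u=0$ or $v=0$ the form collapses to $\Vert\mathcal{R}_2v\Vert_2^2>0$ or $\Vert\mathcal{R}_1u\Vert_1^2>0$ by positive definiteness of $\mathcal{Q}_2$, $\mathcal{Q}_1$. When both $u,v\neq 0$, positive definiteness of $\mathcal{Q}_1,\mathcal{Q}_2$ forces $\Vert\mathcal{R}_1u\Vert_1\Vert\mathcal{R}_2v\Vert_2>0$, so the strict hypothesis gives $-2\vert\<\mathcal{Q}_{12}u,v\>\vert > -2\Vert\mathcal{R}_1u\Vert_1\Vert\mathcal{R}_2v\Vert_2$, and the chain above improves to $\<\mathcal{Q}(u,v),(u,v)\> > (\Vert\mathcal{R}_1u\Vert_1-\Vert\mathcal{R}_2v\Vert_2)^2\geq 0$. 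Hence $\mathcal{Q}$ is positive definite.

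The necessity direction I would argue by contraposition. If $\mathcal{Q}_1$ is not positive definite, pick $u\neq 0$ with $\mathcal{R}_1u=0$; then $(u,0)\neq 0$ but $\<\mathcal{Q}(u,0),(u,0)\>=\Vert\mathcal{R}_1u\Vert_1^2=0$, so $\mathcal{Q}$ fails to be positive definite (and symmetrically for $\mathcal{Q}_2$). If instead $\mathcal{Q}_1,\mathcal{Q}_2$ are positive definite but equality $\vert\<\mathcal{Q}_{12}u,v\>\vert=\Vert\mathcal{R}_1u\Vert_1\Vert\mathcal{R}_2v\Vert_2$ holds for some nonzero $u,v$ (so both norms are strictly positive), I would replace $u$ by $-u$ if necessary to arrange $\<\mathcal{Q}_{12}u,v\>=-\Vert\mathcal{R}_1u\Vert_1\Vert\mathcal{R}_2v\Vert_2$, then rescale by $\lambda:=\Vert\mathcal{R}_2v\Vert_2/\Vert\mathcal{R}_1u\Vert_1>0$ so that $\Vert\mathcal{R}_1(\lambda u)\Vert_1=\Vert\mathcal{R}_2v\Vert_2$. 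Substituting $(\lambda u,v)$ into the identity collapses the square to $0$ while $(\lambda u,v)\neq 0$, contradicting positive definiteness.

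The only place requiring genuine care is this necessity rescaling: the sign normalization of the cross term together with the choice of $\lambda$ equalizing the two norms is precisely what realizes the degenerate configuration in which the completion of squares vanishes. Everything else is the one quadratic-form identity applied three times with the appropriate strict or non-strict version of the Cauchy--Schwarz-type bound. I would also remark at the outset that $\mathcal{Q}$ is self-adjoint by construction, since its off-diagonal blocks $\mathcal{Q}_{12}$ and $\mathcal{Q}_{12}^*$ are mutual adjoints, so the terminology ``positive (semi)definite'' is unambiguous.
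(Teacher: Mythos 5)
Your proof is correct and follows the same completion-of-squares argument on the quadratic form $\<\mathcal{Q}(u,v),(u,v)\>$ that the paper uses. In fact it is more complete than the paper's own proof: the paper disposes of the entire ``positive definite if and only if'' clause with the single remark that under the additional assumptions the first inequality is strict, whereas you supply the degenerate cases $u=0$ or $v=0$ and, more importantly, the necessity direction --- the sign-flip and rescaling argument producing a null vector $(\lambda u, v)$ when equality is attained --- which the paper does not address at all.
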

\begin{proof}
 Let $u\in H_1$ and $v\in H_2$. Then
 \begin{align*}
   \<\mathcal{Q}(u,v),(u,v)\> &= \<\mathcal{Q}_1u,u\> + \<\mathcal{Q}_2v,v\> + 2\<\mathcal{Q}_{12}u,v\> \\
                            &\geq \Vert \mathcal R_1u\Vert_1^2 + \Vert \mathcal R_2v\Vert_2^2 - 2\Vert \mathcal R_1u\Vert_1\Vert \mathcal  R_2v\Vert_2 \\
                            &= (\Vert \mathcal R_1u\Vert_1 - \Vert \mathcal  R_2v\Vert_2)^2 \\
                            &\geq 0.
 \end{align*}
Under the additional assumptions, the first inequality is strict.
\end{proof}

We now analyse the pricing of spread options in a simple setting:
Let us consider a "bivariate" exponential model $g_i(t)=\exp(\widetilde{g}_i(t))$, $i=1,2$, defined on the space 
$H_{\alpha}\times H_{\alpha}$ by a dynamics similar to \eqref{eq:bivariate-hjm-dyn} (but with a drift) driven by 
$(\mathbb{L}_1(t),\mathbb{L}_2(t))=(\mathbb{B}_1(t),\mathbb{B}_2(t))$:
 \begin{align*}
d\widetilde{g}_1(t)&=\partial_x \widetilde{g}_1(t)\,dt+\mu_1(t)\,dt+\sigma_1(t)\,d\mathbb{B}_1(t) \\
d\widetilde{g}_2(t)&=\partial_x \widetilde{g}_2(t)\,dt+\mu_2(t)\,dt+\sigma_2(t)\,d\mathbb{B}_2(t)\,.
\end{align*}
Here, we suppose that $\sigma_i:\R_+\rightarrow L(H_{\alpha})$ is non-random and 
$\sigma_i\in\mathcal{L}_{\mathbb{B}_i}^2(H_{\alpha})$, $i=1,2$. 
Thus, we have the forward price dynamics $f_i(\tau,T)$ given $f_i(t,T)$ for $t\leq\tau\leq T$,
\begin{equation}
f_i(\tau,T)=f_i(t,T)\exp\left(\int_t^{\tau}\delta_{T-s}\mu_i(s)\,ds+\delta_{T-\tau}\int_t^{\tau}\mathcal{S}_{\tau-s}\sigma_i(s)\,d\mathbb{B}_i(s)\right)\,,
\end{equation}
for $i=1,2$. Introduce the notation
\begin{equation}
\widetilde{\sigma}_i^2(s,T)=\delta_{T-s}\sigma_i(s)\mathcal{Q}_i\sigma_i^*(s)\delta_{T-s}^*(1)\,.
\end{equation}
for $i=1,2$. From Thm.~2.1 in Benth and Kr\"uhner~\cite{BK-HJM} it follows for $i=1,2$,
$$
\delta_{T-\tau}\int_t^{\tau}\mathcal{S}_{\tau-s}\sigma_i(s)\,d\mathbb{B}_i(s)=\int_t^{\tau}\widetilde{\sigma}_i(s,T)\,dB_i(s)\,,
$$
where $B_i$ is a real-valued Brownian motion. 
By Lemma~\ref{lemma:drift-condition}, we have the no-arbitrage drift condition
\begin{equation}
\widetilde{\mu}_i(s,T):=\delta_{T-s}\mu_i(s)=-\frac12\widetilde{\sigma}^2_i(s,T)\,.
\end{equation}
Remark that, as a consequence of the non-random assumption on $\sigma_i(s)$,  
$\int_t^{\tau}\widetilde{\sigma}_i(s,T)\,dB_i(s), i=1,2$ are two Gaussian random variables on $\R$ with mean zero and
variance $\int_t^{\tau}\widetilde{\sigma}_i^2(s,T)\,ds, i=1,2$, resp. Moreover, a direct computation using the above
theory reveals the covariance between these two random variables:
\begin{align*}
\E&\left[\int_t^{\tau}\widetilde{\sigma}_1(s,T)\,dB_1(s)\int_t^{\tau}\widetilde{\sigma}_2(s,T)\,dB_2(s)\right] \\
&\qquad=\E\left[\delta_{T-\tau}\int_t^{\tau}\mathcal{S}_{\tau-s}\sigma_1(s)\,d\mathbb{B}_1(s)
\times\delta_{T-\tau}\int_t^{\tau}\mathcal{S}_{\tau-s}\sigma_2(s)\,d\mathbb{B}_2(s)\right] \\
&\qquad=\E\left[\langle\int_t^{\tau}\mathcal{S}_{\tau-s}\sigma_1(s)\,d\mathbb{B}_1(s),h_{T-\tau}\rangle
\langle\int_t^{\tau}\mathcal{S}_{\tau-s}\sigma_2(s)\,d\mathbb{B}_2(s),h_{T-\tau}\rangle\right] \\
&\qquad=\int_t^{\tau}\langle\mathcal{Q}\Pi_1^*\sigma_1^*(s)\mathcal{S}_{\tau-s}^*h_{T-\tau},\Pi_2^*\sigma_2^*(s)
\mathcal{S}_{\tau-s}^*h_{T-\tau}\rangle\,ds \\
&\qquad=\int_t^{\tau}\delta_{T-s}\sigma_2(s)\mathcal{Q}_{12}\sigma^*_1(s)\delta_{T-s}^*(1)\,ds \\
&\qquad:=\int_t^{\tau}\widetilde{\sigma}_{12}(s,T)\,ds\,.
\end{align*}
Hence, for $i=1,2$,
\begin{equation}
\label{eq:bivariate-exp-dyn}
f_i(\tau,T)=f_i(t,T)\exp\left(-\frac12\int_t^{\tau}\widetilde{\sigma}_i^2(s,T)\,ds+\int_t^{\tau}\widetilde{\sigma}_i(s)\,dB_i(s)\right)\,,
\end{equation}
where we know that the two stochastic integrals form a bivariate Gaussian random variable with known variance-covariance matrix. 
The price at time $t$ of a call option written on the spread between the two forwards with exercise at time $t\leq \tau\leq T$ will be
$$
V(t)=\e^{-r(\tau-t)}\E\left[\max\left(f_1(\tau,T)-f_2(\tau,T),0\right)\,|\,\mathcal{F}_t\right]\,.
$$ 
Using the representation of the forward prices in \eqref{eq:bivariate-exp-dyn} we find the spread option pricing 
formula
\begin{equation}
V(t)=\e^{-r(\tau-t)}\left\{f_1(t,T)\Phi(d_+)-f_2(t,T)\Phi(d_-)\right\}\,,
\end{equation}  
where $\Phi$ is the cumulative standard normal distribution function,
$$
d_{\pm}=\frac{\ln(f_1(t,T)/f_2(t,T))\pm\Sigma^2(t,\tau,T)/2}{\Sigma(t,\tau,T)}\,,
$$
and
$$
\Sigma^2(t,\tau,T)=\int_t^{\tau}\widetilde{\sigma}_1^2-2\widetilde{\sigma}_{12}(s,T)+\widetilde{\sigma}_2^2(s,T)\,ds\,.
$$
We have recovered the Margrabe formula (see Margrabe~\cite{Margrabe}) with time-dependent volatility and correlation. 
Observe that the spread option price becomes a function of the initial forward prices at time $t$ with delivery at time $T$.

We proceed with some more general considerations on "bivariate" random variables in Hilbert spaces and their 
representation.
If $(X,Y)$ is a $2$-dimensional Gaussian random variable, we know from classical probability theory that 
there exist a Gaussian random variable $Z$ being independent of $X$ and $a\in\mathbb R$ such that $Y = aX+Z$. The next Proposition is a generalisation of this statement to square-integrable Hilbert-space valued
random variables: 
\begin{proposition}\label{p:unkorrelierte Darstellung}
 Let $X_i$ be an $H_i$-valued square-integrable random variable with covariance 
$\mathcal{Q}_i$ and let $\mathcal{Q}_{12}\in L(H_1,H_2)$ be the operator given in 
Thm~\ref{s:Notwendigkeit A} such that
$$
\mathcal{Q}:=\left(\begin{matrix}\mathcal{Q}_1&\mathcal{Q}_{12}^*\\
\mathcal{Q}_{12}&\mathcal{Q}_2\end{matrix}\right)\,,
$$
is the covariance operator of $(X_1,X_2)$. Assume that $\mathrm{ran}(\mathcal Q^*_{12})\subseteq \mathrm{ran}(\mathcal Q_1)$. Then the closure $\mathcal B$ of the densely defined operator $\mathcal Q_{12}\mathcal Q_1^{-1}$ is in $L(H_1,H_2)$ where $\mathcal Q_1^{-1}$ denotes the pseudo-inverse of $\mathcal Q_1$. Define $Z:= X_2 - \mathcal BX_1$. Then $Z$ is a centered, square integrable and $H_2$-valued random variable with $\E(\<X_1,u\>\<Z,v\>)=0$ for any $u\in H_1$, $v\in H_2$, i.e.\ $X_1$ and $Z$ are uncorrelated.

In particular, the covariance operator of $(X_1,Z)$ is given by
$$
\mathcal{Q}_{X_1,Z}:=\left(\begin{matrix}\mathcal{Q}_1&0\\0&\mathcal{Q}_Z
\end{matrix}\right)\,,
$$
where $\mathcal Q_Z$ denotes the covariance operator of $Z$.
\end{proposition}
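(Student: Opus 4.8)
The plan is to proceed in three stages: first establish that $\mathcal B$ is a genuine bounded operator, then read off the square-integrability and centering of $Z$, and finally compute the cross second moments to obtain the block-diagonal covariance.

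The first stage is the heart of the matter, and I would exploit the range assumption $\mathrm{ran}(\mathcal Q_{12}^*)\subseteq\mathrm{ran}(\mathcal Q_1)$ to factor $\mathcal Q_{12}$ through $\mathcal Q_1$. Using the spectral decomposition $\mathcal Q_1 f=\sum_i\lambda_i\<e_i,f\>e_i$ from Proposition \ref{P:funktionalkalkuel}, the pseudo-inverse $\mathcal Q_1^{-1}=\phi(\mathcal Q_1)$ with $\phi(x)=1_{\{x\neq0\}}/x$ is a closed, densely defined operator whose domain contains $\ker\mathcal Q_1\oplus\mathrm{ran}\mathcal Q_1$; in particular $\mathcal Q_{12}\mathcal Q_1^{-1}$ is densely defined, so its closure makes sense. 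For each $v\in H_2$ the assumption yields a unique $w\in\overline{\mathrm{ran}\mathcal Q_1}=(\ker\mathcal Q_1)^\perp$ with $\mathcal Q_1 w=\mathcal Q_{12}^*v$; setting $Cv:=w$ defines a linear map $C:H_2\to H_1$ with $\mathcal Q_1 C=\mathcal Q_{12}^*$, and a closed-graph argument (using that $\mathcal Q_1$ is injective on the closed subspace $(\ker\mathcal Q_1)^\perp$) shows $C$ is bounded. Taking adjoints gives $\mathcal Q_{12}=C^*\mathcal Q_1$. Since $\mathcal Q_1\mathcal Q_1^{-1}=P$, the orthogonal projection onto $\overline{\mathrm{ran}\mathcal Q_1}$, I obtain on $\mathrm{dom}(\mathcal Q_1^{-1})$ the identity $\mathcal Q_{12}\mathcal Q_1^{-1}=C^*P$, so the closure is the bounded operator $\mathcal B=C^*P\in L(H_1,H_2)$.

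Granting boundedness, $Z=X_2-\mathcal BX_1$ is $H_2$-valued and square integrable because $\|\mathcal BX_1\|\leq\|\mathcal B\|_{\mathrm{op}}\|X_1\|$, and it is centered since $X_1,X_2$ are centered, forcing $\E Z=\E X_2-\mathcal B\E X_1=0$. For the uncorrelatedness I would fix $u\in H_1$, $v\in H_2$ and expand
$$
\E\big(\<X_1,u\>\<Z,v\>\big)=\E\big(\<X_1,u\>\<X_2,v\>\big)-\E\big(\<X_1,u\>\<X_1,\mathcal B^*v\>\big).
$$
The first term equals $\<\mathcal Q_{12}u,v\>$ by definition of the off-diagonal block of $\mathcal Q$ (apply the covariance identity to $(u,0)$ and $(0,v)$ in $H_1\times H_2$), while the second equals $\<\mathcal Q_1 u,\mathcal B^*v\>=\<\mathcal B\mathcal Q_1 u,v\>$. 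The key algebraic identity is $\mathcal B\mathcal Q_1=\mathcal Q_{12}$: indeed $\mathcal B\mathcal Q_1=C^*P\mathcal Q_1=C^*\mathcal Q_1=\mathcal Q_{12}$ (equivalently, condition (ii) of Theorem \ref{s:Notwendigkeit A} forces $\mathcal Q_{12}$ to vanish on $\ker\mathcal Q_1$, so $\mathcal Q_{12}P=\mathcal Q_{12}$, and $\mathcal Q_1^{-1}\mathcal Q_1=P$ gives the same conclusion). Hence the two terms cancel and $\E(\<X_1,u\>\<Z,v\>)=0$.

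Finally, applying Theorem \ref{s:Notwendigkeit A} to the pair $(X_1,Z)$ shows its covariance operator has the block form with diagonal entries $\mathcal Q_1$ and $\mathcal Q_Z$ and off-diagonal block $\mathcal Q_{1Z}$ determined by $\<\mathcal Q_{1Z}u,v\>=\E(\<X_1,u\>\<Z,v\>)=0$; thus $\mathcal Q_{1Z}=0$ and $\mathcal Q_{X_1,Z}$ is block diagonal. I expect the only real obstacle to be the first stage: one must be careful that the full range inclusion, not merely its closure (which is all Theorem \ref{s:Notwendigkeit A}(iii) provides), is what guarantees the factorization $\mathcal Q_{12}=C^*\mathcal Q_1$ and hence the boundedness of $\mathcal B$. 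Once $\mathcal B\in L(H_1,H_2)$ and $\mathcal B\mathcal Q_1=\mathcal Q_{12}$ are in hand, the remaining steps are routine second-moment computations.
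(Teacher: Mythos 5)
Your proposal is correct and follows essentially the same route as the paper: both define the bounded operator $\mathcal C=\mathcal Q_1^{-1}\mathcal Q_{12}^*$ (your $C$ is exactly this solution map), invoke the closed graph theorem using the full-range hypothesis, take $\mathcal B=\mathcal C^*$ as the closure of $\mathcal Q_{12}\mathcal Q_1^{-1}$, and verify uncorrelatedness via the identity $\mathcal B\mathcal Q_1=\mathcal Q_{12}$. Your closing remark that the full range inclusion (rather than the closure provided by Theorem \ref{s:Notwendigkeit A}(iii)) is what makes $\mathcal C$ everywhere defined is exactly the point the paper's proof relies on.
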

\begin{proof}
  $\mathcal Q_{12}\mathcal Q_1^{-1}$ is densely defined because its domain is the domain of $\mathcal Q_1^{-1}$. Define $\mathcal C := \mathcal Q_1^{-1}\mathcal Q_{12}^*$ which is a closed operator whose domain is $H_2$ by assumption. The closed graph theorem yields that $\mathcal C$ is continuous and linear. Consequently, its dual is a continuous linear continuation of $\mathcal Q_{12}\mathcal Q_1^{-1}$. However, the latter operator is densely defined and hence $\mathcal B:=\mathcal C^*$ is its closure. Now, let $u\in H_1$, $v\in H_2$. Then
   \begin{align*}
     \E[\<X_1,u\>\<\mathcal BX_1,v\>] &= \<\mathcal Q_1u,\mathcal B^*v\> \\
                                      &= \<\mathcal Q_1u,\mathcal Q_1^{-1}\mathcal Q_{12}^*v\> \\
                                      &= \<\mathcal Q_{12}u,v\> \\
                                      &= \E[\<X_1,u\>\<X_2,v\>].
   \end{align*}
  Thus $X_1$ and $Z$ are uncorrelated and the claim follows.
\end{proof}

This result can be applied to state a representation of the $H_1\times H_2$-valued Wiener process 
$(\mathbb{B}_1,\mathbb{B}_2)$.
\begin{proposition}\label{prop:bivariate-brownian-repr}
Let $\mathbb B_1$, $\mathbb B_2$ be $H_1$ resp.\ $H_2$ valued Brownian motions where $H_1$, $H_2$ are separable Hilbert spaces. Suppose that the random variables $\mathbb{B}_i(1)$, $i=1,2$ satisfy the conditions in Proposition \ref{p:unkorrelierte Darstellung}. Then, there exists an operator $\mathcal{B}\in L(H_1,H_2)$ such that $\mathbb W:=\mathbb B_2 - \mathcal B\mathbb B_1$ is an $H_2$-valued Brownian motion which is independent of $H_1$.
\end{proposition}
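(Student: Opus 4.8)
The plan is to obtain $\mathcal B$ directly from Proposition~\ref{p:unkorrelierte Darstellung}. I would apply that result to the square-integrable variables $X_i:=\mathbb B_i(1)$, $i=1,2$, whose joint covariance operator is the $\mathcal Q$ with off-diagonal block $\mathcal Q_{12}$; the hypothesis $\ran(\mathcal Q_{12}^*)\subseteq\ran(\mathcal Q_1)$ is exactly what is assumed. This yields the bounded operator $\mathcal B=(\mathcal Q_1^{-1}\mathcal Q_{12}^*)^*\in L(H_1,H_2)$ and tells us that $Z:=X_2-\mathcal B X_1$ is centered, square integrable, and uncorrelated with $X_1$. I then \emph{define} the candidate process by $\mathbb W(t):=\mathbb B_2(t)-\mathcal B\mathbb B_1(t)$ for $t\geq0$, so that $\mathbb W(1)=Z$.

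Next I would check that $\mathbb W$ is an $H_2$-valued Brownian motion. The point is that $(\mathbb B_1,\mathbb B_2)$ is an $H_1\times H_2$-valued Wiener process and the map $\Psi:H_1\times H_2\to H_2$, $\Psi(a,b):=b-\mathcal B a$, is bounded and linear, with $\mathbb W(t)=\Psi(\mathbb B_1(t),\mathbb B_2(t))$. Applying a fixed bounded linear operator pathwise preserves each defining property of a Wiener process: $\mathbb W(0)=0$, the increments remain independent and stationary, continuity of paths is kept, and the image of a Gaussian law is Gaussian and mean-zero. Hence $\mathbb W$ is a mean-zero Gaussian process with stationary independent increments and continuous paths, i.e.\ an $H_2$-valued Brownian motion, whose covariance operator is $t\mapsto t\mathcal Q_Z$ with $\mathcal Q_Z$ the (positive semidefinite, trace class) covariance operator of $Z$.

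The substantive step is independence of $\mathbb W$ from $\mathbb B_1$. Since $(\mathbb B_1,\mathbb W)$ is again a bounded linear image of the jointly Gaussian process $(\mathbb B_1,\mathbb B_2)$, it is jointly Gaussian, so it suffices to show that all cross-covariances vanish. For $s,t\geq0$, $u\in H_1$, $v\in H_2$ I compute
\begin{align*}
\E\big[\<\mathbb B_1(t),u\>\<\mathbb W(s),v\>\big]
&=\E\big[\<\mathbb B_1(t),u\>\<\mathbb B_2(s),v\>\big]-\E\big[\<\mathbb B_1(t),u\>\<\mathbb B_1(s),\mathcal B^*v\>\big] \\
&=(t\wedge s)\<\mathcal Q_{12}u,v\>-(t\wedge s)\<\mathcal Q_1 u,\mathcal B^*v\>,
\end{align*}
using the covariance structure of the Wiener process $(\mathbb B_1,\mathbb B_2)$. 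The identity $\<\mathcal Q_1 u,\mathcal B^*v\>=\<\mathcal Q_{12}u,v\>$, established inside the proof of Proposition~\ref{p:unkorrelierte Darstellung} (where $\mathcal B^*=\mathcal Q_1^{-1}\mathcal Q_{12}^*$), makes the two terms cancel, so the cross-covariance is $0$ for all $s,t,u,v$. Finite-dimensional joint Gaussianity together with these vanishing cross-covariances yields that the family $\{\<\mathbb B_1(t),u\>\}_{t,u}$ is independent of $\{\<\mathbb W(s),v\>\}_{s,v}$, and hence, the relevant Borel $\sigma$-algebras being generated by such functionals, $\mathbb B_1$ and $\mathbb W$ are independent. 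I expect the main obstacle to be precisely this passage from pairwise-time uncorrelatedness to independence of the two \emph{processes}: one must invoke joint Gaussianity of $(\mathbb B_1,\mathbb W)$ at arbitrary finite collections of times, rather than merely matching the time-$1$ marginals supplied by Proposition~\ref{p:unkorrelierte Darstellung}.
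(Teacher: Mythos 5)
Your proposal is correct and follows essentially the same route as the paper: take $\mathcal B$ from Proposition \ref{p:unkorrelierte Darstellung}, observe that $(\mathbb B_1,\mathbb W)$ is again a Brownian motion as a bounded linear image of $(\mathbb B_1,\mathbb B_2)$, and conclude independence from vanishing cross-covariances together with joint Gaussianity. You merely spell out two steps the paper leaves implicit, namely the cross-covariance at distinct times $s\neq t$ and the passage from uncorrelatedness to independence of the full processes.
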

\begin{proof}
 Let $\mathcal B$ be the operator given in Proposition \ref{p:unkorrelierte Darstellung} for the random variables $\mathbb B_1$ and $\mathbb B_2$. Then, $(\mathbb B_1,\mathbb W)$ is an other Brownian motion. Moreover,
  $$ \E[\<\mathbb B_1(t),u\>\<\mathbb W(t),v\>] = t \E[\<\mathbb B_1(1),u\>\<\mathbb W(1),v\>] = 0 $$
 for any $t\geq0$. The claim follows.
\end{proof}
The proposition allows us to model a "bivariate" forward dynamics driven by two dependent
Brownian motions
\begin{align*}
dg_1(t)&=\partial_x g_1(t)\,dt+\sigma_1(t,g_1(t),g_2(t))\,d\mathbb{B}_{1}(t) \\
dg_2(t)&=\partial_x g_2(t)\,dt+\sigma_2(t,g_1(t),g_2(t))\,d\mathbb{B}_{2}(t) \,,
\end{align*}
by a dynamics driven by two independent Brownian motions,
\begin{align*}
dg_1(t)&=\partial_x g_1(t)\,dt+\sigma_1(t,g_1(t),g_2(t))\,d\mathbb{B}_{1}(t) \\
dg_2(t)&=\partial_x g_2(t)\,dt+
\sigma_2(t,g_1(t),g_2(t))\,d\mathbb{W}(t)-\sigma_2(t,g_1(t),g_2(t))\mathcal{B}\,d\mathbb{B}_1(t) \,.
\end{align*}
Here, the operator $\mathcal{B}$ plays the role of a "correlation" coefficient, describing
how the two noises $\mathbb{B}_1$ and $\mathbb{B}_2$ depend statistically. Indeed, choosing
$H_i=H_{\alpha}$, $i=1,2$ to be the Filipovic space, we see that 
\begin{align*}
\E[\delta_x\mathbb{B}_1(t)\delta_y\mathbb{B}_2(t)]
&=\E[\<\mathbb{B}_1(t),h_x\>\<\mathbb{B}_2(t),h_y\>] \\
&=\E[\<\mathbb{B}_1(t),h_x\>\<\mathcal{B}\mathbb{B}_1(t),h_y\>]\\
&=t\<\mathcal{B}\mathcal{Q}_1h_x,h_y\> \\
&=t\delta_y\mathcal{B}\mathcal{Q}_1\delta_x^*(1)\,,
\end{align*}
for $x,y\in\R_+$. Hence, the correlation between $\mathbb{B}_1(t,x)$ and $\mathbb{B}_2(t,y)$ is
modelled by the operator $\mathcal{B}$. We can derive a similar representation for two L\'evy processes, but
they will not be independent but only uncorrelated in most cases.

As a final remark we like to note that the 'odd' range condition in Proposition \ref{prop:bivariate-brownian-repr} is needed to ensure the existence of a linear operator from $H_1$ to $H_2$. However, in the Gaussian case it is possible to find a linear operator $\mathcal T$ from $L^2(\Omega,\mathcal A,P,H_1)$ to $L^2(\Omega,\mathcal A,P,H_2)$ yielding an independent decomposition of the second factor. We now give the precise statement.
\begin{proposition}
  Let $H_1$, $H_2$ be separable Hilbert spaces and $(X_1,X_2)$ be an $H_1\times H_2$ valued Gaussian random variable. Let $\mathcal B$ be the closure of $Q_{12}^*Q_1^{-1}$. Then, $P(X_1 \in \mathrm{dom}(\mathcal B)) = 1$ and
  $Z := X_2-\mathcal BX_1$ is Gaussian and $X_1,Z$ are independent.
\end{proposition}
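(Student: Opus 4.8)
The plan is to prove the statement by approximating the (generally unbounded) operator $\mathcal B$ by finite–rank truncations and passing to the limit, using the Gaussian structure to promote $L^{2}$–convergence to almost sure domain membership. First I would diagonalise: let $(e_{n})_{n}$ be an orthonormal basis of $\overline{\ran(\mathcal Q_{1})}$ consisting of eigenvectors of $\mathcal Q_{1}$ with eigenvalues $\lambda_{n}>0$, and let $P_{N}$ be the orthogonal projection onto $\mathrm{span}(e_{1},\dots,e_{N})$. By the Karhunen–Lo\`eve expansion $X_{1}=\sum_{n}\sqrt{\lambda_{n}}\,\xi_{n}e_{n}$ with $(\xi_{n})_{n}$ i.i.d.\ standard normal, and $X_{1}\in\overline{\ran(\mathcal Q_{1})}$ almost surely since $\E\|(\mathrm{Id}-P)X_{1}\|^{2}=\tr((\mathrm{Id}-P)\mathcal Q_{1})=0$ for $P$ the projection onto $\overline{\ran(\mathcal Q_{1})}$. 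The approximants $\mathcal B_{N}:=\mathcal Q_{12}\mathcal Q_{1}^{-1}P_{N}$ are finite rank and bounded, each $P_{N}X_{1}\in\dom(\mathcal Q_{1}^{-1})\subseteq\dom(\mathcal B)$ with $\mathcal B P_{N}X_{1}=\mathcal B_{N}X_{1}=\sum_{n\le N}\lambda_{n}^{-1/2}\xi_{n}\mathcal Q_{12}e_{n}$, so controlling this series is the crux. (Here $\mathcal B$ is read as the closure of $\mathcal Q_{12}\mathcal Q_{1}^{-1}$ as in Proposition~\ref{p:unkorrelierte Darstellung}; closability follows as there, its adjoint extending the densely defined $\mathcal Q_{1}^{-1}\mathcal Q_{12}^{*}$.)

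The decisive input is the factorization coming from Theorem~\ref{s:Notwendigkeit A}(ii): the bound $|\langle\mathcal Q_{12}u,v\rangle|\le\|\mathcal R_{1}u\|_{1}\|\mathcal R_{2}v\|_{2}$ defines a contraction $\mathcal K\in L(H_{1},H_{2})$ with $\mathcal Q_{12}=\mathcal R_{2}\mathcal K\mathcal R_{1}$, whence $\lambda_{n}^{-1/2}\mathcal Q_{12}e_{n}=\mathcal R_{2}\mathcal K e_{n}$. Therefore $\mathcal B_{N}X_{1}=\sum_{n\le N}\xi_{n}\mathcal R_{2}\mathcal K e_{n}$ and, by independence,
$$
\E\|\mathcal B_{N}X_{1}\|_{2}^{2}=\sum_{n\le N}\|\mathcal R_{2}\mathcal K e_{n}\|_{2}^{2}\le\|\mathcal R_{2}\mathcal K\|_{L_{\text{HS}}(H_{1},H_{2})}^{2}\le\|\mathcal R_{2}\|_{L_{\text{HS}}(H_{2})}^{2}=\tr(\mathcal Q_{2})<\infty,
$$
using that $\mathcal R_{2}$ is Hilbert–Schmidt because $\mathcal Q_{2}$ is trace class and $\|\mathcal K\|_{\text{op}}\le1$. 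The series of independent centred $H_{2}$–valued summands then converges in $L^{2}(\Omega;H_{2})$, and by the It\^o–Nisio theorem also almost surely, to some $Y$. Since $\mathcal B$ is closed and $(P_{N}X_{1},\mathcal B_{N}X_{1})\to(X_{1},Y)$ along a sequence lying in the graph of $\mathcal B$, I conclude $X_{1}\in\dom(\mathcal B)$ almost surely with $\mathcal B X_{1}=Y$; this is the first assertion.

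To identify $Z=X_{2}-\mathcal B X_{1}$, I would note that for each $N$ the triple $(X_{1},\mathcal B_{N}X_{1},X_{2})$ is jointly Gaussian as a bounded linear image of $(X_{1},X_{2})$, and Gaussianity is stable under $L^{2}$–limits, so $(X_{1},\mathcal B X_{1},X_{2})$ and hence $(X_{1},Z)$ are jointly Gaussian; in particular $Z$ is Gaussian. For jointly Gaussian vectors independence is equivalent to vanishing cross–covariance, so for $u\in H_{1}$, $v\in H_{2}$ I would compute $\E[\langle X_{1},u\rangle\langle Z,v\rangle]=\langle\mathcal Q_{12}u,v\rangle-\lim_{N}\E[\langle X_{1},u\rangle\langle\mathcal B_{N}X_{1},v\rangle]$. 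The subtracted term equals $\langle\mathcal B_{N}\mathcal Q_{1}u,v\rangle=\langle\mathcal Q_{12}P_{N}u,v\rangle\to\langle\mathcal Q_{12}u,v\rangle$, using $\mathcal Q_{1}^{-1}P_{N}\mathcal Q_{1}u=P_{N}u$ together with the fact that $\mathcal Q_{12}$ annihilates $\ker(\mathcal Q_{1})$ (again from $\mathcal Q_{12}=\mathcal R_{2}\mathcal K\mathcal R_{1}$). Hence the cross–covariance vanishes, and joint Gaussianity upgrades uncorrelatedness to independence.

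I expect the main obstacle to be precisely the first assertion: showing $X_{1}\in\dom(\mathcal B)$ almost surely when $\mathcal B$ is genuinely unbounded, that is, when the range condition of Propositions~\ref{p:unkorrelierte Darstellung} and \ref{prop:bivariate-brownian-repr} is dropped. Operator-norm boundedness of $\mathcal B$ is no longer available, and the right substitute is the \emph{trace-class} estimate $\sum_{n}\|\mathcal R_{2}\mathcal K e_{n}\|_{2}^{2}=\tr(\mathcal Q_{2})<\infty$: it is this Hilbert–Schmidt bound, rather than any control of $\mathcal B$ itself, that forces the Gaussian vector $X_{1}$ into the domain of the closure. The delicate bookkeeping is then the rigorous passage from the finite-rank approximants to the closed operator, combining graph-closedness with the almost sure convergence of the independent series.
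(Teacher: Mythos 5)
Your argument is correct and arrives at exactly the same limiting series as the paper, but the engine driving the convergence is genuinely different. The paper identifies $\lim_k\mathcal B Y_k$ with the conditional expectation $\E[X_2\,|\,X_1]$ via a Hermite (chaos) expansion: joint Gaussianity kills every Hermite term of order $j\neq 1$, leaving $\E[X_2|X_1]=\sum_n\langle X_1,\mathcal Q_1^{-1}e_n\rangle\mathcal Q_{12}e_n$, with convergence inherited from the $L^2$-theory of conditional expectations, and closedness of $\mathcal B$ then gives $X_1\in\dom(\mathcal B)$ a.s. You prove convergence of the identical series $\sum_n\xi_n\mathcal R_2\mathcal K e_n$ (note $\langle X_1,\mathcal Q_1^{-1}e_n\rangle\mathcal Q_{12}e_n=\xi_n\mathcal R_2\mathcal K e_n$) directly, by extracting from Thm.~\ref{s:Notwendigkeit A}(ii) the Baker-type factorization $\mathcal Q_{12}=\mathcal R_2\mathcal K\mathcal R_1$ with $\Vert\mathcal K\Vert_{\mathrm{op}}\leq1$ and bounding the partial sums in $L^2$ by $\Vert\mathcal R_2\mathcal K\Vert^2_{L_{\text{HS}}(H_1,H_2)}\leq\tr(\mathcal Q_2)<\infty$, then upgrading to almost sure convergence by It\^o--Nisio; independence is recovered from vanishing cross-covariance plus joint Gaussianity (your computation $\E[\langle X_1,u\rangle\langle\mathcal B_NX_1,v\rangle]=\langle\mathcal Q_{12}P_Nu,v\rangle\to\langle\mathcal Q_{12}u,v\rangle$ is correct, using Thm.~\ref{s:Notwendigkeit A}(iii)). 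Your route gives a self-contained, quantitative reason why $X_1$ lands in $\dom(\mathcal B)$ -- the trace-class bound -- and makes the independence step explicit, which the paper leaves implicit; the paper's route yields the extra identification $\mathcal BX_1=\E[X_2|X_1]$ for free. One caveat, which afflicts the paper equally: your parenthetical claim that closability of $\mathcal Q_{12}\mathcal Q_1^{-1}$ ``follows as there'' is not justified, because Prop.~\ref{p:unkorrelierte Darstellung} obtained it from the range condition $\ran(\mathcal Q_{12}^*)\subseteq\ran(\mathcal Q_1)$, which is precisely what is dropped here; for a rank-one contraction $\mathcal Ku=\langle u,a\rangle b$ with $a\in\overline{\ran(\mathcal R_1)}\setminus\ran(\mathcal R_1)$ the operator $\mathcal Q_{12}\mathcal Q_1^{-1}$ fails to be closable even though your series still converges. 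Since the statement itself presupposes that the closure exists, this is a defect of the proposition's formulation rather than of your proof, but the phrase ``closability follows as there'' should be removed or replaced by an explicit hypothesis.
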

\begin{proof}
 Let $(e_n)_{n\in\mathbb N}$ be an orthonormal basis of $H_1$ such that $X_1 = \sum_{n=1}^\infty \sqrt{\lambda_n}\Phi_ne_n$ where $(\Phi_n)_{n\in\mathbb N}$ is a sequence of i.i.d.\ standard normal random variables $\lambda_n\geq0$ and $\sum_{n\in\mathbb N}\lambda_n<\infty$, cf.\ Peszat and Zabczyk~\cite[Thm.~4.20]{peszat.zabczyk.07}. Define $Y_k:=\sum_{n=1}^k \sqrt{\lambda_n}\Phi_ne_n$ for any $k\in\mathbb N$. Clearly, we have $Y_k\rightarrow X_1$ for $k\rightarrow\infty$. We now want to show that $\mathcal BY_k$ converges to $\E[X_2|X_1]$ which will complete the proof.
 
 Let $(p_j)_{j\in\mathbb N}$ be the hermite polynomials on $\mathbb R$. Then $\E[p_j(\Phi_1)p_i(\Phi_1)] = 1_{\{i=j\}}$ for any $i,j\in\mathbb N$. For an $H_2$-valued square integrable random variable $A$ we have
   $$ \E[A\vert X_1] = \sum_{n,m,j=1}^\infty \E[\<A,f_m\>p_j(\Phi_n)]p_j(\Phi_n)f_m $$
 where $(f_m)_{m\in\mathbb N}$ is an orthonormal basis of $H_2$. Since $(X_1,X_2)$ is Gaussian, $(\Phi_n,\<X_2,f_m\>)$ is Gaussian for any $n,m\in \mathbb N$. Thus, we have
   \begin{align*}
     \E[X_2\vert X_1] &= \sum_{n,m,j=1}^\infty \E[\<X_2,f_m\>p_j(\Phi_n)]p_j(\Phi_n)f_m \\
                    &= \sum_{n,m=1}^\infty \E[\<X_2,f_m\>\Phi_n]\Phi_nf_m
   \end{align*}
 because $\E[Ap_j(B)]=0$ whenever $(A,B)$ is a normal random variable in $\mathbb R^2$, $B$ is standard normal and $j\neq 1$. Moreover, $\Phi_n = \frac{\<X_1,e_n\>}{\sqrt{\lambda_n}}$ and hence
 \begin{align*}
   \E[\<X_2,f_m\>\Phi_n] &= \frac{\<\mathcal Q_{12}e_n,f_m\>}{\sqrt{\lambda_n}}, \\
    \Phi_n &= \sqrt{\lambda_n}  \<X_1,\mathcal Q_1^{-1}e_n\>, \\
   \E[X_2\vert X_1] &= \sum_{n,m=1}^\infty \<\mathcal Q_{12}e_n,f_m\> \<X_1,\mathcal Q_1^{-1}e_n\>f_m \\
                    &= \sum_{n=1}^\infty \<X_1,\mathcal Q_1^{-1}e_n\>\mathcal Q_{12}e_n
 \end{align*}
  Thus, we have
  \begin{align*}
     \mathcal BY_k &= \sum_{n=1}^k \<Y_k,\mathcal Q_1^{-1}e_n\>\mathcal Q_{12}e_n \\
                      &= \sum_{n=1}^k \<X_1,\mathcal Q_1^{-1}e_n\>\mathcal Q_{12}e_n \\
                      &\rightarrow \E[X_2|X_1]
  \end{align*}
  for $k\rightarrow \infty$ where we used Parseval's identity for the first equality. Since $\mathcal B$ is closed we have $X_1 \in \mathrm{dom}(\mathcal B)$ $P$-a.s.\ and $\mathcal BX_1 = \E[X_2|X_1]$. 
\end{proof}

\end{document}